\documentclass[draftclsnofoot,onecolumn,11pt]{IEEEtran}

\usepackage{amsmath,amssymb,amsfonts}
\usepackage{array}
\usepackage[caption=false,font=normalsize,labelfont=sf,textfont=sf]{subfig}
\usepackage{textcomp}
\usepackage{stfloats}
\usepackage{verbatim}
\usepackage{graphicx}
\usepackage{cite}
\usepackage{tabularx}
\usepackage{booktabs}

\usepackage{acronym}
\acrodef{5G}{the fifth generation}
\acrodef{MIMO}{multiple-input multiple-output}
\acrodef{MISO}{multiple-input single-output}
\acrodef{MU}{multi-user}
\acrodef{EE}{energy efficiency}
\acrodef{SE}{spectral efficiency}
\acrodef{RF}{radio-frequency}
\acrodef{BS}{base station}
\acrodef{UE}{user equipment}
\acrodef{AQNM}{additive quantization noise model}
\acrodef{DAC}{digital-to-analog converter}
\acrodef{PA}{power amplifier}
\acrodef{SINR}{signal-to-interference-plus-noise ratio}
\acrodef{SNR}{signal-to-noise ratio}
\acrodef{WMMSE}{weighted minimum mean square error}
\acrodef{LoS}{line-of-sight}
\acrodef{NLoS}{non-line-of-sight}
\acrodef{AoA}{angle-of-arrival}
\acrodef{AoD}{angle-of-departure}
\acrodef{UPA}{uniform planar array}
\acrodef{ARV}{array response vector}
\acrodef{CGV}{channel gain vector}
\acrodef{AGV}{antenna gain vector}
\acrodef{EM}{electromagnetic}
\acrodef{CSI}{channel state information}
\acrodef{OFDM}{orthogonal frequency-division multiplexing}
\acrodef{RIS}{reconfigurable intelligent surface}
\acrodef{MA}{movable antenna}
\acrodef{3D}{three-dimensional}
\acrodef{AWGN}{additive white Gaussian noise}
\acrodef{ERA}{electromagnetically reconfigurable antenna}
\acrodef{MiLAC}[MiLAC]{microwave linear analog computer}
\acrodef{LMI}[LMI]{linear matrix inequality}
\acrodef{SDP}[SDP]{semi-definite programming}
\acrodef{PGD}[PGD]{projected gradient descent}
\acrodef{SVD}[SVD]{singular value decomposition}
\acrodef{MOO}[MOO]{multi-objective optimization}
\acrodef{KKT}[KKT]{Karush-Kuhn-Tucker}
\acrodef{FP}[FP]{fractional programming}
\acrodef{PS}[PS]{phase shifter}
\acrodef{SC}[SC]{sub-connected}
\acrodef{FC}[FC]{fully-connected}
\acrodef{CRB}[CRB]{Cram\'er--Rao bound}
\acrodef{DoA}[DoA]{direction-of-arrival}
\acrodef{ULA}[ULA]{uniform linear array}
\acrodef{ADC}[ADC]{analog-to-digital converter}
\acrodef{DFT}[DFT]{discrete Fourier transform}
\acrodef{PD}[PD]{positive definite}
\acrodef{PSD}[PSD]{positive semi-definite}
\acrodef{ISAC}[ISAC]{integrated sensing and communication}
\acrodef{FIM}[FIM]{Fisher information matrix}
\acrodef{MLE}[MLE]{maximum-likelihood estimator}
\acrodef{MSE}[MSE]{mean square error}
\usepackage{amsthm}
\newtheoremstyle{zackplain}%
  {\topsep}{\topsep}{\itshape}{}{\bfseries}{:}{.5em}{}
\theoremstyle{zackplain}
\newtheorem{lemma}{\textbf{Lemma}}
\newtheorem{theorem}{\textbf{Theorem}}
\newtheorem{proposition}{\textbf{Proposition}}
\newtheorem{definition}{\textbf{Definition}}
\newtheorem{remark}{\textbf{Remark}}

\newtheorem{corollary}{\textbf{Corollary}}

\usepackage{tikz} 
\usetikzlibrary{calc, positioning, fit, backgrounds, decorations.pathmorphing, arrows.meta, shapes.geometric, shapes.misc, patterns}
\usepackage[utf8]{inputenc}
\usepackage{pgfplots}
\pgfplotsset{compat=1.18}
\usepgfplotslibrary{groupplots}
\pgfplotsset{
    /pgfplots/ieee axis/.style={
        width=\linewidth,
        height=0.78\linewidth,
        tick align=outside,
        tick pos=left,
        enlarge x limits=false,
        axis line style={line width=0.5pt},
        tick style={line width=0.4pt, black},
        major tick length=2.4pt,
        minor tick length=1.2pt,
        grid=both,
        major grid style={line width=0.25pt, draw=black!20},
        minor grid style={line width=0.12pt, draw=black!10},
        label style={font=\footnotesize},
        tick label style={font=\scriptsize},
        title style={font=\footnotesize},
        legend style={
            font=\scriptsize,
            draw=black!50,
            line width=0.3pt,
            fill=white,
            fill opacity=0.85,
            text opacity=1,
            inner sep=1.8pt,
            row sep=-2pt,
        },
        unbounded coords=jump,
    },
}
\pgfplotsset{
    every axis label/.append style={font=\footnotesize},
}
\usepackage{tabu,longtable}
\usepackage{diagbox}
\usepackage{threeparttable}
\usepackage{makecell}
\usepackage{multirow}
\usepackage[letterpaper,margin=1in]{geometry}
\usepackage{units}
\usepackage{bm}
\usepackage{amssymb}
\newcommand{\TT}{\mathsf{T}}
\newcommand{\HH}{\mathsf{H}}

\newcommand{\av}{{\bf a}}

\newcommand{\cv}{{\bf c}}

\newcommand{\gv}{{\bf g}}

\newcommand{\nv}{{\bf n}}

\newcommand{\sv}{{\bf s}}

\newcommand{\vv}{{\bf v}}
\newcommand{\xv}{{\bf x}}
\newcommand{\yv}{{\bf y}}
\newcommand{\zv}{{\bf z}}

\newcommand{\Am}{{\bf A}}
\newcommand{\Bm}{{\bf B}}
\newcommand{\Cm}{{\bf C}}
\newcommand{\Dm}{{\bf D}}
\newcommand{\Em}{{\bf E}}
\newcommand{\Fm}{{\bf F}}
\newcommand{\Gm}{{\bf G}}

\newcommand{\Id}{{\bf I}}
\newcommand{\Jm}{{\bf J}}

\newcommand{\Mm}{{\bf M}}
\newcommand{\Nm}{{\bf N}}

\newcommand{\Pm}{{\bf P}}
\newcommand{\Qm}{{\bf Q}}
\newcommand{\Rm}{{\bf R}}

\newcommand{\Um}{{\bf U}}
\newcommand{\Wm}{{\bf W}}
\newcommand{\Vm}{{\bf V}}
\newcommand{\Xm}{{\bf X}}
\newcommand{\Ym}{{\bf Y}}

\newcommand{\Lambdam}{\hbox{\boldmath$\Lambda$}}

\newcommand{\Sigmam}{\hbox{\boldmath$\Sigma$}}

\newcommand{\Thetam}{\hbox{\boldmath$\Theta$}}

\providecommand{\C}{\mathbb{C}}
\providecommand{\R}{\mathbb{R}}

\providecommand{\row}{\operatorname{row}}
\providecommand{\rank}{\operatorname{rank}}

\providecommand{\diag}{\operatorname{diag}}
\providecommand{\vect}[1]{\boldsymbol{#1}}
\providecommand{\MiLAC}{\mathrm{MiLAC}}
\providecommand{\dig}{\mathrm{dig}}

\providecommand{\crb}{\mathrm{CRB}}
\providecommand{\col}{\operatorname{col}}
\providecommand{\mat}[1]{\mathbf{#1}}

\usepackage{amsmath}
\DeclareMathOperator*{\argmax}{arg\,max}

\usepackage[section]{placeins}

\pgfplotsset{
    /pgfplots/ieee axis/.append style={
        width=12cm,
        height=7.5cm,
        tick label style={font=\scriptsize},
        label style={font=\footnotesize},
        title style={font=\footnotesize},
    },
}

\begin{document}

\title{How Many RF Chains Does a Microwave Linear Analog Computer (MiLAC) Need to Match the Fully-Digital Cram\'er-Rao Bound?}

\author{Yuchen~Zhang, \emph{Member, IEEE},  Yu~Ge, \emph{Member, IEEE}, \\Bruno Clerckx, \emph{Fellow, IEEE}, and Tareq~Y.~Al-Naffouri, \emph{Fellow, IEEE}
\thanks{Y. Zhang and T. Y. Al-Naffouri are with the Computer, Electrical and Mathematical Sciences and Engineering Division, King Abdullah University of Science and Technology (KAUST), Thuwal 23955-6900, Saudi Arabia (e-mail: \{yuchen.zhang, tareq.alnaffouri\}@kaust.edu.sa).}
\thanks{Y. Ge is with Massachusetts Institute of Technology (MIT), Cambridge, MA 02139 USA (e-mail: yuge@mit.edu)}%
\thanks{B. Clerckx is with the Department of Electrical and Electronic Engineering, Imperial College London, SW7 2AZ London, U.K. (e-mail: b.clerckx@imperial.ac.uk).}
}

\maketitle

\begin{abstract}
A microwave linear analog computer (MiLAC) is a tunable microwave network that exploits wave propagation to perform computation directly in the analog domain at microwave frequencies. A recent application is to use MiLAC as the analog front end of an antenna array, where the antenna-to-radio-frequency (RF) chain mapping constitutes a linear operation that can be physically realized in the analog domain. 
The active chain count then scales with the number of data streams rather than with the number of antennas. Prior work establishes the lossless reciprocal MiLAC, which avoids power dissipation and non-reciprocal components, as a beamforming-flexible (or even capacity-achieving) front end for wireless communication, but its sensing performance has remained largely unexplored. This paper delivers one of the earliest Cram\'er--Rao bound (CRB)-based analyses of direction-of-arrival estimation under a tunable, receive-side lossless reciprocal MiLAC combiner for $K$ far-field targets. We show that the Fisher information matrix depends on the analog combiner only through the orthogonal projector onto its row space, descending the optimization from the matrix manifold to the Grassmannian. The MiLAC Fisher information never exceeds that of a fully-digital receiver, with equality whenever the combiner's row space contains a $2K$-dimensional joint steering--derivative subspace. This yields a zero-gap threshold of two RF chains per target. On the hardware side, a dimension-counting argument lower-bounds the tunable-component count of a MiLAC class that achieves the digital CRB for every target configuration. The stem-connected MiLAC architecture, whose number of tunable components scales linearly in both the antenna and target counts, attains this bound asymptotically and up to an antenna-count-independent additive overhead. MiLAC attains the fully-digital CRB exactly, whereas a phase-shifter front end of the same RF chain count generally leaves a gap. Numerical experiments confirm every claim.
\end{abstract}

\begin{IEEEkeywords}
MiLAC, DOA estimation, Cram\'er--Rao bound, Fisher information, ISAC.
\end{IEEEkeywords}

\section{Introduction}
\label{sec:intro}

\IEEEPARstart{T}{he} sixth-generation (6G) wireless network is envisioned to deliver not only high-speed communication but also high-resolution environmental sensing as a native, first-class capability. \Ac{ISAC}, in which the same hardware, spectrum, and waveforms simultaneously carry information bits and probe the physical environment, has consequently emerged as one of the defining technological pillars of 6G~\cite{Liu2022Survey}. Beyond providing a ``free'' sensing capability on top of the existing communication infrastructure, \ac{ISAC} is expected to support emerging applications such as autonomous driving, low-altitude unmanned aerial vehicles, industrial automation, smart factories, and human-centric sensing, all of which demand high spatial resolution and robust target discrimination in dense multi-target scenes. This vision has fueled an intense research effort spanning the information-theoretic tradeoff between communication and sensing~\cite{Xiong2023Fundamental}, joint beamforming with explicit \ac{CRB} constraints~\cite{Liu2022CRBOpt}, and propagation-engineered sensing aided by \ac{RIS}~\cite{Zheyu2025tit} or metamaterials~\cite{Demirhan2023ISAC}.

A recurring conclusion of this line of work is that the sharpest improvements in sensing accuracy come from \emph{scale}, namely, larger apertures, more antennas, and therefore more spatial degrees of freedom~\cite{Bjornson2025Gigantic,Larsson2014Massive}. Receive arrays with antenna counts in the hundreds to the thousands are thus expected to become the natural backbone of both communication and sensing in the upper mid-band~\cite{Bjornson2025Gigantic}. A fully-digital implementation, in which every antenna element is followed by a dedicated low-noise amplifier, down-converter, and high-resolution \ac{ADC}, becomes prohibitive at this scale, since the per-antenna hardware, power, and data-rate cost all grow linearly with the antenna count~\cite{Heath2016Overview}. Closing the gap between such large-array deployments and an implementable \ac{RF} front end is one of the central hardware-efficiency problems of 6G \ac{ISAC}.

\subsection{From phase shifters to MiLAC}

The textbook response to this bottleneck has been \emph{hybrid} analog-digital beamforming, in which a phase-shifter network compresses the antenna signal down to a much smaller number of active \ac{RF} chains~\cite{ElAyach2014Spatially}. Each entry of the analog factor is constrained to constant modulus, which restricts the reachable analog mappings to a strict, lower-dimensional subset of those an unconstrained analog combiner could realize. More recent reconfigurable-hardware proposals, such as dynamic/holographic metasurfaces~\cite{Shlezinger2021Dynamic,Ruizhi2026ICASSP,Zhuoyang2026TSP}, pattern-reconfigurable antennas~\cite{Alireza2026JSTSP,Wenyan2026ST}, and tri-hybrid architectures~\cite{Castellanos2026Embracing,Pinjun2025Tri,Jiangong2026Tri}, add an extra reconfigurable layer, such as reconfigurable antenna elements, on top of the phase-shifter network and thereby widen its design flexibility. Because the analog combining stage itself is still phase-shifter-based, however, this constant-modulus restriction persists.

A \ac{MiLAC} is a radically different front-end paradigm~\cite{Nerini2025AnalogI,Nerini2025AnalogII,Zack2026EE,Wu2026MiLAC,Nerini2025Reduced}. In its general form it is a tunable multiport microwave network whose internal admittance components, when properly set, implement a prescribed linear transformation on \ac{RF} signals through the scattering behavior of the network itself~\cite{Nerini2025AnalogI}. For a receive array, the output ports of the \ac{MiLAC} drive a small number of active \ac{RF} chains while the input ports terminate the antenna elements, so that the antenna-to-chain mapping is executed through the passive network rather than in digital or active analog hardware~\cite{Nerini2025AnalogII}. Communication-oriented designs further restrict the tunable admittances to be purely imaginary (\emph{lossless}, so that no signal energy is dissipated) and symmetric (\emph{reciprocal}, so that only ordinary two-terminal reactive components, rather than circulators or isolators, are needed)~\cite{Nerini2025Capacity,Wu2026MiLAC}, under which the multiport scattering matrix becomes symmetric unitary. 

Two features distinguish this architecture from digital and phase-shifter-based counterparts. First, because only the \ac{RF} chains after the \ac{MiLAC} network are active, the amount of active hardware it requires is set by the number of \ac{RF} chains rather than the number of antennas. By contrast, a fully digital transceiver requires active hardware whose count grows linearly with the antenna array size. Second, the antenna-to-chain mapping is implemented by a fully reactive network. As a result, any analog beamformer or combiner with spectral norm no larger than one is, in principle, realizable through an appropriate tuning of the internal susceptances~\cite{Nerini2025Capacity,Wu2026MiLAC}. This feasible set is strictly richer than the constant-modulus row-constrained set imposed by phase-shifter-based beamformers/combiners with the same number of \ac{RF} chains. The advantage appears both in the unconstrained matrix space and, more importantly, in the row spaces that can be realized. This additional flexibility underpins the capacity-achieving and beamforming-flexible designs reported in~\cite{Nerini2025Capacity,Wu2026MiLAC,Nerini2025Reduced,Zack2026Stem}.

Research on the \ac{MiLAC} paradigm has developed rapidly along two complementary threads. The first treats \ac{MiLAC} as a general-purpose linear analog computer, and the foundational theory in~\cite{Nerini2025AnalogI,Nerini2025AnalogII} showed that, with unconstrained admittances, a tunable multiport network can implement general linear operations at the speed of light. The second applies \ac{MiLAC} to beamforming under the lossless and reciprocal constraints. In the point-to-point setting, fully-connected \ac{MiLAC} architectures achieve full \ac{MIMO} capacity~\cite{Nerini2025Capacity}, and reduced-complexity \emph{stem-connected} topologies lower the admittance count from quadratic to linear in the antenna count while preserving capacity~\cite{Nerini2025Reduced}. In the multiuser multiple-input single-output setting, the feasible beamformer set has been characterized~\cite{Wu2026MiLAC}, a performance-limit analysis has quantified the gap to fully-digital beamforming as the array grows~\cite{Fang2026Performance}, and complementary efforts address analog-domain channel estimation~\cite{Zhang2026Channel}, physics-compliant modeling with mutual coupling~\cite{Nerini2026Physics}, wideband \ac{OFDM} beamforming~\cite{Peng2026OFDM}, two-layer transmit architectures for multiuser networks~\cite{Zhou2026TwoLayer}, simultaneous active and passive (\ac{RIS}-like) beamforming~\cite{Nerini2026ActivePassive}, lossy designs that account for the dissipation of practical tunable components~\cite{Zhou2026Lossy}, hardware realizations of analog computing through hybrid couplers and phase shifters~\cite{Nerini2026HybridCouplers}, and an initial study of \ac{MiLAC}-aided transmit beamforming and \ac{DFT}-based receive processing for \ac{MIMO} radar sensing~\cite{Liu2026Radar}.

\subsection{An open question: Is MiLAC also a good \emph{sensing} front end?}
\label{subsec:open-question}

Despite this rapid progress, \ac{MiLAC} studies have so far predominantly optimized communication-theoretic objectives such as rate, capacity, or beamforming flexibility. A recent exception~\cite{Liu2026Radar} considers \ac{MiLAC}-aided \ac{MIMO} radar sensing on the transmit side and a fixed receiver-side two-dimensional \ac{DFT} implemented within the \ac{MiLAC}, but the Fisher information that a tunable receive-side \ac{MiLAC} combiner preserves about the physical environment remains uncharacterized. The following question is therefore open:

\begin{quote}\it
If a receive array is implemented through a lossless reciprocal MiLAC with only a small number of RF chains, how much Fisher information about the angles of $K$ impinging targets is lost, and what is the minimum hardware complexity at which this loss can be driven to zero?
\end{quote}

The answer is not a corollary of the capacity-preservation results above. Capacity arguments concern the information content about the transmitted data symbols, whereas sensing is about the information content of the observation about the physical parameters of the environment, such as angles, delays, Doppler shifts, and complex amplitudes. Even when the data rate is preserved, a \ac{MiLAC} front end generally discards Fisher information about these physical parameters whenever the orthogonal projection onto its row space acts non-trivially on the relevant signal subspace. Answering the question above is therefore a prerequisite for deploying \ac{MiLAC}-aided receivers in \ac{ISAC} and radar applications with large receive arrays~\cite{Li2007MIMORadar,Liu2022CRBOpt,Xiong2023Fundamental}, for which the \ac{DoA} of a small number of far-field sources is the canonical benchmark estimation target~\cite{Stoica2005,Schmidt1986MUSIC}.

This paper answers the question above through a chain of results that connects the information-theoretic \ac{CRB} to physically realizable reduced-complexity hardware. To make the story concrete, we focus on $K$-target \ac{DoA} estimation with a \ac{ULA}. The main message is encouraging. Two active \ac{RF} chains \emph{per target}, with properly oriented \ac{MiLAC} combiner, already suffice to match the \ac{CRB} of a fully-digital receiver, and this optimum can be synthesized by a stem-connected \ac{MiLAC} topology whose hardware complexity scales linearly with both the antenna count and the target count. 

\subsection{Contributions}

In more detail, this paper makes the following contributions.

\begin{itemize}
\item \textbf{First Fisher-information characterization of tunable \ac{MiLAC} receive combining:}
We open this line of inquiry for $K$-target \ac{DoA} estimation, and the answer is favorable: as few as two \ac{RF} chains per target retain the \emph{entire} Fisher information of a fully-digital receiver, with a hardware realization whose tunable-component count grows linearly in both the antenna and target counts.\footnote{The intuition behind the factor of two is that the observation changes with each target's parameters along only two directions: varying the complex amplitude moves the array response along the steering vector, while varying the angle moves it along the derivative of that steering vector (made precise in Section~\ref{subsec:toy}). A combiner that retains both directions for all $K$ targets, i.e., all $2K$, discards no information about the angles, whereas one with fewer chains is forced to drop at least one of them. The second direction per target is the price of the \emph{unknown} amplitudes: were the amplitudes known, the $K$ derivative directions alone would suffice. A constant-modulus phase-shifter combiner with the very same $2K$ chains cannot, in general, align its row space to these target-dependent directions, and therefore falls short of the fully-digital bound for almost every target geometry, as will be detailed in Section~\ref{subsec:ps-strict}.}

\item \textbf{Digital-\ac{CRB} achievability under \ac{MiLAC} combining:}
We characterize when a lossless reciprocal \ac{MiLAC} preserves the fully-digital Fisher information for $K\ge 1$ targets. A compact \ac{FIM} expression depends on the combiner only through its row-space projector (Theorem~\ref{thm:fim-projector} and Proposition~\ref{prop:subspace}), descending the optimization from the matrix manifold to the complex Grassmannian. The \ac{MiLAC} \ac{FIM} is no larger than its fully-digital counterpart in the L\"owner sense, with equality if and only if the combiner's row space contains the joint steering--derivative subspace of dimension $2K$ (Theorem~\ref{thm:lowner}), yielding a zero-gap threshold of two \ac{RF} chains per target (Corollary~\ref{cor:zerogap}) cleanly separated from the identifiability threshold of $\lceil 3K/2\rceil$ chains (Remark~\ref{rem:thresholds}).

\item \textbf{Reduced-hardware-complexity \ac{MiLAC} realizations:}
A symmetric square-root construction embeds every row-isometric combiner as the off-diagonal block of a symmetric unitary scattering matrix (Lemma~\ref{lem:reachability}), so every optimal row space is realizable by a lossless reciprocal \ac{MiLAC}. A dimension-counting argument lower-bounds the tunable-component count of any \ac{MiLAC} class that achieves the digital \ac{CRB} for every target configuration (Proposition~\ref{prop:lower-bound}), which the stem-connected architecture of~\cite{Nerini2025Reduced} attains up to an antenna-count-independent overhead (Theorem~\ref{thm:stem}), via a closed-form synthesis linear in the antenna and target counts. Thus the stem-connected \ac{MiLAC}, previously shown to achieve the fundamental limits of communication~\cite{Nerini2025Reduced}, is shown here to be beneficial for sensing as well. A complementary argument shows that, at the same \ac{RF}-chain budget, a fixed phase-shifter combiner leaves a strictly positive gap to the \ac{MiLAC} for almost every target configuration (Proposition~\ref{prop:ps-strict}).
\end{itemize}

Monte Carlo experiments on a half-wavelength \ac{ULA} validate every theoretical claim.

\emph{Paper organization and Notations:}
Section~\ref{sec:system_model} introduces the $K$-target signal model and the lossless reciprocal \ac{MiLAC} feasibility set. Section~\ref{sec:crb-general} develops the \ac{CRB} theory: the \ac{FIM} expression, its row-space invariance, the L\"owner ordering and zero-gap theorem, the row-isometry reachability lemma, the \ac{ULA} aperture scaling, and the strict inferiority of phase-shifter combiners. Section~\ref{sec:hardware} formulates and resolves the complexity-reduction question through the Stiefel-universal lower bound and the stem-connected attainment result. Section~\ref{sec:numerical} reports numerical validation, Section~\ref{sec:conclusion} concludes, and the Appendix collects all proofs.

Scalars are denoted by lowercase letters, vectors by bold lowercase letters, and matrices by bold uppercase letters. The Euclidean norm is $\|\cdot\|$ and the spectral norm is $\|\cdot\|_2$. Transpose, complex conjugate, and Hermitian transpose are $(\cdot)^{\TT}$, $\bar{(\cdot)}$, and $(\cdot)^{\HH}$. For a matrix $\Am$, $\rank(\Am)$ is its rank and $\diag(\av)$ denotes the diagonal matrix with entries $\av$. The row space, column space, and kernel of $\Am$ are $\row(\Am)$, $\col(\Am)$, and $\ker(\Am)$, where we adopt the inner-product convention $\row(\Am):=\col(\Am^{\HH})$ so that the Moore--Penrose projector $\Am^{\HH}(\Am\Am^{\HH})^{-1}\Am$ projects onto $\row(\Am)$, which coincides with the literal span of $\Am$'s rows when $\Am$ is real. The L\"owner partial order on Hermitian matrices is written $\Am\preceq\Bm$ iff $\Bm-\Am$ is positive semidefinite. The real and imaginary parts of a scalar are $\Re\{a\}$ and $\Im\{a\}$, and the Kronecker product is $\otimes$. The $n\times n$ identity matrix is $\Id_n$, and $\mathcal{CN}(\av,\Cm)$ denotes a circularly symmetric complex Gaussian distribution with mean $\av$ and covariance $\Cm$. The complex Stiefel manifold of $L_R\times N_R$ row-isometric matrices is $\mathrm{St}(L_R,N_R):=\{\Gm\in\C^{L_R\times N_R}:\Gm\Gm^{\HH}=\Id_{L_R}\}$, and the Grassmannian of $L_R$-dimensional subspaces of $\C^{N_R}$ is $\mathrm{Gr}(L_R,N_R)$. The orthogonal projector of $\C^{N_R}$ onto a subspace $\mathcal{S}$ is $\Pm_{\mathcal{S}}$.

\section{System Model}
\label{sec:system_model}

This section sets up the $K$-target signal model, the lossless reciprocal \ac{MiLAC} feasibility set, and the fully-digital baseline \ac{CRB} used as the benchmark in the sequel.

\subsection{Signal model}
\label{subsec:signal-model}

As shown in Fig.~\ref{fig:system}, we consider $K$ far-field targets at distinct azimuths $\theta_1,\ldots,\theta_K\in\R$ with unknown complex amplitudes $\beta_1,\ldots,\beta_K\in\C$ ($\beta_k\neq 0$ for all $k$), reflecting a common known waveform $\{s_t\}_{t=1}^{T}$ onto an $N_R$-element receive array with steering vector $\av(\theta)\in\C^{N_R}$. Stack the angles and amplitudes as $\vect{\theta}:=[\theta_1,\ldots,\theta_K]^{\TT}\in\R^{K}$ and $\vect{\beta}:=[\beta_1,\ldots,\beta_K]^{\TT}\in\C^{K}$, and define $\Am(\vect{\theta}):=[\av(\theta_1),\ldots,\av(\theta_K)]\in\C^{N_R\times K}$. The antenna-domain snapshot at time $t$ writes as
\begin{equation}
\label{eq:antenna-obs}
\yv_t=\Am(\vect{\theta})\vect{\beta}\,s_t+\nv_t,\qquad t=1,\dots,T,
\end{equation}
where $\nv_{t}\sim\mathcal{CN}(\mathbf{0},\sigma^{2}\Id_{N_R})$ is the pre-combining additive white Gaussian noise with power $\sigma^{2}$.

An $(N_R{+}L_R)$-port lossless reciprocal \ac{MiLAC} front end, whose feasible set is specified in Section~\ref{subsec:whatismilac}, connects $N_R$ ports to the antennas and $L_R$ ports to the \ac{RF} chains, with $L_R\le N_R$. It applies a deterministic linear combiner $\Gm\in\C^{L_R\times N_R}$, yielding the observation
\begin{equation}
\label{eq:obs}
\zv_{t}\,=\,\Gm\Am(\vect{\theta})\vect{\beta}\,s_{t}+\Gm\nv_{t},\qquad t=1,\dots,T.
\end{equation}
The post-combining noise has covariance $\Rm_{\Gm}:=\Gm\Gm^{\HH}$, which is in general not a scaled identity. In particular, $\Gm\Gm^{\HH}=\Id_{L_R}$ holds only when $\Gm$ is \emph{row-isometric}, i.e., when its rows form an orthonormal set. Throughout the paper, we assume that $\Gm$ has full row rank $L_R$, so that $\Rm_{\Gm}$ is invertible.

The unknown real parameter vector is
\begin{equation}
\label{eq:xi}
\vect{\xi}:=[\theta_1,\ldots,\theta_K,\,\Re\{\beta_1\},\Im\{\beta_1\},\ldots,\Re\{\beta_K\},\Im\{\beta_K\}]^{\TT}\in\R^{3K},\notag
\end{equation}
and the waveform energy $\|\sv\|^{2}:=\sum_{t=1}^{T}|s_t|^2$ is known. The \ac{ULA} is adopted for concreteness. As detailed in Remark~\ref{rem:upa}, every theoretical result of this paper, except the explicit \ac{ULA} aperture-scaling constants of Proposition~\ref{prop:ula-scaling}, extends to the planar array with only notational changes. 

\begin{figure}[t]
\centering
\resizebox{1.0\linewidth}{!}{\begin{tikzpicture}[
  font=\footnotesize,
  >={Stealth[length=2mm,width=1.8mm]},
  ant/.style={rectangle, draw=none, fill=none,
              minimum width=3.6mm, minimum height=4.5mm, inner sep=0pt,
              path picture={%
                \draw[blue!55!black, thick, line cap=round]
                  ([xshift=-1.2mm, yshift=4.3mm]path picture bounding box.south)
                  -- ([yshift=3.5mm]path picture bounding box.south);
                \draw[blue!55!black, thick, line cap=round]
                  ([xshift=1.2mm, yshift=4.3mm]path picture bounding box.south)
                  -- ([yshift=3.5mm]path picture bounding box.south);
                \draw[blue!55!black, thick, line cap=round]
                  ([yshift=3.5mm]path picture bounding box.south)
                  -- ([yshift=2.25mm]path picture bounding box.south);
                \draw[blue!55!black, thick, line cap=round]
                  ([yshift=2.25mm]path picture bounding box.south)
                  -- ([xshift=1.8mm, yshift=2.25mm]path picture bounding box.south);%
              }},
  aport/.style={circle, draw, thick, fill=blue!20,
                minimum size=2.2mm, inner sep=0pt},
  iport/.style={circle, draw=blue!60!black, thick, fill=blue!15,
                minimum size=1.9mm, inner sep=0pt},
  rport/.style={circle, draw=green!45!black, thick, fill=green!25,
                minimum size=1.9mm, inner sep=0pt},
  susc/.style={draw, thick, fill=yellow!65, rectangle,
               minimum width=2.0mm, minimum height=1.2mm, inner sep=0pt},
  shunt/.style={draw, thick, fill=red!45, circle,
                minimum size=1.1mm, inner sep=0pt},
  gnd/.style={draw, black!80, line width=0.45pt, line cap=butt},
  shgnd/.pic={%
    \node[shunt] at (0,0) {};
    \draw[gnd] (-0.10,-0.11) -- (0.10,-0.11);
    \draw[gnd] (-0.07,-0.14) -- (0.07,-0.14);
    \draw[gnd] (-0.035,-0.17) -- (0.035,-0.17);
  },
  rfblock/.style={rectangle, draw, thick, rounded corners=1.5pt,
                  minimum height=8mm, minimum width=19mm,
                  align=center, fill=orange!18},
  digblock/.style={rectangle, draw, thick, rounded corners=1.5pt,
                   align=center, fill=green!15,
                   minimum height=13mm, minimum width=24mm},
  milacbox/.style={rectangle, draw=orange!80!black, very thick, rounded corners=3pt,
                   fill=orange!4, minimum width=36mm, minimum height=36mm},
  wave/.style={blue!55!black, thick, decorate,
               decoration={snake, amplitude=0.4mm, segment length=1.8mm,
                           post length=0mm, pre length=0mm}}
]

\def\dy{0.55}
\foreach \i in {1,...,4}{
  \node[ant]   (a\i) at (0,   {-\dy*(\i-1)}) {};
  \node[aport] (p\i) at (0.95,{-\dy*(\i-1)}) {};
  \draw[thick] (a\i.east) -- (p\i.west);
}
\node at (0.00,{-\dy*4.40}) {\scriptsize $\vdots$};
\node at (0.95,{-\dy*4.40}) {\scriptsize $\vdots$};
\node[ant]   (aN) at (0,   {-\dy*5.15}) {};
\node[aport] (pN) at (0.95,{-\dy*5.15}) {};
\draw[thick] (aN.east) -- (pN.west);

\node[left=0.8mm of a1, font=\scriptsize, blue!50!black] {$1$};
\node[left=0.8mm of aN, font=\scriptsize, blue!50!black] {$N_R$};

\coordinate (mid) at (0,{-\dy*2.60});

\draw[dashed, gray!60!black] (mid) -- ++(-2.5,0)
     node[above, font=\scriptsize, gray!60!black, pos=0.85] {broadside};

\foreach \ang/\opc in {12/1.0, 26/0.65, 40/0.42}{
  \begin{scope}[blue!55!black, opacity=\opc]
    \foreach \off in {0, 0.20}{
      \draw[wave]
        ($(mid)+({-(2.40+\off)*cos(\ang)},{-(2.40+\off)*sin(\ang)})$)
        -- ($(mid)+({-(0.55+\off)*cos(\ang)},{-(0.55+\off)*sin(\ang)})$);
    }
    \draw[->, very thick]
      ($(mid)+({-0.55*cos(\ang)},{-0.55*sin(\ang)})$)
      -- ($(mid)+({-0.18*cos(\ang)},{-0.18*sin(\ang)})$);
  \end{scope}
}

\node[blue!55!black, font=\scriptsize, anchor=east]
  at ($(mid)+({-2.70*cos(12)},{-2.70*sin(12)})$) {$\theta_1$};
\node[blue!55!black, font=\scriptsize, anchor=east]
  at ($(mid)+({-2.70*cos(40)},{-2.70*sin(40)})$) {$\theta_K$};

\node[blue!55!black, font=\scriptsize] at ($(mid)+(-2.55,0.80)$) {$K$ plane waves};

\node[milacbox] (M) at (4.40,{-\dy*2.60}) {};
\node[font=\small\bfseries, orange!70!black] at (M.north) [yshift=-3.0mm]
     {MiLAC};

\foreach \i/\yi in {1/1.02, 2/0.55, 3/0.08, 4/-0.39}{
  \node[iport] (ip\i) at ($(M.west)+(0.60,\yi)$) {};
}
\node at ($(M.west)+(0.60,-0.72)$) {\scriptsize $\vdots$};
\node[iport] (ipN) at ($(M.west)+(0.60,-1.10)$) {};

\node[rport] (op1) at ($(M.east)+(-0.60, 0.90)$) {};
\node at ($(M.east)+(-0.60, 0.00)$) {\scriptsize $\vdots$};
\node[rport] (opL) at ($(M.east)+(-0.60,-0.90)$) {};

\foreach \i in {1,2,3,4,N}{
  \foreach \j in {1,L}{
    \draw[gray!55, line width=0.35pt] (ip\i) -- (op\j);
  }
}
\draw[gray!55, line width=0.35pt] (ip1) .. controls +(0.55,-0.15) and +(0.55, 0.15) .. (ip2);
\draw[gray!55, line width=0.35pt] (ip2) .. controls +(0.55,-0.15) and +(0.55, 0.15) .. (ip3);
\draw[gray!55, line width=0.35pt] (ip3) .. controls +(0.55,-0.15) and +(0.55, 0.15) .. (ip4);
\draw[gray!55, line width=0.35pt] (ip4) .. controls +(0.55,-0.15) and +(0.55, 0.15) .. (ipN);
\draw[gray!55, line width=0.35pt] (op1) .. controls +(-0.45,-0.15) and +(-0.45, 0.15) .. (opL);

\node[susc, rotate=  8] at ($(ip1)!0.58!(op1)$) {};
\node[susc, rotate=-28] at ($(ip2)!0.55!(opL)$) {};
\node[susc, rotate= 15] at ($(ip3)!0.55!(op1)$) {};
\node[susc, rotate=-12] at ($(ip4)!0.55!(opL)$) {};
\node[susc, rotate= 22] at ($(ipN)!0.55!(op1)$) {};

\foreach \p in {ip1,ip2,ip3,ip4,ipN}{
  \pic at ($(\p)+(0.18,-0.13)$) {shgnd};
}
\foreach \p in {op1,opL}{
  \pic at ($(\p)+(-0.18,-0.13)$) {shgnd};
}

\draw[thick] (p1.east) -- (ip1.west);
\draw[thick] (p2.east) -- (ip2.west);
\draw[thick] (p3.east) -- (ip3.west);
\draw[thick] (p4.east) -- (ip4.west);
\draw[thick] (pN.east) -- (ipN.west);

\node[font=\scriptsize, orange!75!black]
      at ($(M.south)+(0,2.6mm)$)
      {$\bm{\Theta}$: symmetric unitary};

\node[rfblock] (rf1) at ($(op1)+(2.55,0)$)
     {RF chain};
\node at ($(op1)!0.5!(opL)+(2.55,0)$) {\scriptsize $\vdots$};
\node[rfblock] (rfL) at ($(opL)+(2.55,0)$)
     {RF chain};
\draw[thick, ->] (op1.east) -- (rf1.west);
\draw[thick, ->] (opL.east) -- (rfL.west);

\draw[decorate, decoration={brace, amplitude=4pt}, thick]
     ($(rf1.north east)+(0.12,0)$) -- ($(rfL.south east)+(0.12,0)$)
     node[midway, right=5pt, font=\scriptsize] {$L_R$};

\node[digblock] (dig) at ($(rf1.east |- M.center)+(3.30,0)$)
     {Digital\\[-0.2ex]DoA\\[-0.2ex]estimator};
\draw[thick, ->] ($(rf1.east)+(0.80,0)$) -- ++(0.25,0) |- (dig.west);
\draw[thick, ->] ($(rfL.east)+(0.80,0)$) -- ++(0.25,0) |- (dig.west);
\draw[thick, ->] (dig.east) -- ++(0.65,0)
     node[right, font=\small] {$\hat{\vect{\theta}}$};

\pgfmathsetmacro{\siglabel}{-\dy*6.20}
\node[font=\scriptsize, blue!50!black, align=center] at (0.45,\siglabel)
  {\\[0.5ex]antenna-domain signal\\[-0.3ex]
   $\mathbf{y}_t=\mat{A}(\vect{\theta})\vect{\beta}\,s_t+\mat{n}_t\in\C^{N_R}$};
\node[font=\scriptsize, orange!60!black, align=center] at (4.40,\siglabel)
  {\\[0.5ex]analog combining\\[-0.3ex]
   $\mathbf{z}_t=\mat{G}\mathbf{y}_t\in\C^{L_R}$};
\node[font=\scriptsize, green!35!black, align=center] at (9.45,\siglabel)
  {\\[0.5ex]digital output\\[-0.3ex]
   $\hat{\vect{\theta}}(\mathbf{z}_1,\dots,\mathbf{z}_T)$};

\begin{scope}[shift={(3.2,{\siglabel-0.60})}, font=\scriptsize]
  \node[susc]  (ls)  at (0,0) {};
  \node[right=2pt of ls, inner sep=1pt] (lstxt) {tunable susceptance};
  \pic (lsh) at ($(lstxt.east)+(0.35,0)$) {shgnd};
  \node[right=3pt of lstxt, xshift=5mm, inner sep=1pt] {tunable shunt};
\end{scope}

\end{tikzpicture}}
\caption{Receiver architecture studied in this paper. The fully-digital baseline ($\Gm=\Id_{N_R}$) corresponds to replacing the \ac{MiLAC} by a direct pass-through.}
\label{fig:system}
\end{figure}

\subsection{Lossless reciprocal MiLAC and feasibility set}
\label{subsec:whatismilac}
A \ac{MiLAC} is a reconfigurable multiport linear microwave network whose tunable internal components can be set so that the network realizes a desired linear transformation on the \ac{RF} signal between its ports~\cite{Nerini2025AnalogI}. In every existing communication-oriented design, two hardware-driven constraints are adopted~\cite{Nerini2025Capacity,Nerini2025Reduced,Wu2026MiLAC}. First, \emph{losslessness} forbids nonzero real parts of the tunable admittances because positive real parts dissipate signal energy and raise the receiver noise figure, while negative real parts require active devices with their own bias network and power budget. Second, \emph{reciprocity} forces the admittance matrix of the \ac{MiLAC} to be symmetric, which restricts the tunable branches to ordinary two-terminal reactive components (varactors, inductors, transmission-line stubs) rather than specialized non-reciprocal devices such as circulators or isolators.

Under the combination of losslessness and reciprocity, the scattering matrix $\Thetam$ of an $(N_R{+}L_R)$-port \ac{MiLAC}, taken at the standard reference impedance, is \emph{symmetric unitary}~\cite{Nerini2025Capacity}:
\begin{equation}
\label{eq:milac-constraints}
\Thetam^{\HH}\Thetam=\Id,\qquad \Thetam^{\TT}=\Thetam.
\end{equation}
Throughout the paper, ``\ac{MiLAC}'' refers to a lossless reciprocal multiport satisfying~\eqref{eq:milac-constraints}. Partitioning the scattering matrix into blocks aligned with the $N_R$ antenna ports and the $L_R$ \ac{RF}-chain ports,
\begin{equation}
\label{eq:theta-partition}
\Thetam=\begin{bmatrix}\Thetam_{11}&\Thetam_{12}\\\Thetam_{21}&\Thetam_{22}\end{bmatrix},
\end{equation}
where the first $N_R$ ports are terminated by the antennas and the last $L_R$ ports drive the downconversion chain, reciprocity forces $\Thetam_{12}=\Thetam_{21}^{\TT}$, and the effective $L_R\times N_R$ combiner in~\eqref{eq:obs} is $\Gm:=\Thetam_{21}$. The feasible set of physically realizable combiners is
\begin{equation}
\label{eq:feasible-set}
\begin{aligned}
\mathcal{F}_{\MiLAC}:=\bigl\{\Gm\in\C^{L_R\times N_R}:\;&\Gm=\Thetam_{21},\eqref{eq:milac-constraints}\bigr\}.
\end{aligned}
\end{equation}
Unitarity implies that the singular values of every $\Gm\in\mathcal{F}_{\MiLAC}$ lie in $[0,1]$. A cornerstone result of Section~\ref{sec:crb-general}, Lemma~\ref{lem:reachability}, shows that \emph{every} row-isometric matrix ($\Gm\Gm^{\HH}=\Id_{L_R}$) belongs to $\mathcal{F}_{\MiLAC}$, and that this is the only property of the feasibility set needed for the \ac{CRB} derivations that follow.

\subsection{Fully-digital baseline}
\label{subsec:digital-baseline}

The fully-digital receiver corresponds to $\Gm=\Id_{N_R}$ and serves as the benchmark against which every \ac{MiLAC} architecture in the sequel is compared. We derive its \ac{FIM} explicitly so that Section~\ref{sec:crb-general} can build the \ac{MiLAC} \ac{FIM} on top of it.

Let $\vect{\mu}_t:=\Am(\vect{\theta})\vect{\beta}\,s_t\in\C^{N_R}$ denote the noiseless antenna-domain mean of~\eqref{eq:antenna-obs}. Its partial derivatives with respect to $\vect{\xi}$ are
\begin{equation}
\label{eq:D-columns}
\begin{aligned}
\partial_{\theta_k}\vect{\mu}_t&=s_t\beta_k\dot{\av}(\theta_k),\\
\partial_{\Re\{\beta_k\}}\vect{\mu}_t&=s_t\av(\theta_k),\quad \partial_{\Im\{\beta_k\}}\vect{\mu}_t=js_t\av(\theta_k),
\end{aligned}
\end{equation}
where $\av(\theta)\in\C^{N_R}$ is the array steering vector and $\dot{\av}(\theta):=\partial_{\theta}\av(\theta)$ its angular derivative. Stacking the per-snapshot Jacobian matrices $\partial\vect{\mu}_t/\partial\vect{\xi}\in\C^{N_R\times 3K}$ vertically over $t=1,\ldots,T$ defines the $TN_R\times 3K$ digital Jacobian
\begin{equation}
\label{eq:Jdig-jacobian}
\mathcal{J}_{\dig}\,:=\,\bigl[\,(\partial\vect{\mu}_1/\partial\vect{\xi})^{\TT}\;\cdots\;(\partial\vect{\mu}_T/\partial\vect{\xi})^{\TT}\,\bigr]^{\TT}\in\C^{TN_R\times 3K}.
\end{equation}
Since $\yv_t\sim\mathcal{CN}(\vect{\mu}_t,\sigma^{2}\Id_{N_R})$ has parameter-independent covariance, the Slepian--Bangs formula~\cite{Stoica2005} yields the $3K\times 3K$ fully-digital \ac{FIM}
\begin{equation}
\label{eq:Jdig-def}
\Jm_{\dig}(\vect{\xi})\,=\,\tfrac{2}{\sigma^{2}}\Re\bigl\{\mathcal{J}_{\dig}^{\HH}\mathcal{J}_{\dig}\bigr\}.
\end{equation}
The $\vect{\theta}$-marginal \ac{CRB} $\crb_{\dig}(\vect{\theta})$ is the $K\times K$ bound on the target angles alone. Since the complex amplitudes $(\Re\{\beta_k\},\Im\{\beta_k\})$ are unknown nuisance parameters, this bound accounts for the cost of estimating them jointly with the angles by taking the Schur complement of the amplitude block of $\Jm_{\dig}(\vect{\xi})$ before inverting. The resulting bound lives in the L\"owner order on \ac{PSD} matrices. Section~\ref{sec:crb-general} shows that $\Jm_{\dig}$ upper-bounds the \ac{MiLAC} \ac{FIM} in the L\"owner sense for every feasible $\Gm$, with equality characterized by a simple subspace containment.

\subsection{A toy example: why sensing differs from communication}
\label{subsec:toy}

Before developing the general theory, we use the simplest possible case, a single target ($K{=}1$) observed by an $N_R$-element array, to build intuition for two questions a reader new to \ac{MiLAC} naturally asks: why does combiner design for \emph{sensing} differ from combiner design for \emph{communication}, and what does a \ac{MiLAC} offer us in the sensing setting?

\emph{Communication:} Suppose the array carries one data stream arriving from a known direction $\theta$ along the steering vector $\av(\theta)$. A receiver that only needs to recover the stream maximizes the post-combining \ac{SNR}, which a single \ac{RF} chain already achieves by matched filtering, $\gv^{\HH}\propto\av(\theta)^{\HH}$. The only direction that matters is $\av(\theta)$ itself: a one-dimensional combiner row space suffices, and any combiner whose row space contains $\av(\theta)$ is optimal.

\emph{Sensing:} Now suppose $\theta$ is \emph{unknown} and is precisely the quantity to be estimated. What makes $\theta$ estimable is not the value of $\av(\theta)$ but how the observation \emph{changes} as $\theta$ varies, that is, the angular derivative $\dot{\av}(\theta)$. The matched-filter combiner $\gv^{\HH}\propto\av(\theta)^{\HH}$, optimal for communication, collapses the array onto the single direction $\av(\theta)$ and so retains no separate sensitivity to the variation $\dot{\av}(\theta)$ that carries the angle. Through it the \ac{CRB} on $\theta$ is in fact infinite, since a single complex measurement per snapshot cannot resolve the three real unknowns $(\theta,\Re\{\beta\},\Im\{\beta\})$. Equation~\eqref{eq:D-columns} makes this precise: the noiseless mean responds to these unknowns only along the two directions $\av(\theta)$, which carries the amplitude, and $\dot{\av}(\theta)$, which carries the angle. Since the combiner retains only the component of each direction lying in its row space, preserving all the Fisher information requires that row space to contain \emph{both}, i.e., a two-dimensional subspace. This is the single-target instance of the ``two \ac{RF} chains per target'' rule and the reason a communication-optimal front end is not automatically a sensing-optimal one.

\emph{What \ac{MiLAC} offers:} A lossless reciprocal \ac{MiLAC} can realize \emph{any} combiner whose rows are orthonormal (Lemma~\ref{lem:reachability}), so it can place its two \ac{RF} chains exactly on $\mathrm{span}\{\av(\theta),\dot{\av}(\theta)\}$ and thereby recover the \emph{entire} Fisher information of a fully-digital $N_R$-chain receiver with only two active chains, a saving that grows with the array size. A constant-modulus phase-shifter combiner with the same two chains cannot freely orient its row space and, as shown in Section~\ref{subsec:ps-strict}, generically misses this two-dimensional subspace. Sections~\ref{sec:crb-general} and~\ref{sec:hardware} make these statements precise and extend them to $K$ targets.

\section{CRB of DoA Estimation with MiLAC}
\label{sec:crb-general}

This section develops the $K$-target \ac{CRB} theory under arbitrary \ac{MiLAC} combining: a row-space-projector form of the \ac{FIM} (Section~\ref{subsec:fim-arbitrary}), the resulting Grassmannian descent (Section~\ref{subsec:subspace}), the L\"owner ordering and zero-gap theorem (Section~\ref{subsec:lowner}), row-isometry reachability (Section~\ref{subsec:reachability}), the \ac{ULA} aperture-scaling law (Section~\ref{subsec:ula-scaling}), and the strict inferiority of phase-shifter combining (Section~\ref{subsec:ps-strict}).

\subsection{Fisher information under arbitrary MiLAC combining}
\label{subsec:fim-arbitrary}

We now lift the fully-digital \ac{FIM} of Section~\ref{subsec:digital-baseline} to an arbitrary lossless reciprocal \ac{MiLAC} combiner. The post-combining observation $\zv_t=\Gm\vect{\mu}_t+\Gm\nv_t$ of~\eqref{eq:obs} has mean $\Gm\vect{\mu}_t$ and parameter-independent covariance $\sigma^{2}\Rm_{\Gm}$. Its Jacobian with respect to $\vect{\xi}$ is the digital Jacobian~\eqref{eq:Jdig-jacobian} left-multiplied by $\Id_T\otimes\Gm$, namely
\begin{equation}
\label{eq:JMiLAC-jacobian}
\mathcal{J}_{\MiLAC}\,:=\,(\Id_T\otimes\Gm)\,\mathcal{J}_{\dig}\,\in\,\C^{TL_R\times 3K}.
\end{equation}
Since $\zv_t\sim\mathcal{CN}(\Gm\vect{\mu}_t,\sigma^{2}\Rm_{\Gm})$ has parameter-independent covariance, the same Slepian--Bangs formula used in~\eqref{eq:Jdig-def}, now with $\sigma^{2}\Rm_{\Gm}$ in place of $\sigma^{2}\Id_{N_R}$, yields
\begin{equation}
\label{eq:SB-raw}
\Jm_{\MiLAC}(\vect{\xi};\Gm)=\tfrac{2}{\sigma^{2}}\Re\bigl\{\mathcal{J}_{\MiLAC}^{\HH}(\Id_T\otimes\Rm_{\Gm}^{-1})\mathcal{J}_{\MiLAC}\bigr\}.
\end{equation}
Substituting~\eqref{eq:JMiLAC-jacobian} into~\eqref{eq:SB-raw} and using the Kronecker identity
\begin{equation}
\label{eq:kron-body}
(\Id_T\otimes\Gm)^{\HH}(\Id_T\otimes\Rm_{\Gm}^{-1})(\Id_T\otimes\Gm)=\Id_T\otimes(\Gm^{\HH}\Rm_{\Gm}^{-1}\Gm),
\end{equation}
together with the Moore--Penrose projector identity~\cite{HornJohnson2013}
\begin{equation}
\label{eq:projector}
\Gm^{\HH}\Rm_{\Gm}^{-1}\Gm\,=\,\Pm_{\Gm},\qquad \Pm_{\Gm}:=\Pm_{\row(\Gm)},
\end{equation}
collapses the $\Gm$-dependence of $\Jm_{\MiLAC}$ entirely onto the row-space projector $\Pm_{\Gm}$.

\begin{theorem}[\ac{MiLAC} \ac{FIM} in projector form]
\label{thm:fim-projector}
Under~\eqref{eq:obs} and $\rank(\Gm)=L_R$,
\begin{equation}
\label{eq:SB-proj}
\Jm_{\MiLAC}(\vect{\xi};\Gm)\,=\,\tfrac{2}{\sigma^{2}}\Re\bigl\{\mathcal{J}_{\dig}^{\HH}(\Id_T\otimes\Pm_{\Gm})\mathcal{J}_{\dig}\bigr\},
\end{equation}
which recovers the fully-digital benchmark~\eqref{eq:Jdig-def} at $\Pm_{\Gm}=\Id_{N_R}$.
\end{theorem}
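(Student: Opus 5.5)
The argument is a direct computation starting from the Slepian--Bangs expression~\eqref{eq:SB-raw}, which is valid because $\zv_t\sim\mathcal{CN}(\Gm\vect{\mu}_t,\sigma^2\Rm_{\Gm})$ independently across $t$ and the covariance does not depend on $\vect{\xi}$. The only prerequisite is that $\Rm_{\Gm}=\Gm\Gm^{\HH}$ be invertible, which follows from $\rank(\Gm)=L_R$: for any $\xv\neq\mathbf{0}$ we have $\xv^{\HH}\Gm\Gm^{\HH}\xv=\|\Gm^{\HH}\xv\|^2>0$, because $\Gm^{\HH}$ has trivial kernel. The snapshots are independent, so the stacked noise covariance is $\sigma^2(\Id_T\otimes\Rm_{\Gm})$, with inverse $\sigma^{-2}(\Id_T\otimes\Rm_{\Gm}^{-1})$. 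This gives~\eqref{eq:SB-raw}, and the Jacobian there is~\eqref{eq:JMiLAC-jacobian}, since differentiation commutes with the fixed linear map $\Gm$.

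\textbf{Step 1.} Substitute~\eqref{eq:JMiLAC-jacobian} into~\eqref{eq:SB-raw}. Using $(\Id_T\otimes\Gm)^{\HH}=\Id_T\otimes\Gm^{\HH}$ and the mixed-product rule $(\Am\otimes\Bm)(\Cm\otimes\Dm)=\Am\Cm\otimes\Bm\Dm$ twice, we obtain the Kronecker identity~\eqref{eq:kron-body}. The middle factor then becomes $\Id_T\otimes\Gm^{\HH}\Rm_{\Gm}^{-1}\Gm$.

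\textbf{Step 2.} Verify the projector identity~\eqref{eq:projector}. Set $\Pm:=\Gm^{\HH}(\Gm\Gm^{\HH})^{-1}\Gm$. Then $\Pm^{\HH}=\Pm$ because $\Rm_{\Gm}$ is Hermitian, and $\Pm^2=\Pm$ by cancelling the middle $\Gm\Gm^{\HH}(\Gm\Gm^{\HH})^{-1}$. Its range lies in $\col(\Gm^{\HH})=\row(\Gm)$. Moreover $\Pm\Gm^{\HH}=\Gm^{\HH}$, so $\Pm$ fixes $\row(\Gm)$ pointwise. A Hermitian idempotent whose range is exactly $\row(\Gm)$ is the orthogonal projector $\Pm_{\row(\Gm)}$.

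\textbf{Step 3.} Combining Steps 1 and 2 and pulling the constant $2/\sigma^2$ and the real part through gives~\eqref{eq:SB-proj}. For the final claim, take $\Gm=\Id_{N_R}$, or any unitary $\Gm$ with $L_R=N_R$. Then $\row(\Gm)=\C^{N_R}$, so $\Pm_{\Gm}=\Id_{N_R}$, the middle factor becomes $\Id_{TN_R}$, and~\eqref{eq:SB-proj} reduces to~\eqref{eq:Jdig-def}.

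The main point needing care is not the algebra but the justification of Slepian--Bangs with a non-white covariance $\sigma^2\Rm_{\Gm}$. The FIM of a complex Gaussian with parameter-independent covariance $\Cm$ is $2\Re\{\partial\vect{\mu}^{\HH}\Cm^{-1}\partial\vect{\mu}\}$ for an arbitrary invertible $\Cm$. Applying this with $\Cm=\sigma^2(\Id_T\otimes\Rm_{\Gm})$ is legitimate, and the full-rank assumption is exactly what makes $\Cm$ invertible. Alternatively, one can whiten first with $\Rm_{\Gm}^{-1/2}$, which does not change the Fisher information because it is an invertible map, and then apply the white-noise formula~\eqref{eq:Jdig-def}.
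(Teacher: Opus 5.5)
Your proposal is correct and follows the paper's proof essentially step for step. Both apply Slepian--Bangs to the stacked observation with covariance $\sigma^{2}(\Id_T\otimes\Rm_{\Gm})$, use the Kronecker mixed-product identity to collapse the middle factor to $\Id_T\otimes\Gm^{\HH}\Rm_{\Gm}^{-1}\Gm$, and then invoke the Moore--Penrose projector identity, with the only difference being that you verify the invertibility of $\Rm_{\Gm}$ and the identity $\Gm^{\HH}\Rm_{\Gm}^{-1}\Gm=\Pm_{\row(\Gm)}$ explicitly where the paper simply cites them.
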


\begin{IEEEproof}
See Appendix~\ref{app:fim-projector}.
\end{IEEEproof}

In words, Theorem~\ref{thm:fim-projector} says that, as far as Fisher information is concerned, the only thing that matters about a \ac{MiLAC} combiner $\Gm$ is the \emph{subspace its rows span}, not the individual entries of $\Gm$. Two combiners with completely different entries but the same row space extract exactly the same information about the targets, and the fully-digital receiver is recovered precisely when that row space is all of $\C^{N_R}$. Combiner design therefore reduces to choosing a good subspace, an observation we exploit repeatedly below.

\subsection{Subspace invariance and Grassmannian descent}
\label{subsec:subspace}

A priori, $\Jm_{\MiLAC}$ is a matrix-valued function on the manifold of full-row-rank matrices in $\C^{L_R\times N_R}$, which has real dimension $2L_RN_R$. The projector identity~\eqref{eq:projector} immediately reveals that this dependence is only through $\Pm_{\Gm}$, which is uniquely determined by $\row(\Gm)\in\mathrm{Gr}(L_R,N_R)$.

\begin{proposition}[Subspace invariance and Grassmannian descent]
\label{prop:subspace}
Let $\Jm_{\MiLAC}(\vect{\xi};\Gm)$ be defined as in Theorem~\ref{thm:fim-projector}. Then:
\begin{enumerate}
\item[\textup{(i)}] \emph{Subspace invariance:} $\row(\Gm)=\row(\Gm')$ implies $\Jm_{\MiLAC}(\vect{\xi};\Gm)=\Jm_{\MiLAC}(\vect{\xi};\Gm')$.
\item[\textup{(ii)}] \emph{Row-isometric representation:} Every $\mathcal{S}\in\mathrm{Gr}(L_R,N_R)$ is the row space of some $\Gm_{\mathrm{iso}}\in\mathrm{St}(L_R,N_R)$.
\end{enumerate}
\end{proposition}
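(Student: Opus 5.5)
Both parts follow quickly from Theorem~\ref{thm:fim-projector} together with the Moore--Penrose identity~\eqref{eq:projector}. The plan is to show that the projector $\Pm_{\Gm}$ depends only on $\row(\Gm)$, and then to build an explicit orthonormal basis for any subspace.

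\emph{Part (i).} We first make the identity $\Gm^{\HH}(\Gm\Gm^{\HH})^{-1}\Gm=\Pm_{\row(\Gm)}$ self-contained, since it carries the whole argument. Write $\Pm:=\Gm^{\HH}(\Gm\Gm^{\HH})^{-1}\Gm$; full row rank makes $\Gm\Gm^{\HH}$ invertible. Three routine checks identify $\Pm$ as the orthogonal projector onto $\row(\Gm)=\col(\Gm^{\HH})$:
\begin{itemize}
\item[$\bullet$] $\Pm^{\HH}=\Pm$ and $\Pm^2=\Pm$, by direct multiplication;
\item[$\bullet$] $\Pm\Gm^{\HH}=\Gm^{\HH}$, so $\Pm$ fixes $\col(\Gm^{\HH})$;
\item[$\bullet$] $\Pm\xv=\mathbf{0}$ whenever $\Gm\xv=\mathbf{0}$, i.e., $\Pm$ annihilates $\ker(\Gm)=\col(\Gm^{\HH})^{\perp}$.
\end{itemize}
The orthogonal projector onto a subspace is unique, so $\Pm$ depends on $\Gm$ only through $\row(\Gm)$. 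Hence $\row(\Gm)=\row(\Gm')$ gives $\Pm_{\Gm}=\Pm_{\Gm'}$. Substituting into~\eqref{eq:SB-proj} then gives equal FIMs, because $\mathcal{J}_{\dig}$ does not depend on the combiner.

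An alternative route avoids projectors. Equal row spaces with equal rank $L_R$ imply $\Gm'=\Mm\Gm$ for some invertible $\Mm\in\C^{L_R\times L_R}$. Then
\[
\Gm'^{\HH}(\Gm'\Gm'^{\HH})^{-1}\Gm'=\Gm^{\HH}\Mm^{\HH}\Mm^{-\HH}(\Gm\Gm^{\HH})^{-1}\Mm^{-1}\Mm\Gm=\Gm^{\HH}(\Gm\Gm^{\HH})^{-1}\Gm,
\]
which is the same conclusion. I would present this as the main argument, since it is fully elementary.

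\emph{Part (ii).} Given $\mathcal{S}\in\mathrm{Gr}(L_R,N_R)$, choose any basis and apply Gram--Schmidt to obtain an orthonormal basis $\{\mathbf{u}_1,\ldots,\mathbf{u}_{L_R}\}$ of $\mathcal{S}$. Stack it as $\Um=[\mathbf{u}_1,\ldots,\mathbf{u}_{L_R}]$ and set $\Gm_{\mathrm{iso}}:=\Um^{\HH}$. Then $\Gm_{\mathrm{iso}}\Gm_{\mathrm{iso}}^{\HH}=\Um^{\HH}\Um=\Id_{L_R}$, so $\Gm_{\mathrm{iso}}\in\mathrm{St}(L_R,N_R)$. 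Under the paper's convention, $\row(\Gm_{\mathrm{iso}})=\col(\Um)=\mathcal{S}$.

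The only point needing care is this convention, $\row(\Am):=\col(\Am^{\HH})$. The rows of $\Gm_{\mathrm{iso}}$ are the conjugates $\mathbf{u}_i^{\HH}$, so one must use the Hermitian-based definition of the row space rather than the literal span of the rows. Otherwise part (ii) is immediate.
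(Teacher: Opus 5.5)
Your proposal is correct and follows the paper's proof in both parts. For (i), the paper likewise argues via \eqref{eq:SB-proj} that the FIM depends on $\Gm$ only through the unique orthogonal projector onto $\row(\Gm)$; your self-contained check of $\Gm^{\HH}(\Gm\Gm^{\HH})^{-1}\Gm=\Pm_{\Gm}$ (which the paper cites from Horn--Johnson) and your $\Gm'=\Mm\Gm$ computation just make that step explicit. For (ii), the paper's reduced QR factorization $\Gm^{\HH}=\Qm\Rm$ with $\Gm_{\mathrm{iso}}=\Qm^{\HH}$ is exactly your Gram--Schmidt construction, and it rests on the same convention $\row(\Am)=\col(\Am^{\HH})$ that you correctly flag.
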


\begin{IEEEproof}
See Appendix~\ref{app:subspace}.
\end{IEEEproof}

Proposition~\ref{prop:subspace} collapses the optimization domain from complex matrices (real dimension $2L_RN_R$) to the Grassmannian (real dimension $2L_R(N_R-L_R)$). By~(i) and~(ii) the achievable Fisher information is the same whether $\Gm$ ranges over all full-row-rank matrices or only row-isometric ones, so throughout the sequel we restrict, without loss of generality, every optimization of $\Jm_{\MiLAC}$ (or $\crb_{\MiLAC}$) to row-isometric $\Gm$, or equivalently to subspaces $\mathcal{S}\in\mathrm{Gr}(L_R,N_R)$.

\subsection{Löwner ordering, zero-gap theorem, and identifiability}
\label{subsec:lowner}

We now show that the \ac{MiLAC} \ac{FIM} can only lose, never gain, Fisher information relative to its fully-digital counterpart, and identify the precise row-space condition under which the loss is zero. Define the joint steering--derivative subspace
\begin{equation}
\label{eq:S-K-star}
\mathcal{S}_{K}^{\star}(\vect{\theta})\,:=\,\mathrm{span}_{\C}\{\av(\theta_k),\dot{\av}(\theta_k)\}_{k=1}^{K}\,\subseteq\,\C^{N_R}.
\end{equation}
For distinct target angles and $2K\le N_R$, $\dim\mathcal{S}_{K}^{\star}(\vect{\theta})=2K$: this follows from a Vandermonde argument for the \ac{ULA} and from a generic linear-independence argument for unstructured arrays. Concretely, the entries of $\av(\theta)$ are successive powers of $e^{j\pi\sin\theta}$, so steering vectors at distinct angles are linearly independent, and adjoining their derivatives $\dot{\av}(\theta_k)$ keeps all $2K$ vectors independent. This is the standard non-degeneracy condition assumed in \ac{DoA} estimation~\cite{Stoica2005}. We assume this generic non-degeneracy throughout.

\begin{theorem}[\ac{FIM} ordering and equality condition]
\label{thm:lowner}
Let $\Gm\in\C^{L_R\times N_R}$ be full row rank and $\mathcal{S}:=\row(\Gm)$. Then:
\begin{enumerate}
\item[\textup{(i)}] $\Jm_{\MiLAC}(\vect{\xi};\Gm)\preceq\Jm_{\dig}(\vect{\xi})$.
\item[\textup{(ii)}] Whenever the marginalized $\vect\theta$-block \acp{FIM} of both the \ac{MiLAC} and digital systems are positive definite, $\crb_{\MiLAC}(\vect{\theta};\Gm)\succeq\crb_{\dig}(\vect{\theta})$ in the L\"owner sense on $K\times K$ matrices.
\item[\textup{(iii)}] Equality $\Jm_{\MiLAC}(\vect{\xi};\Gm)=\Jm_{\dig}(\vect{\xi})$ holds (as a matrix identity on $\R^{3K}$) if and only if $\mathcal{S}\supseteq\mathcal{S}_{K}^{\star}(\vect{\theta})$.
\end{enumerate}
\end{theorem}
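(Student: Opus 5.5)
The plan is to work entirely from the projector form of Theorem~\ref{thm:fim-projector}. Subtracting it from the digital FIM~\eqref{eq:Jdig-def} gives
\begin{equation}
\Jm_{\dig}(\vect{\xi})-\Jm_{\MiLAC}(\vect{\xi};\Gm)=\tfrac{2}{\sigma^{2}}\Re\bigl\{\mathcal{J}_{\dig}^{\HH}(\Id_T\otimes\Pm_{\mathcal{S}}^{\perp})\mathcal{J}_{\dig}\bigr\},\qquad \Pm_{\mathcal{S}}^{\perp}:=\Id_{N_R}-\Pm_{\mathcal{S}}.
\end{equation}
For (i), I will test this difference against an arbitrary real vector $\vv\in\R^{3K}$. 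Since $\vv$ is real, $\vv^{\TT}\Re\{\Mm\}\vv=\Re\{\vv^{\TT}\Mm\vv\}$ for a Hermitian $\Mm$. Setting $\Mm=\mathcal{J}_{\dig}^{\HH}(\Id_T\otimes\Pm_{\mathcal{S}}^{\perp})\mathcal{J}_{\dig}$, the quadratic form equals $\tfrac{2}{\sigma^{2}}\|(\Id_T\otimes\Pm_{\mathcal{S}}^{\perp})\mathcal{J}_{\dig}\vv\|^{2}\ge 0$. Here I use that $\Id_T\otimes\Pm_{\mathcal{S}}^{\perp}$ is an orthogonal projector, so it is idempotent and Hermitian. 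This gives the Löwner ordering.

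For (ii), I will pass from the full FIM to the $\vect\theta$-marginal by Schur complements. Partition both FIMs into a $\vect\theta$ block and an amplitude block. The key step is the standard fact that the Schur complement is Löwner-monotone: if $\Jm_1\preceq\Jm_2$, both $\succeq 0$, then $\Jm_1/\Jm_{1,\beta\beta}\preceq\Jm_2/\Jm_{2,\beta\beta}$. I will prove this via the variational characterization
\begin{equation}
\xv^{\TT}(\Jm/\Jm_{\beta\beta})\xv=\min_{\yv}\,[\xv;\yv]^{\TT}\Jm[\xv;\yv].
\end{equation}
Here the minimum is an infimum if $\Jm_{\beta\beta}$ is singular, in which case generalized inverses are used. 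The inequality then follows pointwise. Given positive definiteness of both marginal FIMs, I will apply operator-monotonicity of inversion, $\Am\preceq\Bm\Rightarrow\Bm^{-1}\preceq\Am^{-1}$ for $\Am,\Bm\succ 0$, to obtain $\crb_{\MiLAC}\succeq\crb_{\dig}$.

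For (iii), sufficiency is direct. By~\eqref{eq:D-columns}, every column of each per-snapshot Jacobian $\partial\vect\mu_t/\partial\vect\xi$ lies in $\mathcal{S}_K^{\star}(\vect\theta)$. If $\mathcal{S}\supseteq\mathcal{S}_K^\star$, then $\Pm_{\mathcal{S}}^{\perp}$ annihilates them, so the difference vanishes. Necessity is the step I expect to be the main obstacle, because equality of \emph{real parts} must be upgraded to vanishing of the complex residual. I will use the quadratic form from (i). If the difference is zero, then $(\Id_T\otimes\Pm_{\mathcal{S}}^\perp)\mathcal{J}_{\dig}\vv=\mathbf{0}$ for every real $\vv$, hence for each real coordinate vector. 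Taking any $t$ with $s_t\neq 0$ (one exists since $\|\sv\|^2>0$), the individual columns give the following. The $\Re\{\beta_k\}$ column gives $\Pm_{\mathcal{S}}^\perp\av(\theta_k)=\mathbf 0$, and the $\theta_k$ column gives $\beta_k\Pm_{\mathcal{S}}^\perp\dot{\av}(\theta_k)=\mathbf 0$. Since $\beta_k\ne0$, this yields $\Pm_{\mathcal{S}}^\perp\dot{\av}(\theta_k)=\mathbf 0$. Thus all $2K$ generators lie in $\mathcal{S}$, and since $\mathcal{S}$ is a complex subspace, $\mathcal{S}\supseteq\mathcal{S}_K^\star(\vect\theta)$. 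The potential pitfall is that the real part could hide cancellation. This is why I route the argument through the diagonal quadratic forms, which are exact squared norms, rather than through entrywise comparison.
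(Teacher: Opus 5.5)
Your proposal is correct and follows essentially the paper's own route. Part (i) is the same quadratic-form comparison through the row-space projector, part (ii) uses Schur-complement monotonicity plus order reversal under inversion, and part (iii) reduces equality to $(\Id_T\otimes\Pm_{\mathcal{S}}^{\perp})\mathcal{J}_{\dig}\xv=\mathbf{0}$ for all real $\xv$; the only differences are that you prove Schur monotonicity via its variational characterization instead of citing it, and you read off the containment from coordinate vectors rather than the real-span argument, both of which are fine.
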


\begin{IEEEproof}
See Appendix~\ref{app:lowner}.
\end{IEEEproof}

\begin{corollary}[Zero-gap threshold]
\label{cor:zerogap}
Assume $2K\le N_R$ and $\dim\mathcal{S}_{K}^{\star}(\vect{\theta})=2K$. For every $L_R\ge 2K$ and every $\Gm\in\mathcal{F}_{\MiLAC}$ with $\row(\Gm)\supseteq\mathcal{S}_{K}^{\star}(\vect{\theta})$,
\begin{equation*}
\Jm_{\MiLAC}(\vect{\xi};\Gm)=\Jm_{\dig}(\vect{\xi}),\;\crb_{\MiLAC}(\vect{\theta};\Gm)=\crb_{\dig}(\vect{\theta}).
\end{equation*}
\end{corollary}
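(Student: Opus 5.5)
The corollary follows almost directly from Theorem~\ref{thm:lowner}(iii) and the definition of the marginal CRB. We first fix a $\Gm\in\mathcal{F}_{\MiLAC}$ with $\row(\Gm)\supseteq\mathcal{S}_{K}^{\star}(\vect{\theta})$. Such a $\Gm$ must be full row rank for $\Jm_{\MiLAC}$ to be defined as in Theorem~\ref{thm:fim-projector}, and this standing assumption is imposed throughout the paper. Note also that the hypothesis forces $L_R\ge\dim\row(\Gm)\ge 2K$, consistent with the stated range of $L_R$.

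To see that the corollary is not vacuous, we exhibit such a $\Gm$ for every $L_R\ge 2K$ with $L_R\le N_R$. Take any $L_R$-dimensional subspace $\mathcal{S}\supseteq\mathcal{S}_K^\star(\vect\theta)$. This is possible because $\dim\mathcal{S}_K^\star=2K\le L_R\le N_R$: extend an orthonormal basis of $\mathcal{S}_K^\star$ by Gram--Schmidt. Proposition~\ref{prop:subspace}(ii) then gives a row-isometric $\Gm_{\mathrm{iso}}$ with row space $\mathcal{S}$, and Lemma~\ref{lem:reachability} places it in $\mathcal{F}_{\MiLAC}$.

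For the FIM identity, we apply Theorem~\ref{thm:lowner}(iii) with $\mathcal{S}=\row(\Gm)$, which gives $\Jm_{\MiLAC}(\vect\xi;\Gm)=\Jm_{\dig}(\vect\xi)$ directly. As a sanity check on the mechanism, and in case a self-contained line is wanted, the key point is that every column of $\mathcal{J}_{\dig}$ has per-snapshot blocks $s_t\beta_k\dot{\av}(\theta_k)$, $s_t\av(\theta_k)$, and $js_t\av(\theta_k)$. All of these lie in $\mathcal{S}_K^\star\subseteq\row(\Gm)$. Hence $(\Id_T\otimes\Pm_{\Gm})\mathcal{J}_{\dig}=\mathcal{J}_{\dig}$, and the projector form~\eqref{eq:SB-proj} collapses to~\eqref{eq:Jdig-def}.

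For the CRB identity, we use the fact that $\crb(\vect\theta)$ is obtained from the full $3K\times 3K$ FIM by a fixed map. We partition the FIM into the $\vect\theta$-block, the cross block, and the amplitude block, form the Schur complement of the amplitude block, and invert. Since the two FIMs coincide entrywise, every block, Schur complement and inverse coincides, so $\crb_{\MiLAC}(\vect\theta;\Gm)=\crb_{\dig}(\vect\theta)$.

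The only step that needs care is well-definedness, i.e., that the amplitude block and the Schur complement are invertible. The amplitude block equals $\tfrac{2}{\sigma^2}\|\sv\|^2$ times the real representation of $\Am^{\HH}\Am$, which is positive definite because the steering vectors are linearly independent (a consequence of $\dim\mathcal{S}_K^\star=2K$). The Schur complement is positive definite because the $3K$ real columns of the real-stacked $\mathcal{J}_{\dig}$ are real-linearly independent. This holds since $\beta_k\ne0$, $\|\sv\|\neq 0$, and $\{\av(\theta_k),\dot{\av}(\theta_k)\}$ are $\C$-independent: a real combination of the $\dot{\av}$ columns and the complex combination of the $\av$ columns can vanish only trivially. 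I expect this nondegeneracy verification to be the main, though routine, obstacle. Everything else is substitution into Theorem~\ref{thm:lowner}(iii).
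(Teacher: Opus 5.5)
Your proposal is correct and follows the paper's route: the FIM identity is Theorem~\ref{thm:lowner}(iii) applied with $\mathcal{S}=\row(\Gm)$, and the CRB identity follows because the $\vect{\theta}$-marginal CRB is the inverse of the amplitude-block Schur complement of that same matrix. Your non-vacuity construction and your positive-definiteness check (using $\beta_k\neq 0$, $\|\sv\|\neq 0$, and $\dim\mathcal{S}_{K}^{\star}(\vect{\theta})=2K$) add rigor that the paper leaves implicit, but they do not change the argument.
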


\begin{remark}[Identifiability versus zero-gap thresholds]
\label{rem:thresholds}
The $K$-target problem with a shared waveform admits two distinct thresholds on $L_R$.

\emph{(a) Identifiability threshold $L_R\ge\lceil 3K/2\rceil$:} Because the waveform $s_t$ is a known scalar, the Jacobian factorizes as
\begin{equation}
\label{eq:jdig-factor}
\mathcal{J}_{\dig}=\sv\otimes\Nm_{0},\qquad \Nm_{0}\in\C^{N_R\times 3K},
\end{equation}
with $\Nm_{0}$ collecting the columns $(\beta_k\dot{\av}(\theta_k),\av(\theta_k),j\av(\theta_k))_{k=1}^{K}$. Substituting into~\eqref{eq:SB-proj} and writing $\Gm_{\mathrm{iso}}\Nm_{0}=\Nm_{r}+j\Nm_{i}$, the \ac{FIM} reduces to
\begin{equation}
\label{eq:fim-rank}
\frac{2\|\sv\|^{2}}{\sigma^{2}}\bigl(\Nm_{r}^{\TT}\Nm_{r}+\Nm_{i}^{\TT}\Nm_{i}\bigr),
\end{equation}
whose rank is upper-bounded by $2L_R$. Identifiability of the $3K$-dimensional real parameter vector therefore requires $L_R\ge\lceil 3K/2\rceil$, and below this threshold $\crb_{\MiLAC}$ is infinite.

\emph{(b) Zero-gap threshold $L_R\ge 2K$:} This threshold comes directly from Theorem~\ref{thm:lowner}(iii) via Corollary~\ref{cor:zerogap}. In the intermediate regime $\lceil 3K/2\rceil\le L_R<2K$ the problem is identifiable but the \ac{MiLAC} \ac{CRB} sits strictly above the digital benchmark. The two thresholds coincide at $K=1$ (both equal to $2$) and first differ at $K=2$ (taking values $3$ and $4$). Section~\ref{sec:numerical} exhibits the resulting three-regime structure numerically at $K=3$ (thresholds $5$ and $6$), where the intermediate regime sits well clear of the unidentifiable boundary.
\end{remark}

\subsection{Row-isometry reachability}
\label{subsec:reachability}

The next ingredient closes the loop between information-theoretic optimality and physical realizability.

\begin{lemma}[Row-isometry reachability]
\label{lem:reachability}
Every row-isometric $\Gm_{0}\in\mathrm{St}(L_R,N_R)$ belongs to $\mathcal{F}_{\MiLAC}$. In particular, for $L_R\ge 2K$, the row-isometric matrix $\Gm_{K}^{\star}\in\mathrm{St}(L_R,N_R)$ obtained by appending $L_R{-}2K$ orthonormal complement rows to the Gram--Schmidt orthonormalization of $\{\av(\theta_k),\dot{\av}(\theta_k)\}_{k=1}^{K}$ lies in $\mathcal{F}_{\MiLAC}$ and satisfies $\row(\Gm_{K}^{\star})\supseteq\mathcal{S}_{K}^{\star}(\vect{\theta})$.
\end{lemma}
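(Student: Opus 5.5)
Given a row-isometric $\Gm_{0}\in\mathrm{St}(L_R,N_R)$, I will construct an explicit symmetric unitary $\Thetam\in\C^{(N_R+L_R)\times(N_R+L_R)}$ whose $(2,1)$ block equals $\Gm_{0}$. The natural ansatz is the block matrix
\begin{equation*}
\Thetam=\begin{bmatrix}\Thetam_{11}&\Gm_{0}^{\TT}\\ \Gm_{0}&\mathbf{0}_{L_R}\end{bmatrix},
\end{equation*}
with $\Thetam_{11}$ symmetric and to be determined. Reciprocity then holds automatically as soon as $\Thetam_{11}^{\TT}=\Thetam_{11}$. It remains to choose $\Thetam_{11}$ so that the whole matrix is unitary.

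Writing out $\Thetam^{\HH}\Thetam=\Id$ block by block gives three conditions.
\begin{itemize}
\item The $(2,2)$ block gives $\bar{\Gm}_{0}\Gm_{0}^{\TT}=\overline{\Gm_{0}\Gm_{0}^{\HH}}=\Id_{L_R}$, which holds automatically.
\item The off-diagonal block gives $\Thetam_{11}^{\HH}\Gm_{0}^{\TT}=\mathbf{0}$.
\item The $(1,1)$ block gives $\Thetam_{11}^{\HH}\Thetam_{11}+\Gm_{0}^{\HH}\Gm_{0}=\Id_{N_R}$, i.e. $\Thetam_{11}^{\HH}\Thetam_{11}=\Id_{N_R}-\Pm_{\Gm_0}$.
\end{itemize}
So $\Thetam_{11}$ must be a complex-symmetric partial isometry with the following two properties. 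First, its initial space (the orthogonal complement of its kernel) is $\row(\Gm_0)^{\perp}$. Second, its range is orthogonal to $\col(\Gm_0^{\TT})=\overline{\row(\Gm_0)}$. Since $\Thetam_{11}$ is symmetric, its range equals the conjugate of its initial space, namely $\overline{\row(\Gm_0)^{\perp}}=\overline{\row(\Gm_0)}^{\perp}$. Hence the second condition follows from the first.

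The construction of $\Thetam_{11}$ is the step I expect to be the main obstacle. My first attempt is the symmetric square-root idea. Complete $\Gm_0$ to an $N_R\times N_R$ unitary $\Um=\begin{bmatrix}\Gm_0\\ \Gm_{\perp}\end{bmatrix}$, where the rows of $\Gm_{\perp}$ form an orthonormal basis of $\row(\Gm_0)^{\perp}$, and set
\begin{equation*}
\Thetam_{11}:=\Gm_{\perp}^{\TT}\Gm_{\perp}.
\end{equation*}
This is symmetric by construction. It satisfies
\begin{equation*}
\Thetam_{11}^{\HH}\Thetam_{11}=\Gm_{\perp}^{\HH}\bar{\Gm}_{\perp}\Gm_{\perp}^{\TT}\Gm_{\perp}=\Gm_{\perp}^{\HH}\Gm_{\perp}=\Id_{N_R}-\Gm_0^{\HH}\Gm_0,
\end{equation*}
using $\bar{\Gm}_{\perp}\Gm_{\perp}^{\TT}=\overline{\Gm_{\perp}\Gm_{\perp}^{\HH}}=\Id$. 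It also satisfies
\begin{equation*}
\Thetam_{11}^{\HH}\Gm_0^{\TT}=\Gm_{\perp}^{\HH}\,\overline{\Gm_{\perp}\Gm_0^{\HH}}=\mathbf{0},
\end{equation*}
by orthogonality of the rows of $\Gm_{\perp}$ and $\Gm_0$. All three block conditions are therefore met, and $\Gm_0\in\mathcal{F}_{\MiLAC}$. Equivalently, $\Thetam_{11}=\Um^{\TT}\diag(\mathbf{0}_{L_R},\Id_{N_R-L_R})\Um$, and one can check that the full $\Thetam$ equals $\Vm^{\TT}\Vm$ for a suitable unitary $\Vm$ built from $\Um$. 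This gives a second sanity check of symmetry and unitarity, and it is where the "symmetric square-root" description of the construction comes from.

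The "in particular" part is then routine. Under the standing non-degeneracy assumption $\dim\mathcal{S}_K^{\star}(\vect\theta)=2K\le L_R$, Gram--Schmidt applied to $\{\av(\theta_k),\dot{\av}(\theta_k)\}_{k=1}^{K}$ yields $2K$ orthonormal vectors. Extending them by $L_R-2K$ orthonormal vectors from $(\mathcal{S}_K^{\star})^{\perp}$ is possible because $L_R\le N_R$. Taking the conjugates of these $L_R$ vectors as the rows of a matrix gives $\Gm_K^{\star}\in\mathrm{St}(L_R,N_R)$ with $\row(\Gm_K^{\star})\supseteq\mathcal{S}_K^{\star}(\vect\theta)$, where the conjugation accounts for the convention $\row(\Am)=\col(\Am^{\HH})$. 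Applying the first part of the lemma to $\Gm_K^{\star}$ places it in $\mathcal{F}_{\MiLAC}$.
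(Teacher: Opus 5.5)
Your proposal is correct and builds the same completion as the paper. Take $\Gm_{\perp}=\Vm_1^{\HH}$, where $\Vm_1$ is an orthonormal basis of $\ker(\Gm_0)$. Then your block $\Gm_{\perp}^{\TT}\Gm_{\perp}=\Um^{\TT}\diag(\mathbf{0},\Id)\Um$ coincides with the paper's $\overline{\Vm}\Lambdam^{1/2}\Vm^{\HH}$, because $\Lambdam$ has only $0/1$ entries. Both constructions also place $\Gm_0^{\TT}$ and $\Gm_0$ off the diagonal and a zero $L_R\times L_R$ block in the corner.

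The only difference is how the result is justified. You check symmetry and unitarity directly, block by block, whereas the paper obtains the same matrix by specializing a cited general symmetric-unitary completion result to the case of unit singular values.
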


\begin{IEEEproof}
The first claim is the row-isometric case of the symmetric-unitary completion result developed in~\cite{Nerini2025Capacity} and crisply stated as Proposition~1 of~\cite{Wu2026MiLAC}. That proposition exhibits, for any matrix $\Xm$ with $\|\Xm\|_2\le 1$, a symmetric unitary $\Thetam_{0}\in\C^{(N_R+L_R)\times(N_R+L_R)}$ whose off-diagonal block is $\Xm$, by populating the two diagonal blocks with terms involving $\sqrt{1-\sigma_i^{2}(\Xm)}$, where $\sigma_i(\Xm)$ are the singular values of $\Xm$. Specializing to $\Xm=\Gm_{0}$, row-isometry makes every $\sigma_i(\Gm_{0})$ equal one, so these terms vanish: the $L_R\times L_R$ block of $\Thetam_{0}$ reduces to $\mathbf{0}$ and the $N_R\times N_R$ block reduces to $\overline{\Vm}\Lambdam^{1/2}\Vm^{\HH}$, where $\Vm\Lambdam\Vm^{\HH}=\Id_{N_R}-\Gm_{0}^{\HH}\Gm_{0}$ is the spectral decomposition of the kernel projector (the orthogonal projector onto $\ker(\Gm_{0})$). The resulting completion is
\begin{equation*}
\Thetam_{0}=\begin{bmatrix}\overline{\Vm}\Lambdam^{1/2}\Vm^{\HH} & \Gm_{0}^{\TT}\\ \Gm_{0} & \mathbf{0}_{L_R\times L_R}\end{bmatrix},
\end{equation*}
so the realization reduces to a single $N_R\times N_R$ Hermitian eigendecomposition, with no $L_R\times L_R$ SVD branch needed. The second claim follows from Proposition~\ref{prop:subspace}(ii) applied to the Gram--Schmidt basis of $\mathcal{S}_{K}^{\star}(\vect{\theta})$ padded with $L_R-2K$ orthonormal complement rows.
\end{IEEEproof}

\subsection{ULA aperture scaling}
\label{subsec:ula-scaling}

We now establish the per-target $\mathcal{O}(N_R^{-3})$ scaling that gives \ac{MiLAC}-aided sensing its asymptotic punch, with the explicit constant for the single-target case.

\begin{proposition}[\ac{ULA} aperture scaling]
\label{prop:ula-scaling}
For the half-wavelength $N_R$-element \ac{ULA} with $[\av(\theta)]_n=e^{j\pi n\sin\theta}$, $n=0,\ldots,N_R-1$, and $K\ge 1$ targets at distinct angles $\theta_1,\ldots,\theta_K$ independent of $N_R$, the per-target marginalized digital \ac{CRB} satisfies
\begin{equation}
\label{eq:ula-rate}
[\crb_{\dig}(\vect{\theta})]_{kk}\,=\,\mathcal{O}(N_R^{-3}),\qquad k=1,\ldots,K,
\end{equation}
as $N_R\to\infty$ with $K$, $\vect{\theta}$, and $\vect{\beta}$ fixed. By Corollary~\ref{cor:zerogap}, every \ac{CRB}-optimal \ac{MiLAC} with $L_R\ge 2K$ and $\row(\Gm)\supseteq\mathcal{S}_{K}^{\star}(\vect{\theta})$ matches this rate exactly.
\end{proposition}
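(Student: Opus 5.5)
Using the factorization~\eqref{eq:jdig-factor}, I would first write the digital \ac{FIM} as $\Jm_{\dig}=\tfrac{2\|\sv\|^{2}}{\sigma^{2}}\Re\{\Nm_0^{\HH}\Nm_0\}$. The $\vect\theta$-marginal \ac{CRB} is then the inverse of the Schur complement of the amplitude block. Because the amplitude columns $\av(\theta_k)$ and $j\av(\theta_k)$ span, over $\R$, exactly the complex span of $\Am(\vect\theta)$, this Schur complement has the standard deterministic-\ac{DoA} form
\begin{equation*}
\Fm:=\tfrac{2\|\sv\|^{2}}{\sigma^{2}}\Re\bigl\{(\Dm^{\HH}\Pm_{\Am}^{\perp}\Dm)\odot(\vect\beta\vect\beta^{\HH})^{\TT}\bigr\},
\end{equation*}
where $\Dm=[\dot\av(\theta_1),\ldots,\dot\av(\theta_K)]$, $\Pm_{\Am}^{\perp}=\Id-\Am(\Am^{\HH}\Am)^{-1}\Am^{\HH}$, and $\odot$ is the Hadamard product. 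I would derive this explicitly by projecting each angle column $\beta_k\dot\av(\theta_k)$ onto the orthogonal complement of $\mathrm{span}_\C\Am$, since the real-linear span of the nuisance columns is exactly that complex subspace. It then suffices to show $\lambda_{\min}(\Fm)\ge cN_R^{3}$ for some constant $c>0$, because $[\crb_{\dig}]_{kk}=[\Fm^{-1}]_{kk}\le\lambda_{\min}(\Fm)^{-1}$.

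Next I would compute the asymptotics of the Gram entries. Write $u_k=\pi\sin\theta_k$, so that $\dot\av(\theta_k)=j\pi\cos\theta_k\,\diag(0,\ldots,N_R-1)\av(\theta_k)$. The diagonal quantities are $\av_k^{\HH}\av_k=N_R$, $\av_k^{\HH}\dot\av_k=O(N_R^{2})$ (exactly $j\pi\cos\theta_k N_R(N_R-1)/2$), and $\dot\av_k^{\HH}\dot\av_k=\pi^2\cos^2\theta_k\sum n^2\sim \pi^2\cos^2\theta_k N_R^3/3$. The cross terms with $k\ne l$ are geometric sums $\sum_n n^p e^{jn(u_l-u_k)}$ with $u_k\ne u_l$. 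By Abel summation these are $O(N_R^{p})$, so they are one order smaller than the diagonal terms, namely $O(1)$, $O(N_R)$ and $O(N_R^{2})$ respectively. Normalizing by $\Lambdam=\diag(N_R^{-1/2},\ldots)$ for $\Am$ and $N_R^{-3/2}$ for $\Dm$ then shows that the normalized Gram matrix of $[\Am,\Dm]$ converges to a block-diagonal matrix with $K$ identical single-target $2\times 2$ blocks $\bigl[\begin{smallmatrix}1 & j\pi c_k/2\\ -j\pi c_k/2 & \pi^2c_k^2/3\end{smallmatrix}\bigr]$, where $c_k=\cos\theta_k$.

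From this limit, $N_R^{-3}\Dm^{\HH}\Pm_{\Am}^{\perp}\Dm\to\diag\bigl(\pi^2\cos^2\theta_k(1/3-1/4)\bigr)=\diag(\pi^2\cos^2\theta_k/12)$. Hence $N_R^{-3}\Fm\to\tfrac{2\|\sv\|^2}{\sigma^2}\diag(|\beta_k|^2\pi^2\cos^2\theta_k/12)$, which is positive definite provided $\beta_k\neq0$ and $\cos\theta_k\ne0$. I would add the non-endfire assumption $|\theta_k|<\pi/2$; it is implicit in the paper, since otherwise $\dot\av=0$. Matrix inversion is continuous, so $N_R^{3}\crb_{\dig}$ converges to $\diag(6\sigma^2/(\|\sv\|^2\pi^2|\beta_k|^2\cos^2\theta_k))$, which gives~\eqref{eq:ula-rate} together with the explicit single-target constant. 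The final claim follows immediately from Corollary~\ref{cor:zerogap}, since the two \acp{CRB} are equal.

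I expect the main obstacle to be the cross-term control. The issue is making rigorous that the off-diagonal normalized Gram entries vanish and that the Schur complement passes to the limit. I would try to handle this through the uniform bound $|\sum_{n<N}n^p e^{jn\delta}|\le C_pN^p/|\sin(\delta/2)|$, which follows from summation by parts, with $\delta=u_l-u_k\not\equiv0 \pmod{2\pi}$ because the angles are distinct. For the Schur complement itself, I would invoke continuity of the inverse at the invertible limiting Gram matrix, using the fact that the limit of $N_R^{-1}\Am^{\HH}\Am$ is $\Id_K$.
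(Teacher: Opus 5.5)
Your proposal is correct. It shares the paper's skeleton: a Schur complement of the amplitude block, cross-target exponential sums one order smaller than same-target ones, and per-target dominance. The two technical steps, however, are handled differently.

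The paper keeps the real $3K\times 3K$ FIM. It splits $\Jm_{\beta\beta}$ into $\mathcal{O}(N_R)$ per-target $2\times 2$ blocks plus $\mathcal{O}(1)$ cross blocks and inverts it by a Neumann series. It then inverts the resulting Schur complement by a second Neumann series around its diagonal, arriving at $[\crb_{\dig}]_{kk}=[\tilde{\Jm}_{\theta\theta}]_{kk}^{-1}(1+\mathcal{O}(N_R^{-1}))$.

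You instead eliminate the amplitudes exactly. You use that the real span of $\{\av(\theta_k),j\av(\theta_k)\}_{k}$ is the complex subspace $\col(\Am)$, which gives the concentrated form $\Re\{(\Dm^{\HH}\Pm_{\Am}^{\perp}\Dm)\odot(\vect{\beta}\vect{\beta}^{\HH})^{\TT}\}$ with $\Dm=[\dot{\av}(\theta_1),\ldots,\dot{\av}(\theta_K)]$. You then replace both Neumann expansions by the $N_R^{-1/2}$ and $N_R^{-3/2}$ normalization plus continuity of the inverse at the invertible block-diagonal limit.

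Your route buys two things. First, it gives the exact limit $N_R^{3}[\crb_{\dig}]_{kk}\to 6\sigma^{2}/(\|\sv\|^{2}\pi^{2}|\beta_k|^{2}\cos^{2}\theta_k)$ for every $k$, i.e., each target asymptotically decouples to its single-target bound of Remark~\ref{rem:ula-K1}. Second, it makes explicit the lower bound on the diagonal, namely positivity of $\pi^{2}\cos^{2}\theta_k/12$ and hence $\cos\theta_k\neq 0$. The paper's Step 3 needs this lower bound to get $\|\Em\|_\infty=\mathcal{O}(N_R^{-1})$, but only states $\rho_{\dig}(\theta_k)=\mathcal{O}(N_R^{3})$.

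The paper's route, in turn, delivers an explicit $\mathcal{O}(N_R^{-1})$ relative error directly. Yours gives only convergence unless you track the $\mathcal{O}(N_R^{-1})$ size of the normalized cross terms, which your summation-by-parts bound already supplies.

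Two minor slips. The limiting $2\times 2$ blocks are not identical across targets, since they depend on $\cos\theta_k$. Also, the bound $|\sum_n n^{p}e^{jn\delta}|\le C_pN_R^{p}/|\sin(\delta/2)|$ needs distinct $\sin\theta_k$ rather than merely distinct $\theta_k$; your added assumption $|\theta_k|<\pi/2$ guarantees this.
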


\begin{IEEEproof}
See Appendix~\ref{app:ula-scaling}.
\end{IEEEproof}

\begin{remark}[$K{=}1$ closed-form constant]
\label{rem:ula-K1}
For $K=1$ the rate~\eqref{eq:ula-rate} admits the closed-form constant
\begin{equation}
\label{eq:ula-K1}
\crb^{\star}(\theta)\,=\,\frac{6\sigma^{2}}{|\beta|^{2}\|\sv\|^{2}\pi^{2}\cos^{2}\theta\,N_R(N_R^{2}-1)},
\end{equation}
matching the fully-digital cubic scaling and degrading as $1/\cos^{2}\theta$ near endfire, consistent with the loss of effective aperture at $\theta\to\pm 90^{\circ}$.
\end{remark}

\subsection{Strict inferiority of phase-shifter combining}
\label{subsec:ps-strict}

We close the section by linking the numerical observation that a phase-shifter combiner with $L_R=2K$ chains leaves a strictly positive gap to the digital \ac{CRB} to a geometric guarantee that goes through the same row-space machinery developed above.

A phase-shifter combiner is any matrix $\Fm\in\C^{L_R\times N_R}$ whose entries satisfy $|[\Fm]_{\ell,n}|=1/\sqrt{N_R}$ for all $(\ell,n)$, so that each row has unit norm. Let
\begin{equation}
\label{eq:ps-feas}
\mathcal{F}_{\mathrm{PS}}\,:=\,\bigl\{\Fm\in\C^{L_R\times N_R}:|[\Fm]_{\ell,n}|=\frac{1}{\sqrt{N_R}},\forall\ell,n\bigr\}
\end{equation}
denote the feasibility set, a real-analytic torus of real dimension $L_R N_R$. By Proposition~\ref{prop:subspace}, the phase-shifter combiner Fisher information depends on $\Fm$ only through the orthogonal projector $\Pm_{\row(\Fm)}$, so the attainable accuracy is controlled by the set of reachable row spaces

\begin{align}
\mathcal{R}_{\mathrm{PS}} &\,:=\,\{\row(\Fm):\Fm\in\mathcal{F}_{\mathrm{PS}},\,\rank(\Fm)=L_R\}\notag\\
&\subseteq\mathrm{Gr}(L_R,N_R).\label{eq:ps-rows}
\end{align}

\begin{proposition}[Strict inferiority of phase-shifter combining]
\label{prop:ps-strict}
Fix $K\ge 1$, the chain count $L_R\ge 2K$, and assume the mild aperture condition $N_R\ge 2L_R$. Then:
\begin{enumerate}
\item[\textup{(i)}] \emph{L\"owner ordering:} For every $\Fm\in\mathcal{F}_{\mathrm{PS}}$, $\Jm_{\mathrm{PS}}(\vect{\xi};\Fm)\preceq\Jm_{\dig}(\vect{\xi})$ and $\crb_{\mathrm{PS}}(\vect{\theta};\Fm)\succeq\crb_{\dig}(\vect{\theta})$, with equality if and only if $\row(\Fm)\supseteq\mathcal{S}_{K}^{\star}(\vect{\theta})$.
\item[\textup{(ii)}] \emph{Reach deficit on the Grassmannian:} The set $\mathcal{R}_{\mathrm{PS}}$ of subspaces reachable as row spaces of phase-shifter combiners has at most $L_R(N_R-1)$ real degrees of freedom, falling short of $\dim_{\R}\mathrm{Gr}(L_R,N_R)=2L_R(N_R-L_R)$ by
\begin{equation}
\label{eq:ps-codim}
\Delta(L_R,N_R)\,:=\,L_R(N_R-2L_R+1)\,>\,0.
\end{equation}
In probabilistic terms, an $L_R$-dimensional subspace drawn uniformly at random from $\mathrm{Gr}(L_R,N_R)$ almost surely (with probability one) falls outside $\mathcal{R}_{\mathrm{PS}}$, so $\mathcal{R}_{\mathrm{PS}}$ occupies vanishing volume in the Grassmannian.
\item[\textup{(iii)}] \emph{Generic strict gap:} Fix any phase-shifter combiner $\Fm\in\mathcal{F}_{\mathrm{PS}}$. Because the steering curve $\theta\mapsto\av(\theta)$ meets the fixed proper subspace $\row(\Fm)$ only at isolated angles, the target configurations satisfying $\row(\Fm)\supseteq\mathcal{S}_{K}^{\star}(\vect{\theta})$ form a measure-zero set. Hence, by~(i), $\crb_{\mathrm{PS}}(\vect{\theta};\Fm)\succ\crb_{\dig}(\vect{\theta})$ strictly for almost every $\vect{\theta}$, reducing for a single target to $\crb_{\mathrm{PS}}(\theta;\Fm)>\crb_{\dig}(\theta)$. A feasible \ac{MiLAC}, by contrast, can be tuned so that $\row(\Gm)\supseteq\mathcal{S}_{K}^{\star}(\vect{\theta})$ for every $\vect{\theta}$, attaining $\crb_{\dig}(\vect{\theta})$ exactly.
\end{enumerate}
\end{proposition}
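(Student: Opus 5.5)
The plan is to prove the three parts in order, reusing the row-space machinery already established.

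For part (i), the key observation is that nothing in Theorem~\ref{thm:fim-projector} or Theorem~\ref{thm:lowner} used the MiLAC feasibility constraint. Both results only require the combiner to have full row rank, since they rest on the Slepian--Bangs formula and the projector identity~\eqref{eq:projector}. So for any full-rank $\Fm\in\mathcal{F}_{\mathrm{PS}}$ we get $\Jm_{\mathrm{PS}}(\vect{\xi};\Fm)=\tfrac{2}{\sigma^2}\Re\{\mathcal{J}_{\dig}^{\HH}(\Id_T\otimes\Pm_{\Fm})\mathcal{J}_{\dig}\}$. The L\"owner bound then follows from $\Pm_{\Fm}\preceq\Id_{N_R}$. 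The CRB ordering follows because Schur complements are monotone in the L\"owner order, and matrix inversion reverses that order. For the equality condition, $\mathcal{J}_{\dig}=\sv\otimes\Nm_0$ gives
\[
\Jm_{\dig}-\Jm_{\mathrm{PS}}=\tfrac{2\|\sv\|^2}{\sigma^2}\Re\{\Nm_0^{\HH}(\Id-\Pm_{\Fm})\Nm_0\}.
\]
This difference vanishes iff $(\Id-\Pm_{\Fm})\Nm_0=\mathbf{0}$. To see that the real part loses nothing, test against real vectors, and note that the $\av$ and $j\av$ columns together generate complex combinations. The condition $(\Id-\Pm_{\Fm})\Nm_0=\mathbf{0}$ is exactly $\row(\Fm)\supseteq\mathcal{S}_K^\star$, because each $\beta_k\neq0$. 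This is identical to Theorem~\ref{thm:lowner}(iii), so I will cite it.

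For part (ii), I will count dimensions.
\begin{itemize}
\item The map $\Fm\mapsto\row(\Fm)$ sends the torus $\mathcal{F}_{\mathrm{PS}}$, of real dimension $L_RN_R$, into $\mathrm{Gr}(L_R,N_R)$.
\item Multiplying each row by a unit-modulus scalar, $\Fm\mapsto\diag(e^{j\phi_\ell})\Fm$, stays inside $\mathcal{F}_{\mathrm{PS}}$ and preserves the row space. These $L_R$-dimensional fibres reduce the image dimension to at most $L_RN_R-L_R=L_R(N_R-1)$.
\item Formally, the image is the image of a real-analytic (semialgebraic) map from a manifold of that quotient dimension. Its dimension is bounded by the generic rank of the differential, which is at most $L_R(N_R-1)$.
\item The deficit is $2L_R(N_R-L_R)-L_R(N_R-1)=L_R(N_R-2L_R+1)$, which is positive whenever $N_R\ge 2L_R$.
\item A semialgebraic image of lower dimension has Lebesgue (Haar) measure zero. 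The image of a smooth map from a lower-dimensional manifold already suffices, by the easy case of Sard's theorem. This gives the probability-one statement.
\end{itemize}

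For part (iii), fix $\Fm$ and set $\mathcal{S}=\row(\Fm)$, a proper subspace because $L_R<N_R$.
\begin{itemize}
\item Equality requires in particular $\av(\theta_1)\in\mathcal{S}$, i.e. $(\Id-\Pm_{\mathcal{S}})\av(\theta_1)=\mathbf{0}$.
\item For the ULA, the entries of $\av(\theta)$ are $e^{j\pi n\sin\theta}$. Pick a unit vector $\vv\perp\mathcal{S}$. The function $f(\theta)=\vv^{\HH}\av(\theta)=\sum_n\bar v_n e^{j\pi n\sin\theta}$ is real-analytic in $\theta$.
\item $f$ is not identically zero. Substitute $u=\sin\theta$: $f$ becomes a nonzero trigonometric polynomial in $u$ over an interval, so it has isolated zeros.
\item Hence $\{\theta_1:\av(\theta_1)\in\mathcal{S}\}$ is discrete, so it has measure zero in $\R$. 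The bad set in $\R^K$ is contained in the product of this set with $\R^{K-1}$, which has measure zero.
\item Off the bad set, (i) gives $\Jm_{\mathrm{PS}}\neq\Jm_{\dig}$. This alone does not yet give strictness of the marginal CRB.
\end{itemize}

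The main obstacle is upgrading ``not equal'' to the strict L\"owner inequality $\crb_{\mathrm{PS}}\succ\crb_{\dig}$. Strictness in the positive-definite sense is stronger than what the Fisher inequality alone delivers. My first attempt will show that the Schur-complement (marginal) FIMs differ, via the projected form
\[
\Jm_{\theta}=\tfrac{2\|\sv\|^2}{\sigma^2}\Re\{\Dm^{\HH}\Pm^{\perp}_{\Am,\mathcal{S}}\Dm\},
\]
where $\Dm=[\beta_k\dot{\av}(\theta_k)]_k$ and $\Pm^{\perp}_{\Am,\mathcal{S}}$ projects within $\mathcal{S}$ orthogonally to $\Pm_{\mathcal{S}}\Am$. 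I will then argue genericity again with an analytic-function argument. If full positive-definiteness of the gap proves too strong, I will fall back to reading ``$\succ$'' as $\succeq$ with $\neq$. The single-target case needs no such care: there the marginal CRB is a scalar, and ``$\neq$'' together with ``$\ge$'' gives ``$>$''. Finally, the MiLAC contrast is immediate from Lemma~\ref{lem:reachability} and Corollary~\ref{cor:zerogap}.
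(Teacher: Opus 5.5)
Your treatment of (i), (ii), and the measure-zero part of (iii) follows the paper's proof essentially step for step. You reuse Theorem~\ref{thm:lowner} for any full-row-rank $\Fm$, let the row-wise phase torus cut $L_RN_R$ down to $L_R(N_R-1)$, and use the nontrivial analytic function $\vv^{\HH}\av(\theta)$.

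The gap is at the end of (iii), where you suspected, but it starts one step earlier than you think. It is the same leap the paper makes when it writes ``by part~(i), it follows that'': the ``if and only if'' in (i) holds for the full FIM only. Off the set $\{\av(\theta_1)\in\row(\Fm)\}$ you know $\Jm_{\mathrm{PS}}\neq\Jm_{\dig}$, but this does not make the angle CRBs differ, not even for $K=1$. So your remark that the single-target case ``needs no such care'' is wrong.

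To see why, set $\mathcal{S}:=\row(\Fm)$, $\Dm:=[\beta_1\dot{\av}(\theta_1),\ldots,\beta_K\dot{\av}(\theta_K)]$, and let $\mathbf{r}_{\xv}:=(\Id-\Pm_{\col(\Am)})\Dm\xv$ be the digital residual. The Schur complement $\Jm_{\theta}(\mathcal{S})$ of the amplitude block satisfies, for every $\xv\in\R^{K}$,
\begin{equation*}
\xv^{\TT}\Jm_{\theta}(\mathcal{S})\xv=\tfrac{2\|\sv\|^{2}}{\sigma^{2}}\min_{\vv\in\C^{K}}\bigl\|\Pm_{\mathcal{S}}(\Dm\xv-\Am\vv)\bigr\|^{2}.
\end{equation*}
Evaluating at the digital minimizer, and using $\mathbf{r}_{\xv}\perp\col(\Am)$, shows that the deficit relative to the digital marginal in direction $\xv$ vanishes if and only if $\mathbf{r}_{\xv}\in\mathcal{S}$. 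Hence the angle CRB equals $\crb_{\dig}$ if and only if every $(\Id-\Pm_{\col(\Am)})\dot{\av}(\theta_k)$ lies in $\mathcal{S}$. This is strictly weaker than $\mathcal{S}\supseteq\mathcal{S}_{K}^{\star}(\vect{\theta})$. For example, with $K=1$, take $\mathcal{S}=\mathrm{span}\{(\Id-\Pm_{\col(\av)})\dot{\av},\,\av+\epsilon\mathbf{w}\}$ with $\mathbf{0}\neq\mathbf{w}\perp\mathcal{S}_{1}^{\star}(\theta)$ and $\epsilon\neq 0$. This subspace excludes $\av$, so its FIM differs from $\Jm_{\dig}$, yet it yields exactly $\crb_{\dig}(\theta)$.

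The same identity closes the gap and gives the strict inequality, so you need no fallback to ``$\succeq$ with $\neq$''. Since $\mathbf{r}_{\xv}\in\mathcal{S}_{K}^{\star}(\vect{\theta})$ and $\mathbf{r}_{\xv}=\mathbf{0}$ only for $\xv=\mathbf{0}$, you get $\crb_{\mathrm{PS}}\succ\crb_{\dig}$ whenever $\mathcal{S}\cap\mathcal{S}_{K}^{\star}(\vect{\theta})=\{\mathbf{0}\}$. Equivalently, with $\Vm_{\perp}$ a basis of $\mathcal{S}^{\perp}$, the matrix $\Vm_{\perp}^{\HH}[\av(\theta_1),\dot{\av}(\theta_1),\ldots,\av(\theta_K),\dot{\av}(\theta_K)]$ must have full column rank $2K$. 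This is the nonvanishing of a real-analytic function of $\vect{\theta}$, and it is dimensionally possible since $N_R-L_R\ge L_R\ge 2K$. So it holds for almost every $\vect{\theta}$ once you exhibit one witness.

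Build the witness target by target. Work in $u=\sin\theta$, where $[\av]_n=e^{j\pi nu}$ and $\dot{\av}=\cos\theta\,\partial_u\av$. The coordinates of $\gv(u):=\Vm_{\perp}^{\HH}\av$ are linearly independent trigonometric polynomials. A curve with at least two linearly independent coordinates cannot satisfy $\gv'(u)\parallel\gv(u)$ on an interval, since otherwise $\gv(u)=e^{\Lambda(u)}\gv(u_0)$ there. Apply this to $\gv$ projected orthogonally to the span of the previously chosen $\gv(u_i),\gv'(u_i)$; that projection still leaves at least two independent coordinates, so each new target adds two dimensions.

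This condition replaces your bad set $\{\av(\theta_1)\in\mathcal{S}\}$, which only controls the full FIM. The same argument with a basis of $\mathcal{S}$ in place of $\Vm_{\perp}$ (using $L_R\ge 2K$) makes $\Pm_{\mathcal{S}}$ injective on $\mathcal{S}_{K}^{\star}(\vect{\theta})$ for almost every $\vect{\theta}$. Then $\Jm_{\mathrm{PS}}\succ 0$ and $\crb_{\mathrm{PS}}$ is finite, which the statement tacitly assumes.
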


\begin{IEEEproof}
See Appendix~\ref{app:ps-strict}.
\end{IEEEproof}

Proposition~\ref{prop:ps-strict} promotes the phase-shifter combining curves of Section~\ref{sec:numerical} from a numerical observation to a theoretical statement: a given phase-shifter design meets the equality condition $\row(\Fm)\supseteq\mathcal{S}_{K}^{\star}(\vect{\theta})$ only on a measure-zero set of target configurations, and therefore leaves a strictly positive \ac{CRB} gap for almost every geometry. By contrast, the \ac{MiLAC} feasibility set $\mathcal{F}_{\MiLAC}$ covers the entire Stiefel manifold (Lemma~\ref{lem:reachability}) and can be tuned onto $\mathcal{S}_{K}^{\star}(\vect{\theta})$ exactly through the Gram--Schmidt construction.

The geometric interpretation behind this result is intuitive. The candidate combiner's row spaces form the Grassmannian, and matching the digital \ac{CRB} means landing on the one subspace $\mathcal{S}_{K}^{\star}(\vect{\theta})$ that the targets dictate. A \ac{MiLAC} can realize \emph{every} orthonormal row space, i.e., \emph{it covers the entire Stiefel manifold}, so it can always be steered onto $\mathcal{S}_{K}^{\star}(\vect{\theta})$. A phase-shifter combiner, whose entries are locked to constant modulus, can reach only a thin sliver of all possible row spaces, a set of strictly smaller dimension that therefore occupies zero volume in the Grassmannian, much as a curve occupies zero area in a plane. Consequently, a fixed phase-shifter design lands exactly on the target-dictated subspace only for exceptional geometries, and otherwise leaves a strictly positive gap. The exceptions are non-generic configurations, for example targets whose steering and derivative directions align with \ac{DFT} beams. In such special geometries, or when only a coarse angle estimate is required, a fixed phase-shifter combiner can still be adequate.

\begin{remark}[Extension to planar arrays]
\label{rem:upa}
All results above extend from the \ac{ULA} to the \ac{UPA}. For the \ac{UPA}, with azimuth $\varphi$ and elevation $\vartheta$, the per-target steering--derivative subspace $\mathrm{span}_{\C}\{\av(\varphi_k,\vartheta_k),\partial_{\varphi}\av(\varphi_k,\vartheta_k),\partial_{\vartheta}\av(\varphi_k,\vartheta_k)\}$ has dimension three rather than two, so $\mathcal{S}_{K}^{\star}$ has generic dimension $3K$. Every theorem, proposition, and lemma in this section carries over with the substitution $2\to 3$ per target: the zero-gap threshold becomes $L_R\ge 3K$, the identifiability threshold becomes $L_R\ge 2K$, and the stem-connected cost of Section~\ref{sec:hardware} becomes $3K(2N_R+1)$ tunable susceptances. Only the explicit \ac{ULA} aperture-scaling constants of Remark~\ref{rem:ula-K1} are array-specific, while analogous \ac{UPA} expressions follow from the Kronecker structure $\av(\varphi,\vartheta)=\av_x(\varphi)\otimes\av_y(\vartheta)$. We develop the theory for the \ac{ULA} because it yields the cleanest design rules.
\end{remark}

\begin{figure}[t]
\centering
\resizebox{0.8\linewidth}{!}{\begin{tikzpicture}[
  font=\footnotesize,
  >={Stealth[length=1.8mm,width=1.6mm]},
  aport/.style={circle, draw, thick, fill=blue!18,
                minimum size=2.8mm, inner sep=0pt},
  rport/.style={circle, draw, thick, fill=green!25,
                minimum size=2.8mm, inner sep=0pt},
  edge/.style={draw, black!55, line width=0.55pt, line cap=round},
  centraledge/.style={draw, black!80, line width=0.75pt, line cap=round},
  fanedge/.style={draw, black!70, line width=0.55pt, line cap=round, dash pattern=on 2pt off 1pt},
  gnd/.style={draw, black!80, line width=0.55pt, line cap=butt},
  susc/.style={draw, thick, fill=yellow!70, rectangle,
               minimum width=2.3mm, minimum height=1.35mm,
               inner sep=0pt},
  shunt/.style={draw, thick, fill=red!55, circle,
                minimum size=1.35mm, inner sep=0pt},
  cgdash/.style={draw, black!55, dashed, thick, rounded corners=3pt},
  panelttl/.style={font=\small\bfseries},
  shgndD/.pic={%
    \node[shunt] at (0,0) {};
    \draw[gnd] (-0.13,-0.13) -- (0.13,-0.13);
    \draw[gnd] (-0.09,-0.17) -- (0.09,-0.17);
    \draw[gnd] (-0.045,-0.21) -- (0.045,-0.21);
  }
]

\begin{scope}[shift={(0,0)}]
  \node[panelttl] at (0,2.4) {(a) Fully-connected MiLAC};

  \def\R{1.35}

  \node[aport] (A1) at ({\R*cos(150)},{\R*sin(150)}) {};
  \node[aport] (A2) at ({\R*cos(210)},{\R*sin(210)}) {};
  \node[aport] (A3) at ({\R*cos(270)},{\R*sin(270)}) {};
  \node[aport] (A4) at ({\R*cos( 90)},{\R*sin( 90)}) {};
  \node[rport] (R1) at ({\R*cos( 30)},{\R*sin( 30)}) {};
  \node[rport] (R2) at ({\R*cos(330)},{\R*sin(330)}) {};

  \foreach \x/\y in {A1/A2,A1/A3,A1/A4,A1/R1,A1/R2,
                     A2/A3,A2/A4,A2/R1,A2/R2,
                     A3/A4,A3/R1,A3/R2,
                     A4/R1,A4/R2,
                     R1/R2}{
    \draw[edge] (\x) -- (\y);
  }
  \foreach \x/\y/\f in {A1/A2/0.42, A1/A3/0.42, A1/A4/0.50, A1/R1/0.42, A1/R2/0.42,
                        A2/A3/0.50, A2/A4/0.42, A2/R1/0.42, A2/R2/0.42,
                        A3/A4/0.42, A3/R1/0.42, A3/R2/0.42,
                        A4/R1/0.50, A4/R2/0.50,
                        R1/R2/0.50}{
    \node[susc] at ($(\x)!\f!(\y)$) {};
  }
  \foreach \n/\a in {A1/150, A2/210, A3/270, A4/90, R1/30, R2/330}{
    \draw[edge] (\n) -- ({(\R+0.22)*cos(\a)},{(\R+0.22)*sin(\a)});
    \pic[rotate={\a+90}] at ({(\R+0.2)*cos(\a)},{(\R+0.2)*sin(\a)}) {shgndD};
  }

  \node[font=\scriptsize] at ({(\R+0.78)*cos(150)},{(\R+0.78)*sin(150)}) {$a_2$};
  \node[font=\scriptsize] at ({(\R+0.78)*cos(210)},{(\R+0.78)*sin(210)}) {$a_3$};
  \node[font=\scriptsize] at ({(\R+0.78)*cos(270)},{(\R+0.78)*sin(270)}) {$a_4$};
  \node[font=\scriptsize] at ({(\R+0.78)*cos( 90)},{(\R+0.78)*sin( 90)}) {$a_1$};
  \node[font=\scriptsize] at ({(\R+0.78)*cos( 10)},{(\R+0.78)*sin( 10)}) {$r_1$};
  \node[font=\scriptsize] at ({(\R+0.78)*cos(330)},{(\R+0.78)*sin(330)}) {$r_2$};
\end{scope}

\begin{scope}[shift={(5.5,0)}]
  \node[panelttl] at (0, 2.4) {(b) Stem-connected MiLAC};

  \def\KR{0.55}
  \def\SR{1.95}

  \coordinate (Cr1) at ({\KR*cos( 90)}, {\KR*sin( 90)});
  \coordinate (Cr2) at ({\KR*cos(-30)}, {\KR*sin(-30)});
  \coordinate (Ca1) at ({\KR*cos(210)}, {\KR*sin(210)});

  \coordinate (Ca2) at ({\SR*cos(150)}, {\SR*sin(150)});
  \coordinate (Ca3) at ({\SR*cos( 30)}, {\SR*sin( 30)});
  \coordinate (Ca4) at ({\SR*cos(270)}, {\SR*sin(270)});

  \draw[cgdash]
    ({-\KR-0.20}, {-\KR-0.18})
    rectangle
    ({ \KR+0.20}, { \KR+0.18});

  \foreach \x/\y in {Cr1/Cr2, Cr2/Ca1, Ca1/Cr1}{
    \draw[centraledge] (\x) -- (\y);
    \node[susc] at ($(\x)!0.5!(\y)$) {};
  }

  \foreach \cv/\f in {Cr1/0.42, Cr2/0.50, Ca1/0.40}{
    \draw[fanedge] (Ca2) -- (\cv);
    \node[susc] at ($(Ca2)!\f!(\cv)$) {};
  }
  \foreach \cv/\f in {Cr1/0.42, Cr2/0.40, Ca1/0.50}{
    \draw[fanedge] (Ca3) -- (\cv);
    \node[susc] at ($(Ca3)!\f!(\cv)$) {};
  }
  \foreach \cv/\f in {Cr1/0.50, Cr2/0.42, Ca1/0.42}{
    \draw[fanedge] (Ca4) -- (\cv);
    \node[susc] at ($(Ca4)!\f!(\cv)$) {};
  }

  \node[rport] at (Cr1) {};
  \node[rport] at (Cr2) {};
  \node[aport] at (Ca1) {};
  \node[aport] at (Ca2) {};
  \node[aport] at (Ca3) {};
  \node[aport] at (Ca4) {};

  \def\LL{0.26}
  \def\PP{0.50}

  \draw[edge] (Cr1) -- ($(Cr1)+({\LL*cos( 90)},{\LL*sin( 90)})$);
  \pic[rotate=180] at ($(Cr1)+({\PP*cos( 90)},{\PP*sin( 90)})$) {shgndD};
  \draw[edge] (Cr2) -- ($(Cr2)+({\LL*cos(-30)},{\LL*sin(-30)})$);
  \pic[rotate=60]  at ($(Cr2)+({\PP*cos(-30)},{\PP*sin(-30)})$) {shgndD};
  \draw[edge] (Ca1) -- ($(Ca1)+({\LL*cos(210)},{\LL*sin(210)})$);
  \pic[rotate=300] at ($(Ca1)+({\PP*cos(210)},{\PP*sin(210)})$) {shgndD};

  \draw[edge] (Ca2) -- ($(Ca2)+({\LL*cos(150)},{\LL*sin(150)})$);
  \pic[rotate=240] at ($(Ca2)+({\PP*cos(150)},{\PP*sin(150)})$) {shgndD};
  \draw[edge] (Ca3) -- ($(Ca3)+({\LL*cos( 30)},{\LL*sin( 30)})$);
  \pic[rotate=120] at ($(Ca3)+({\PP*cos( 30)},{\PP*sin( 30)})$) {shgndD};
  \draw[edge] (Ca4) -- ($(Ca4)+({\LL*cos(270)},{\LL*sin(270)})$);
  \pic[rotate=  0] at ($(Ca4)+({\PP*cos(270)},{\PP*sin(270)})$) {shgndD};

  \node[font=\scriptsize] at ($(Cr1) +({0.85*cos( 90)},{0.85*sin( 90)})$) {$r_1$};
  \node[font=\scriptsize] at ($(Cr2) +({0.85*cos(-30)},{0.85*sin(-30)})$) {$r_2$};
  \node[font=\scriptsize] at ($(Ca1) +({0.85*cos(210)},{0.85*sin(210)})$) {$a_1$};
  \node[font=\scriptsize] at ($(Ca2) +({0.95*cos(150)},{0.95*sin(150)})$) {$a_2$};
  \node[font=\scriptsize] at ($(Ca3) +({0.95*cos( 30)},{0.95*sin( 30)})$) {$a_3$};
  \node[font=\scriptsize] at ($(Ca4) +(0,-0.80)$) {$a_4$};

  \node[font=\scriptsize, anchor=south]
    (kqlbl) at (0.55, 1.78) {central graph};
  \draw[->, thick, black!65, shorten >=1pt]
    (kqlbl.south) -- (0.30, {\KR+0.20});
\end{scope}

\begin{scope}[shift={(-0.6,-3.05)}, font=\scriptsize]
  \node[aport]  (lA)  at (0,0) {};
  \node[right=1.2mm of lA, inner sep=1pt] (lAt) {antenna port};

  \node[rport]  (lR)  at ($(lAt.east)+(0.30,0)$) {};
  \node[right=1.2mm of lR, inner sep=1pt] (lRt) {RF-chain port};

  \node[susc]   (lS)  at ($(lRt.east)+(0.30,0)$) {};
  \node[right=1.2mm of lS, inner sep=1pt] (lSt) {tunable susceptance};

  \node[shunt]  (lH)  at ($(lSt.east)+(0.30,0)$) {};
  \node[right=1.2mm of lH, inner sep=1pt] {tunable shunt};
\end{scope}

\end{tikzpicture}}
\caption{Reciprocal lossless \ac{MiLAC} topologies for $N_R{=}4, L_R{=}2$: (a) fully-connected and (b) stem-connected.}
\label{fig:topologies}
\end{figure}

\section{Reduced-Complexity Hardware Design}
\label{sec:hardware}

Section~\ref{sec:crb-general} showed that $L_R\ge 2K$ \ac{RF} chains aligned to $\mathcal{S}_{K}^{\star}(\vect{\theta})$ match the fully-digital \ac{CRB} and that the corresponding combiner is realizable by some lossless reciprocal multiport, but left open how many tunable reactive components such a multiport actually requires. A dimension-counting argument lower-bounds this count linearly in $N_R$ for any Stiefel-universal \ac{MiLAC} class, and the stem-connected topology of~\cite{Nerini2025Reduced} attains the bound in the leading order, a sharp reduction from the $\mathcal{O}(N_R^{2})$ cost of a naive fully-connected realization. Section~\ref{subsec:lower-bound} relates this Stiefel-universality requirement to the point-to-point \ac{MIMO}-capacity argument of~\cite{Nerini2025Reduced}.

\subsection{Hardware cost of the fully-connected solution}
\label{subsec:fully-conn-cost}

A fully-connected reciprocal lossless $(N_R{+}L_R)$-port \ac{MiLAC} is a multiport microwave network (see Fig.~\ref{fig:topologies}(a)) in which every pair of ports is interconnected by a tunable reactive (purely imaginary) admittance, and every port additionally carries a tunable shunt-to-ground~\cite{Nerini2025Reduced}. Since the network is reciprocal, its admittance matrix $\Ym=\mathrm{j}\Bm$ is symmetric, where the susceptance matrix $\Bm\in\R^{M\times M}$ with $M:=N_R+L_R$ is real and symmetric. The tunable degrees of freedom are the $M$ diagonal entries (shunt susceptances) and the $\binom{M}{2}$ off-diagonal entries in the upper triangle, giving 
\begin{equation}
\label{eq:fully-conn-cost}
c_{\mathrm{fc}}(N_R,L_R)
=\tbinom{M+1}{2}=\mathcal{O}(N_R^{2})
\end{equation}
tunable susceptances. For a moderate array with $N_R=128$ and $L_R=2K$, evaluating~\eqref{eq:fully-conn-cost} gives $\binom{131}{2}=8515$ susceptances at $K=1$ and $\binom{137}{2}=9316$ at $K=4$. This quadratic scaling is the dominant hardware burden in large-aperture \ac{MiLAC} deployment and would offset much of the analog-hardware saving promised by the two-\ac{RF}-chain-per-target zero-gap theorem. Figure~\ref{fig:topologies} visualizes the topological difference underlying the $\mathcal{O}(N_R^2)\to\mathcal{O}(L_R N_R)$ reduction we set up.

\begin{remark}[Manifold interpretation of the fully-connected count]
\label{rem:manifold-count}
The count $M(M+1)/2$ is not accidental: it equals the real dimension of the manifold on which the scattering matrix $\Thetam$ lives, namely the set of $M\times M$ symmetric unitary matrices. The admittance-to-scattering map $\Thetam=(Y_0\Id-\mathrm{j}\Bm)(Y_0\Id+\mathrm{j}\Bm)^{-1}$~\cite{Nerini2025Reduced} is the Cayley transform, a smooth bijection between the real symmetric matrix $\Bm\in\R^{M\times M}$ and almost all such symmetric unitary matrices\footnote{The exceptional point corresponds to scattering matrices whose admittance representation does not exist. An alternative derivation via the Autonne--Takagi factorization~\cite{HornJohnson2013} gives the same count as $\dim_{\R}U(M)-\dim_{\R}O(M)=M^{2}-M(M-1)/2=M(M+1)/2$, where $U(M)$ and $O(M)$ are the $M\times M$ complex unitary and real orthogonal groups, respectively.}. Since a real symmetric $M\times M$ matrix has $M(M+1)/2$ independent entries, this manifold inherits the same real dimension, and the fully-connected topology has exactly the right number of tunable susceptances to sweep it. Whether the entire scattering manifold actually needs to be reached for \ac{CRB}-optimal sensing, and at what hardware floor, is the question taken up in Section~\ref{subsec:lower-bound}.
\end{remark}

\begin{table}[t]
\caption{Hardware complexity, architectural universality, and \ac{DoA} \ac{CRB} of receive architectures.}
\label{tab:complexity}
\centering
\footnotesize
\renewcommand{\arraystretch}{1.25}
\resizebox{\linewidth}{!}{%
\begin{tabular}{@{}l c c c c c@{}}
\toprule
Architecture & \ac{RF} chains $L_R$ & Tunable components & Scaling in $N_R$ & Stiefel-universal & \ac{DoA} \ac{CRB} \\
\midrule
Fully-digital baseline & $N_R$ & none & --- & --- & $\crb_{\dig}(\vect{\theta})$ \\
Fully-connected \ac{MiLAC} & $2K$ & $c_{\mathrm{fc}}=(N_R+L_R)(N_R+L_R+1)/2$ & $\mathcal{O}(N_R^{2})$ & \checkmark & $=\crb_{\dig}(\vect{\theta})$ \\
Stem-connected \ac{MiLAC} & $2K$ & $c_{\mathrm{stem}}=L_R(2N_R+1)$ & $\mathcal{O}(L_RN_R)$ & \checkmark & $\mathbf{{=}\crb_{\dig}(\vect{\theta})}$ \\
Stiefel-universal lower bound (Prop.~\ref{prop:lower-bound}) & $2K$ & $c_{\star}=2L_RN_R-L_R^{2}$ & $\mathcal{O}(L_RN_R)$ & \checkmark & $=\crb_{\dig}(\vect{\theta})$ \\
\bottomrule
\end{tabular}%
}
\end{table}

\subsection{A dimension-counting lower bound}
\label{subsec:lower-bound}

It is natural to ask whether the quadratic scaling is intrinsic to a digital-\ac{CRB}-preserving \ac{MiLAC}, or whether it can be reduced without sacrificing the zero \ac{CRB} gap. This parallels the capacity-literature trajectory in which the fully-connected \ac{MiLAC} of~\cite{Nerini2025Capacity} achieves the point-to-point \ac{MIMO} channel capacity with $\mathcal{O}(N_R^2)$ components and~\cite{Nerini2025Reduced} later showed that the stem-connected topology suffices with only $\mathcal{O}(L_RN_R)$ components. We mirror that trajectory: first ask what the minimum complexity of a \ac{CRB}-preserving \ac{MiLAC} is on purely dimension-theoretic grounds, then verify that the stem-connected topology attains it in the leading order. 

The link between the two settings is tighter than a loose analogy. To achieve the point-to-point \ac{MIMO} channel capacity for an \emph{arbitrary} channel, the receive \ac{MiLAC} of~\cite{Nerini2025Reduced} must realize that channel's capacity-optimal combiner, the projection onto its dominant left-singular subspace, and hence must reach \emph{every} orthonormal-column combiner. To attain the digital \ac{CRB} for an \emph{arbitrary} target configuration, the \ac{MiLAC} here must place its row space on the steering--derivative subspace $\mathcal{S}_{K}^{\star}(\vect{\theta})$, and hence must reach \emph{every} $2K$-dimensional subspace. By Proposition~\ref{prop:subspace}(ii) these two demands coincide: each holds precisely when the \ac{MiLAC} can realize an arbitrary orthonormal map, that is, when its reach covers the entire Stiefel manifold (Stiefel-universality, Definition~\ref{def:milac-class}). The two arguments therefore impose one and the same hardware requirement and differ only in the subspace the combiner must reach, the channel's eigen-subspace for communication versus the targets' steering--derivative subspace for sensing.

\begin{definition}[\ac{MiLAC} class, reach, Stiefel-universality]
\label{def:milac-class}
A \ac{MiLAC} class $\mathcal{C}$ is a triple $(\mathcal{T},\Phi,c)$, where $\mathcal{T}\subseteq\R^{c}$ is the parameter space of a fixed reciprocal lossless $(N_R{+}L_R)$-port interconnection topology with $c\ge 0$ tunable susceptances, and $\Phi:\mathcal{T}\to\C^{(N_R+L_R)\times(N_R+L_R)}$ is the real-analytic map assigning each susceptance vector $\mathbf{b}\in\mathcal{T}$ to the resulting symmetric unitary scattering matrix. The \emph{reach} of $\mathcal{C}$ is
\begin{equation}
\label{eq:reach-def}
\mathcal{G}(\mathcal{C}):=\{[\Phi(\mathbf{b})]_{21}:\mathbf{b}\in\mathcal{T}\}\subseteq\C^{L_R\times N_R}.
\end{equation}
We say that $\mathcal{C}$ is (i)~\emph{\ac{CRB}-preserving} if, for every $\vect{\theta}$, there exists $\Gm\in\mathcal{G}(\mathcal{C})$ with $\row(\Gm)\supseteq\mathcal{S}_{K}^{\star}(\vect{\theta})$, and (ii)~\emph{Stiefel-universal} if $\mathcal{G}(\mathcal{C})\supseteq\mathrm{St}(L_R,N_R)$.
\end{definition}

Stiefel-universality implies \ac{CRB}-preservation for every $\vect{\theta}$: any $L_R$-dimensional subspace of $\C^{N_R}$ containing $\mathcal{S}_{K}^{\star}(\vect{\theta})$ admits a row-isometric representative in $\mathrm{St}(L_R,N_R)$ by Proposition~\ref{prop:subspace}(ii), and by Stiefel-universality lies in $\mathcal{G}(\mathcal{C})$. The converse is false: a strictly \ac{CRB}-preserving class needs to cover only a $K$-parameter family of subspaces in $\mathrm{Gr}(L_R,N_R)$. We nevertheless target Stiefel-universality, for three reasons. First, a Stiefel-universal \ac{MiLAC} is deployable without prior knowledge of the \ac{DoA} distribution and is retunable to any target configuration at run time, whereas a strictly \ac{CRB}-preserving class would commit the hardware to a specific parameterized family of angles. Second, practical tracking requires the \ac{MiLAC} to realize $\Gm^{\star}(\hat{\vect{\theta}})$ over an open neighborhood of the nominal curve as $\hat{\vect{\theta}}$ is refined. Third, a Stiefel-universal topology handles arbitrary $K$ uniformly. We therefore study the minimum of $c(\mathcal{C})$ over Stiefel-universal classes.

\begin{proposition}[Lower bound on Stiefel-universal \ac{MiLAC} complexity]
\label{prop:lower-bound}
Let $\mathcal{C}=(\mathcal{T},\Phi,c)$ be a Stiefel-universal \ac{MiLAC} class. Then
\begin{equation}
\label{eq:lower-bound}
c\,\ge\,\dim_{\R}\mathrm{St}(L_R,N_R)\,=\,2L_RN_R-L_R^{2}.
\end{equation}
\end{proposition}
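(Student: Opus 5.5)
The argument is a dimension count. The map $\Psi:\mathcal{T}\to\C^{L_R\times N_R}$, $\Psi(\mathbf{b}):=[\Phi(\mathbf{b})]_{21}$, is real-analytic because it is a composition of $\Phi$ with a linear coordinate projection. Its image $\mathcal{G}(\mathcal{C})$ contains the complex Stiefel manifold $\mathrm{St}(L_R,N_R)$. That manifold is a compact real-analytic submanifold of $\C^{L_R\times N_R}\cong\R^{2L_RN_R}$. I will first record its dimension: it is cut out by the $L_R^{2}$ independent real equations $\Gm\Gm^{\HH}=\Id_{L_R}$, whose left side is Hermitian, and the differential of $\Gm\mapsto\Gm\Gm^{\HH}$ is surjective onto Hermitian matrices at every point. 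Hence $\dim_{\R}\mathrm{St}(L_R,N_R)=2L_RN_R-L_R^{2}$. Equivalently, $\mathrm{St}=U(N_R)/U(N_R-L_R)$ gives $N_R^{2}-(N_R-L_R)^{2}$, the same value.

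Suppose, for contradiction, that $c<2L_RN_R-L_R^{2}=:d$. The core step is that the image of a smooth map from a $c$-dimensional manifold, or from an open subset of $\R^{c}$, cannot contain a $d$-dimensional submanifold when $c<d$. I will argue this with Sard's theorem. If $\mathcal{T}$ is open in $\R^{c}$, compose $\Psi$ with a local chart of $\mathrm{St}$ restricted to $\Psi^{-1}(\mathrm{St})$. A cleaner route is to embed $\mathcal{T}$ into $\mathcal{T}\times\R^{d-c}$ and consider $\tilde\Psi(\mathbf{b},\mathbf{u}):=\Psi(\mathbf{b})$, but this is unnecessary. It suffices to note that the image of a $C^{1}$ map from a second-countable $c$-manifold into $\R^{n}$ has Hausdorff dimension at most $c$. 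This follows because such a map is locally Lipschitz, and Lipschitz maps do not increase Hausdorff dimension. The image is therefore $\mathcal{H}^{d}$-null when $c<d$. On the other hand, $\mathrm{St}(L_R,N_R)$ has positive $d$-dimensional Hausdorff measure, since it is a nonempty $d$-dimensional embedded submanifold. So $\mathrm{St}\subseteq\mathcal{G}(\mathcal{C})$ is impossible, and $c\ge d$ follows.

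The main obstacle is regularity and generality of $\mathcal{T}$. The definition only says $\mathcal{T}\subseteq\R^{c}$, which may not be open, and it allows $\Phi$ to be defined only there, for instance avoiding the Cayley singularity. I will handle this by assuming, as is implicit in "real-analytic map", that $\Phi$ extends analytically, hence locally Lipschitz, to an open neighbourhood of $\mathcal{T}$ in $\R^{c}$. Then $\mathcal{T}$ is contained in a countable union of compact sets on which the extension is Lipschitz. The image of each such set under a Lipschitz map has $\mathcal{H}^{d}$-measure zero because $\mathcal{H}^{d}(\R^{c})$-type sets are $\mathcal{H}^{d}$-null for $d>c$. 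Countable subadditivity finishes the argument. If no such extension is allowed, the bound can fail for pathological (space-filling) parametrizations, so this standing regularity assumption is exactly what the proof needs. I will state it explicitly. The row-isometric characterization of Lemma~\ref{lem:reachability} plays no role beyond guaranteeing that the hypothesis of Stiefel-universality is physically meaningful.
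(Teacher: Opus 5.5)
Your proof is correct and follows the paper's argument. The same constraint count with surjective differential gives $\dim_{\R}\mathrm{St}(L_R,N_R)=2L_RN_R-L_R^{2}$, and the contradiction comes from a smooth image of a $c$-parameter set being null in dimension $d>c$; the paper phrases this via a local chart and Sard's theorem, while you use Lipschitz maps and $\mathcal{H}^{d}$ in the ambient space, which neatly avoids composing a chart with $\Psi$ on the possibly non-open set $\Psi^{-1}(\mathcal{U})$. The regularity assumption you flag is automatically met: as in the paper's Step~2, $\Psi$ is a rational function of the susceptances through the Cayley map, and since $Y_0\Id+\mathrm{j}\Bm$ is invertible for every real symmetric $\Bm$, $\Psi$ extends real-analytically to all of $\R^{c}$.
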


\begin{IEEEproof}
See Appendix~\ref{app:lower-bound}.
\end{IEEEproof}

For $L_R$ fixed and $N_R$ large, the right-hand side of~\eqref{eq:lower-bound} is $\mathcal{O}(L_RN_R)$, a factor of $\mathcal{O}(N_R/L_R)$ smaller than the fully-connected upper bound~\eqref{eq:fully-conn-cost}. Thus any \ac{CRB}-preserving \ac{MiLAC} \emph{could}, in principle, have linear rather than quadratic hardware cost. The question is whether an actual interconnection topology attains it.

\subsection{Stem-connected MiLACs attain the lower bound}
\label{subsec:stem-attains}

The stem-connected \ac{MiLAC} of~\cite{Nerini2025Reduced} is a reciprocal lossless multiport whose associated graph has $N_V:=N_R+L_R$ vertices (one per port) and is a \emph{center graph}~\cite{Nerini2025Reduced} with center size $Q:=2L_R-1$ (see Fig.~\ref{fig:topologies}(b)). The $Q$ central vertices are pairwise connected (a complete subgraph $K_Q$) and each is connected to every one of the $N_V-Q=N_R-L_R+1$ non-central vertices, which themselves carry no inter-port edges. For capacity-achievability on the receiver side, the central vertices must include all $L_R$ \ac{RF}-chain ports together with any $L_R-1$ antenna ports~\cite{Nerini2025Reduced}. Every port additionally carries a tunable shunt-to-ground. The edge count is
\begin{equation}
\label{eq:edge-count}
N_E=\frac{Q(Q-1)}{2}+Q(N_V-Q),
\end{equation}
yielding the total tunable-component count
\begin{equation}
\label{eq:stem-cost}
c_{\mathrm{stem}}(N_R,L_R)=N_V+N_E=L_R(2N_R+1),
\end{equation}
and~\cite{Nerini2025Reduced} proves that as these susceptances range over their physically admissible open sets, the $(2,1)$-block of the resulting scattering matrix sweeps out exactly $\mathrm{St}(L_R,N_R)$. Hence the stem-connected class is \emph{Stiefel-universal} in the sense of Definition~\ref{def:milac-class}. The inverse map, namely recovering susceptance values $\mathbf{b}$ such that $[\Phi(\mathbf{b})]_{21}=\Gm$ for a target $\Gm\in\mathrm{St}(L_R,N_R)$, follows in closed form from Algorithm~2 of~\cite{Nerini2025Reduced} in $\mathcal{O}(L_R^{2}N_R)$ arithmetic operations.

Combining~\eqref{eq:stem-cost} with~\eqref{eq:lower-bound} yields
\begin{equation}
\label{eq:stem-gap}
c_{\mathrm{stem}}(N_R,L_R)-(2L_RN_R-L_R^{2})\,=\,L_R(L_R+1),
\end{equation}
so the stem-connected architecture exceeds the dimension-counting lower bound by only $L_R(L_R+1)$ tunable components, an additive constant independent of $N_R$. In particular, stem-connected is \emph{order-optimal}: it matches the leading coefficient $2L_R$ of the lower bound exactly. For the cases of principal interest, namely $L_R=2K$ chains with up to $K=4$ targets, the overhead is at most $2K(2K+1)\le 72$, negligible relative to the leading $4KN_R$ term.

\begin{theorem}[Stem-connected \ac{MiLAC} is \ac{CRB}-optimal]
\label{thm:stem}
Assume $2K\le N_R$ and $\dim\mathcal{S}_{K}^{\star}(\vect{\theta})=2K$. For every $L_R\ge 2K$, the stem-connected \ac{MiLAC} of~\cite{Nerini2025Reduced}, with $c_{\mathrm{stem}}(N_R,L_R)=L_R(2N_R+1)$ tunable susceptances, realizes a feasible $\Gm\in\mathcal{F}_{\MiLAC}$ with $\row(\Gm)\supseteq\mathcal{S}_{K}^{\star}(\vect{\theta})$ and therefore attains
\begin{equation*}
\Jm_{\MiLAC}(\vect{\xi};\Gm)=\Jm_{\dig}(\vect{\xi}),\;\crb_{\MiLAC}(\vect{\theta};\Gm)=\crb_{\dig}(\vect{\theta}).
\end{equation*}
The stem-connected architecture closes the zero \ac{CRB} gap with $\mathcal{O}(L_RN_R)=\mathcal{O}(KN_R)$ tunable components, a factor of $\mathcal{O}(N_R/L_R)$ fewer than the fully-connected realization of~\eqref{eq:fully-conn-cost}, and within an additive constant $L_R(L_R+1)$ of the absolute lower bound~\eqref{eq:lower-bound}.
\end{theorem}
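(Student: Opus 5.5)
The plan is to chain three earlier results: the Gram--Schmidt construction of Lemma~\ref{lem:reachability}, the Stiefel-universality of the stem-connected class established in~\cite{Nerini2025Reduced}, and the equality condition of Theorem~\ref{thm:lowner}(iii) in the form of Corollary~\ref{cor:zerogap}. The hardware counts then follow from the algebra already carried out in~\eqref{eq:edge-count}--\eqref{eq:stem-gap}.

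\emph{Step 1: build the target combiner.} Fix $\vect{\theta}$ with $\dim\mathcal{S}_{K}^{\star}(\vect{\theta})=2K\le N_R$ and $L_R\ge 2K$. Form $\Gm_{K}^{\star}\in\mathrm{St}(L_R,N_R)$ as in Lemma~\ref{lem:reachability}. Gram--Schmidt the $2K$ vectors $\{\av(\theta_k),\dot{\av}(\theta_k)\}_{k=1}^{K}$ into orthonormal rows, then append $L_R-2K$ orthonormal rows from $\mathcal{S}_{K}^{\star}(\vect{\theta})^{\perp}$. This complement has dimension $N_R-2K\ge L_R-2K$ since $L_R\le N_R$. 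By construction $\row(\Gm_{K}^{\star})\supseteq\mathcal{S}_{K}^{\star}(\vect{\theta})$.

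\emph{Step 2: realize it with the stem-connected network.} The stem-connected class is Stiefel-universal in the sense of Definition~\ref{def:milac-class}, which is the result of~\cite{Nerini2025Reduced} quoted above. Its reach therefore contains $\Gm_{K}^{\star}$, so there is an admissible susceptance vector $\mathbf{b}$ with $[\Phi(\mathbf{b})]_{21}=\Gm_{K}^{\star}$. This $\mathbf{b}$ is computed in closed form by Algorithm~2 of~\cite{Nerini2025Reduced}.

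$\Phi(\mathbf{b})$ is the scattering matrix of a lossless reciprocal network, so it is symmetric unitary and satisfies~\eqref{eq:milac-constraints}. Hence $\Gm_{K}^{\star}\in\mathcal{F}_{\MiLAC}$ is realized by the stem-connected topology itself, not merely by some abstract symmetric unitary completion.

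The only point needing care is that the cited Stiefel-universality covers all of $\mathrm{St}(L_R,N_R)$, not only a generic open subset. If the sweep were only shown to be dense or generic, I would fall back on Proposition~\ref{prop:subspace}(i). The FIM depends only on $\row(\Gm)$, so it suffices to reach some row-isometric $\Gm'=\Um\Gm_{K}^{\star}$ with $\Um$ unitary $L_R\times L_R$. Choosing $\Um$ generically should place $\Gm'$ in the reachable set. This is the step I expect to be the main obstacle, since it rests entirely on the precise form of the result in~\cite{Nerini2025Reduced}.

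\emph{Step 3: conclude the information claims and the counts.} Since $\Gm_{K}^{\star}$ has full row rank, is feasible, and satisfies the containment, Corollary~\ref{cor:zerogap} gives $\Jm_{\MiLAC}(\vect{\xi};\Gm_{K}^{\star})=\Jm_{\dig}(\vect{\xi})$. The marginal CRBs then coincide, because both are obtained from the same FIM by the same Schur complement and inversion.

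For the complexity statements, substitute $Q=2L_R-1$ and $N_V=N_R+L_R$ into~\eqref{eq:edge-count}. This gives $N_E=(2L_R-1)(L_R-1)+(2L_R-1)(N_R-L_R+1)=(2L_R-1)N_R$, and adding the $N_V$ shunts yields $c_{\mathrm{stem}}=L_R(2N_R+1)$.

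Comparing with the fully-connected count~\eqref{eq:fully-conn-cost}, the ratio is $\Theta(N_R/L_R)$. Comparing with Proposition~\ref{prop:lower-bound}, the difference is $L_R(2N_R+1)-(2L_RN_R-L_R^{2})=L_R(L_R+1)$, which is~\eqref{eq:stem-gap}. With $L_R=2K$ this is $\mathcal{O}(KN_R)$.
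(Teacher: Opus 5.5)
Your proposal is correct and follows the paper's proof essentially step for step. You build $\Gm_{K}^{\star}$ by Gram--Schmidt plus complement rows, realize it using the capacity-achieving (i.e., Stiefel-universal) property of the stem-connected center graph from~\cite{Nerini2025Reduced}, conclude the FIM and CRB equalities via Theorem~\ref{thm:lowner}(iii)/Corollary~\ref{cor:zerogap}, and read the counts off~\eqref{eq:edge-count}--\eqref{eq:stem-gap}. The differences are minor: you obtain membership in $\mathcal{F}_{\MiLAC}$ directly from the stem network's symmetric unitary scattering matrix rather than from Lemma~\ref{lem:reachability}, and your generic-$\Um$ fallback is not needed because the cited result covers all of $\mathrm{St}(L_R,N_R)$ (it would also need separate justification, since $\Gm\mapsto\Um\Gm$ leaves $\row(\Gm)$ unchanged and so cannot escape any degeneracy that depends only on the row space).
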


\begin{IEEEproof}
See Appendix~\ref{app:thm-stem}.
\end{IEEEproof}

Theorem~\ref{thm:stem} is the central hardware result of the paper. It closes the loop from information-theoretic \ac{CRB}-optimality (Theorem~\ref{thm:lowner}), through feasibility by a lossless reciprocal multiport (Lemma~\ref{lem:reachability}), all the way to a concrete physically-realized architecture with closed-form synthesis and linear-in-$(N_R,K)$ component count. 

\begin{remark}[Tightness of the lower bound]
\label{rem:tightness}
We do not claim that the $L_R(L_R+1)$ additive overhead of stem-connected over the lower bound~\eqref{eq:lower-bound} is unavoidable. The bound is a pure dimension-counting argument and does not exploit the physical reactive-component constraints. Whether a reciprocal lossless interconnection topology with exactly $2L_RN_R-L_R^{2}$ tunable susceptances that is Stiefel-universal exists is, to our knowledge, an open problem. The $\mathcal{O}(L_R^{2})$ overhead is of theoretical interest only and has no bearing on the asymptotic $\mathcal{O}(N_R/L_R)$ saving of stem-connected over the fully-connected synthesis.
\end{remark}

\subsection{Hardware complexity summary}
\label{subsec:complexity-summary}

Table~\ref{tab:complexity} summarizes the four architectures studied in this paper at the zero-gap threshold $L_R=2K$ of Corollary~\ref{cor:zerogap}, across five axes: number of \ac{RF} chains, exact tunable-component count, asymptotic scaling in $N_R$, Stiefel-universality, and achievable \ac{CRB} relative to the fully-digital benchmark.

To put the numbers in perspective, take $N_R=128$. At $K=1$ ($L_R=2$), the fully-connected \ac{MiLAC} needs $\binom{131}{2}=8515$ tunable susceptances, the Stiefel-universal lower bound evaluates to $508$, and the stem-connected \ac{MiLAC} needs $2(257)=514$, exceeding the lower bound by only $6$ components and yielding a $16.6\times$ reduction over the fully-connected count. At $K=4$ ($L_R=8$), the fully-connected count grows to $9316$, the lower bound is $1984$, and the stem-connected \ac{MiLAC} uses $8(257)=2056$ components, a $4.5\times$ reduction.

\section{Numerical Simulations}
\label{sec:numerical}

We numerically validate the theory of Section~\ref{sec:crb-general} on a half-wavelength \ac{ULA}. The experiments traverse per-target \ac{CRB} versus \ac{SNR} (Section~\ref{subsec:sim-snr}), aperture scaling versus $N_R$ (Section~\ref{subsec:sim-Nscaling}), the zero-gap threshold evaluation (Section~\ref{subsec:sim-LR}), per-target \ac{CRB} versus angular separation (Section~\ref{subsec:sim-geom}), and steering-mismatch robustness (Section~\ref{subsec:sim-mismatch}). Hardware complexity is not separately plotted: the closed-form counts of Section~\ref{sec:hardware} already capture the fully-connected versus stem-connected tradeoff.

\begin{table}[t]
\caption{Default simulation parameters.}
\label{tab:sim-defaults}
\centering
\footnotesize
\renewcommand{\arraystretch}{1.18}
\begin{tabular}{l c}
\toprule
Parameter & Value \\
\midrule
Array geometry & half-wavelength \ac{ULA} \\
Steering vector, $[\av(\theta)]_n$ & $e^{j\pi n\sin\theta}$, $n=0,\ldots,N_R{-}1$ \\
Number of snapshots, $T$ & $50$ (unit-energy symbols) \\
Pre-combining noise, $\nv_t$ & $\nv_t\sim\mathcal{CN}(\mathbf{0},\sigma^{2}\Id_{N_R})$, $\sigma^{2}=1$ \\
Number of antennas, $N_R$ & $32$  \\
Number of \ac{RF} chains, $L_R$ & $2K$ \\
Single-target angle, $\theta$ & $20^{\circ}$ \\
Two-target angles, $(\theta_1,\theta_2)$ & $(15^{\circ},25^{\circ})$ \\
SNR definition & $|\beta_k|^{2}/\sigma^{2}$ \\
Default SNR & $10$~dB \\
Target amplitude phases, $\arg\beta_k$ & $\mathcal{U}[0,2\pi)$ \\
\acs{MLE} Monte Carlo trials & $100$ \\
\bottomrule
\end{tabular}
\end{table}

\subsection{Default simulation settings and baselines}
\label{subsec:sim-defaults}

The default simulation parameters are collected in Table~\ref{tab:sim-defaults}. Per-figure deviations from these defaults are flagged in the corresponding captions only when they occur. Throughout, ``per-target \ac{CRB}'' is the diagonal entry of $\crb(\vect{\theta};\Gm)$ associated with the $k$-th target, reported in $\deg^{2}$. We compare three receive front ends throughout.
\begin{itemize}
\item \emph{Digital baseline:} $\Gm=\Id_{N_R}$ with $L_R=N_R$ \ac{RF} chains, attaining $\crb_{\dig}$.
\item \emph{Optimal \ac{MiLAC} $\Gm^{\star}$:} The row-isometric combiner whose rows span the dominant left-singular subspace of $\Wm(\vect{\theta})=[\av(\theta_1),\dot{\av}(\theta_1),\ldots,\av(\theta_K),\dot{\av}(\theta_K)]\in\C^{N_R\times 2K}$. Explicitly, taking the compact \ac{SVD} $\Wm=\Um\Sigmam\Vm^{\HH}$, we set $\Gm^{\star}=\Um_{[:,1:L_R]}^{\HH}$. For $L_R\ge 2K$ this realizes $\row(\Gm^{\star})\supseteq\mathcal{S}_{K}^{\star}(\vect{\theta})$ so the zero-gap condition of Corollary~\ref{cor:zerogap} is met, and for $L_R<2K$ the truncated \ac{SVD} retains the highest-energy directions of $\Wm$.
\item \emph{Phase-shifter combiner:} An $L_R\times N_R$ constant-modulus matrix $\Fm$ with $[\Fm]_{\ell,n}=\tfrac{1}{\sqrt{N_R}}e^{j\phi_{\ell,n}}$. To compare against the \ac{MiLAC} on a like-for-like basis, we let the phase-shifter combiner exploit the \emph{same} angle prior: $\Fm$ is the phase-projection of the optimal \ac{MiLAC} combiner, $\Fm=\tfrac{1}{\sqrt{N_R}}e^{j\angle\Gm^{\star}}$, i.e., the entrywise phase of $\Gm^{\star}$ renormalized to unit modulus. This is the natural unit-modulus counterpart of $\Gm^{\star}$ and keeps the row space as close as a constant-modulus matrix can to $\mathcal{S}_{K}^{\star}(\vect{\theta})$.
\end{itemize}

Since the \ac{CRB} is a lower bound, to confirm that it is attainable in practice, we also report the empirical \ac{MSE} of an actual estimator. Given a combiner $\Gm$, a shared known pilot $\{s_t\}_{t=1}^{T}$, and combined snapshots $\{\zv_t\}_{t=1}^{T}$, we use the exact concentrated $K$-target \ac{MLE}. The complex amplitudes $\vect{\beta}$ are profiled out in closed form, leaving the \ac{DoA} cost
\begin{equation}
\label{eq:mle-conc}
\ell(\vect{\theta})\,=\,\bar{\zv}^{\HH}\Rm_{\Gm}^{-1}\Mm(\vect{\theta})\bigl(\Mm(\vect{\theta})^{\HH}\Rm_{\Gm}^{-1}\Mm(\vect{\theta})\bigr)^{\!-1}\Mm(\vect{\theta})^{\HH}\Rm_{\Gm}^{-1}\bar{\zv},\notag
\end{equation}
where $\Mm(\vect{\theta})=\Gm\Am(\vect{\theta})$ and $\bar{\zv}=\sum_{t}s_{t}^{\ast}\zv_{t}$. The maximizer $\hat{\vect{\theta}}_{\mathrm{ML}}=\argmax_{\vect{\theta}}\ell(\vect{\theta})$ is obtained by a $K$-dimensional coordinate-wise grid search with parabolic-vertex polish. Under standard regularity $\hat{\vect{\theta}}_{\mathrm{ML}}$ is asymptotically efficient, so the Monte Carlo \ac{MSE} curve provides a direct empirical witness that the \ac{CRB} is tight for an implementable estimator. 

\subsection{Per-target CRB versus SNR}
\label{subsec:sim-snr}

Figure~\ref{fig:multi-snr} plots the average per-target \ac{CRB} versus \ac{SNR} for $K=2$ at the default $N_R=32$. Four observations confirm the theory. \emph{First,} the optimal \ac{MiLAC} at the zero-gap threshold $L_R=2K=4$ overlaps the digital curve, matching $\crb_{\MiLAC}^{\star}=\crb_{\dig}$ in Corollary~\ref{cor:zerogap}. \emph{Second,} reducing the front end by a single \ac{RF} chain to $L_R=3$ incurs a finite, \ac{SNR}-independent penalty of $\crb_{\MiLAC}^{\star}/\crb_{\dig}\approx 1.58$. Identifiability is preserved here because $L_R=3\ge\lceil 3K/2\rceil=3$ (Remark~\ref{rem:thresholds}), but the truncated row space loses one Fisher-information direction. \emph{Third,} the prior-informed phase-shifter combiner with the same $L_R=4$ sits about $1.7\times$ above the optimum: although it uses the same angle prior as the \ac{MiLAC}, the unit-modulus constraint keeps its row space from exactly containing $\mathcal{S}_{K}^{\star}(\vect{\theta})$, leaving a small but non-zero gap (Proposition~\ref{prop:ps-strict}). \emph{Fourth,} the empirical \ac{MLE} \ac{MSE} through both the optimal \ac{MiLAC} (red $\times$) and the digital baseline (purple $+$) sits on the $\crb_{\dig}=\crb_{\MiLAC}^{\star}$ curve at every \ac{SNR} (over $100$ Monte Carlo trials), confirming that the bound is attainable by an implementable estimator.

\begin{figure}[!tb]
\centering
\begin{tikzpicture}
\begin{semilogyaxis}[
    width=12cm, height=7.5cm,
    xlabel={SNR [dB]},
    ylabel={DoA CRB / MSE $[\deg^{2}]$},
    xmin=-5, xmax=25,
    legend pos=south west,
    legend cell align={left},
    legend style={font=\tiny, fill=white, fill opacity=0.85, draw opacity=1, text opacity=1},
]
\addplot[black, very thick] table[x=snr_db, y=dig_deg2] {figures/data/multi_crb_vs_snr.dat};
\addlegendentry{Digital ($L_R{=}N_R$)}
\addplot[blue, only marks, mark=o, mark size=2pt, mark options={line width=0.8pt}]
    table[x=snr_db, y=lr4_deg2] {figures/data/multi_crb_vs_snr.dat};
\addlegendentry{MiLAC opt.\ $L_R{=}2K{=}4$}
\addplot[orange, dashed, thick, mark=square, mark size=1.8pt, mark options={solid}]
    table[x=snr_db, y=lr3_deg2] {figures/data/multi_crb_vs_snr.dat};
\addlegendentry{MiLAC opt.\ $L_R{=}3$}
\addplot[green!60!black, dotted, thick, mark=triangle, mark size=2pt, mark options={solid}]
    table[x=snr_db, y=hyb4_deg2] {figures/data/multi_crb_vs_snr.dat};
\addlegendentry{Phase-shifter combining $L_R{=}4$}
\addplot[red, only marks, mark=x, mark size=2.6pt, mark options={line width=1pt}]
    table[x=snr_db, y=mle_opt_deg2] {figures/data/multi_crb_vs_snr.dat};
\addlegendentry{MLE through opt.\ MiLAC ($100$ MC)}
\addplot[purple, only marks, mark=+, mark size=2.8pt, mark options={line width=1pt}]
    table[x=snr_db, y=mle_dig_deg2] {figures/data/multi_crb_vs_snr.dat};
\addlegendentry{MLE through digital ($100$ MC)}
\end{semilogyaxis}
\end{tikzpicture}
\caption{Average per-target \ac{CRB} (lines) and \ac{MLE} Monte Carlo \ac{MSE} (markers) versus \ac{SNR} when $K=2$.}
\label{fig:multi-snr}
\end{figure}

\subsection{Antenna count scaling: CRB versus $N_R$}
\label{subsec:sim-Nscaling}

Figure~\ref{fig:multi-N} plots the per-target \ac{CRB} versus $N_R$ for $K=2$ targets. The digital and optimal-\ac{MiLAC} curves coincide and lie on the $\mathcal{O}(N_R^{-3})$ reference slope of Proposition~\ref{prop:ula-scaling}, confirming that allocating $L_R=2K$ \ac{RF} chains preserves the cubic aperture gain. The deviation below $N_R=16$ is a resolution effect: there the beamwidth $\approx 60^\circ/N_R$ is comparable to the $10^\circ$ separation, so the marginally resolved targets inflate the \ac{CRB} until the aperture resolves them. The prior-informed phase-shifter combiner tracks the same slope at a nearly constant $1.5$--$1.9\times$ ($2$--$3$~dB) offset, the residual price of the unit-modulus constraint that the \ac{MiLAC} removes by realizing $\mathcal{S}_{K}^{\star}(\vect{\theta})$ exactly. The two overlaid \ac{MLE} \ac{MSE} markers (red $\times$ through the optimal \ac{MiLAC}, purple $+$ through the digital baseline) sit on top of their respective \ac{CRB} curves at every $N_R$, confirming again that the $N_R^{-3}$ aperture gain is realized by an implementable estimator and not only by the bound.

\begin{figure}[!tb]
\centering
\begin{tikzpicture}
\begin{loglogaxis}[
    width=12cm, height=7.5cm,
    xlabel={Number of receive antennas $N_R$},
    ylabel={CRB / MSE $[\deg^{2}]$},
    xmin=7, xmax=140,
    legend pos=south west,
    legend cell align={left},
    legend style={font=\tiny, fill=white, fill opacity=0.85, draw opacity=1, text opacity=1},
]
\addplot[black, very thick] table[x=N_R, y=dig_deg2] {figures/data/multi_crb_vs_N.dat};
\addlegendentry{Digital ($L_R{=}N_R$)}
\addplot[blue, only marks, mark=o, mark size=2.2pt, mark options={line width=0.8pt}]
    table[x=N_R, y=opt_deg2] {figures/data/multi_crb_vs_N.dat};
\addlegendentry{MiLAC $L_R{=}2K{=}4$, optimal}
\addplot[green!60!black, dotted, thick, mark=triangle, mark size=2pt, mark options={solid}]
    table[x=N_R, y=hyb_deg2] {figures/data/multi_crb_vs_N.dat};
\addlegendentry{Phase-shifter combining $L_R{=}4$}
\addplot[black, dashed] table[x=N_R, y=ref_deg2] {figures/data/multi_crb_vs_N.dat};
\addlegendentry{$\propto N_R^{-3}$ reference}
\addplot[red, only marks, mark=x, mark size=2.4pt, mark options={line width=1pt}]
    table[x=N_R, y=mle_opt_deg2] {figures/data/multi_crb_vs_N.dat};
\addlegendentry{MLE through opt.\ MiLAC ($100$ MC)}
\addplot[purple, only marks, mark=+, mark size=2.6pt, mark options={line width=1pt}]
    table[x=N_R, y=mle_dig_deg2] {figures/data/multi_crb_vs_N.dat};
\addlegendentry{MLE through digital ($100$ MC)}
\end{loglogaxis}
\end{tikzpicture}
\caption{Per-target \ac{CRB} versus $N_R$ when $K=2$.}
\label{fig:multi-N}
\end{figure}

\subsection{Zero-gap threshold evaluation}
\label{subsec:sim-LR}

The three-regime structure predicted by Remark~\ref{rem:thresholds} first becomes visually unambiguous at $K=3$, where the identifiability threshold $\lceil 3K/2\rceil=5$ and the zero-gap threshold $2K=6$ leave the unidentifiable regime ($L_R\le 4$), the intermediate-suboptimal regime ($L_R=5$), and the zero-gap regime ($L_R\ge 6$). Figure~\ref{fig:multi-K3} plots the \ac{CRB} ratio at $K=3$. For $L_R\le 4<5$ the $3K=9$-dimensional parameter vector is unidentifiable from $2L_R\le 8$ real measurements per snapshot and the \ac{CRB} is infinite (shown as red crosses at the top of the panel). At $L_R=5$ the problem becomes identifiable, and the optimal \ac{MiLAC} attains a finite, strictly positive gap of $\crb_{\MiLAC}^{\star}/\crb_{\dig}\approx 1.6$, instantiating the intermediate-regime penalty of Remark~\ref{rem:thresholds}. At $L_R=6=2K$ the gap closes exactly to unity and remains there for $L_R\ge 6$, matching Corollary~\ref{cor:zerogap}.

\begin{figure}[!tb]
\centering
\begin{tikzpicture}
\begin{semilogyaxis}[
    width=12cm, height=7.5cm,
    xlabel={Number of RF chains $L_R$},
    ylabel={$\mathrm{CRB}^{\star}_{\mathrm{MiLAC}}/\mathrm{CRB}_{\mathrm{dig}}$},
    xmin=0.5, xmax=10.5,
    xtick={1,2,3,4,5,6,7,8,9,10},
    ymin=0.6, ymax=1e3,
    legend cell align={left},
    legend style={at={(0.98,0.62)}, anchor=east, font=\tiny,
                  fill=white, fill opacity=0.85, draw opacity=1, text opacity=1},
]
\fill[red, opacity=0.10] (axis cs:0.5,0.6) rectangle (axis cs:4.5,1e3);
\fill[orange, opacity=0.18] (axis cs:4.5,0.6) rectangle (axis cs:5.5,1e3);
\fill[green!60!black, opacity=0.10] (axis cs:5.5,0.6) rectangle (axis cs:10.5,1e3);

\node[red!50!black, font=\tiny\bfseries] at (axis cs:2.5,400) {non-identifiable};
\node[orange!60!black, font=\tiny\bfseries, align=center] at (axis cs:5,400) {ident.\\+ gap};
\node[green!40!black, font=\tiny\bfseries] at (axis cs:8,400) {zero gap};

\addplot[red, only marks, mark=x, mark size=3pt, mark options={line width=1.2pt}]
    coordinates {(1,80) (2,80) (3,80) (4,80)};
\addlegendentry{CRB $=\infty$ (non-identifiable)}

\draw[red, dotted, thick] (axis cs:5,0.6) -- (axis cs:5,1e3);
\draw[green!60!black, dotted, thick] (axis cs:6,0.6) -- (axis cs:6,1e3);
\draw[black, dashed] (axis cs:0.5,1) -- (axis cs:10.5,1);

\addplot[blue, thick, mark=o, mark size=2.5pt, mark options={fill=white, line width=0.9pt}]
    table[x=LR, y=ratio] {figures/data/multi_identifiability_K3.dat};
\addlegendentry{MiLAC optimal CRB ratio}
\end{semilogyaxis}
\end{tikzpicture}
\caption{Per-target \ac{CRB} ratio versus $L_R$ at $K=3$, where red crosses mark the non-identifiable regime $L_R\le 4$.}
\label{fig:multi-K3}
\end{figure}

This three-regime structure is not special to $K=3$. Figure~\ref{fig:multi-K} sweeps the chain budget for $K\in\{1,2,3,4\}$ at $N_R=32$ and plots the same ratio against the offset $L_R-2K$. For every $K$, the ratio equals one throughout the zero-gap regime $L_R\ge 2K$ (offset $\ge 0$) and rises as soon as a chain is removed, so the onset of zero gap sits at offset zero independently of $K$. The curves part company only below the threshold: at offset $-1$ the problem stays identifiable for $K\ge 2$ with a finite intermediate-regime penalty (about $2.1$, $1.6$, and $1.5$ for $K=2,3,4$, the $K=3$ value reproducing the $L_R=5$ entry of Fig.~\ref{fig:multi-K3}), while deeper offsets are unidentifiable. The zero-gap threshold $L_R\ge 2K$ therefore grows linearly with the target count and is otherwise independent of it, confirming Corollary~\ref{cor:zerogap}.

\begin{figure}[!tb]
\centering
\begin{tikzpicture}
\begin{semilogyaxis}[
    width=12cm, height=7.5cm,
    ymode=log,
    ymin=0.55, ymax=200,
    xlabel={Offset $L_R-2K$},
    ylabel={$\mathrm{CRB}^{\star}_{\mathrm{MiLAC}}/\mathrm{CRB}_{\mathrm{dig}}$},
    xmin=-2, xmax=3,
    xtick={-3,-2,-1,0,1,2,3},
    legend pos=north east,
    legend cell align={left},
    legend style={font=\tiny, fill=white, fill opacity=0.85, draw opacity=1, text opacity=1},
]
\draw[black, dashed] (axis cs:-3.3,1) -- (axis cs:3.3,1);
\draw[black, dashed] (axis cs:0,0.55) -- (axis cs:0,200);
\node[black, font=\tiny, anchor=south] at (axis cs:0,80) {$L_R{=}2K$};
\addplot[blue, thick, mark=o, mark size=2pt, mark options={fill=white, line width=0.8pt}]
    table[x=offset, y=ratio] {figures/data/multi_crb_vs_K_K1.dat};
\addlegendentry{$K{=}1$}
\addplot[orange, thick, mark=square, mark size=2pt, mark options={fill=white, line width=0.8pt}]
    table[x=offset, y=ratio] {figures/data/multi_crb_vs_K_K2.dat};
\addlegendentry{$K{=}2$}
\addplot[green!60!black, thick, mark=triangle, mark size=2pt, mark options={fill=white, line width=0.8pt}]
    table[x=offset, y=ratio] {figures/data/multi_crb_vs_K_K3.dat};
\addlegendentry{$K{=}3$}
\addplot[red, thick, mark=diamond, mark size=2pt, mark options={fill=white, line width=0.8pt}]
    table[x=offset, y=ratio] {figures/data/multi_crb_vs_K_K4.dat};
\addlegendentry{$K{=}4$}
\end{semilogyaxis}
\end{tikzpicture}
\caption{Per-target \ac{CRB} ratio versus the chain-budget offset $L_R-2K$, for $K\in\{1,2,3,4\}$ at $N_R=32$.}
\label{fig:multi-K}
\end{figure}

\subsection{Per-target CRB versus angular separation}
\label{subsec:sim-geom}

Figure~\ref{fig:multi-sep} shows the average per-target \ac{CRB} as a function of the angular separation $\Delta=\theta_2-\theta_1$, with $\theta_1=10^{\circ}$ fixed. We deviate from the default $N_R=32$ and set $N_R=16$ for this experiment so that the natural beamwidth $\theta_{\mathrm{b}}\approx 60^{\circ}/N_R\approx 3.8^{\circ}$ places the closely-spaced regime well inside the swept range $\Delta\in[2^{\circ},50^{\circ}]$. The $L_R=4=2K$ optimal \ac{MiLAC} tracks the digital \ac{CRB} at every separation, including the closely-spaced regime ($\Delta$ below $5^{\circ}$) where the \ac{CRB} rises steeply because the two steering vectors become highly correlated. With $L_R=3<2K$, the truncated three-dimensional row space cannot fully cover the four steering and derivative directions, so a finite gap to digital appears at every separation. The gap is moderate at small $\Delta$ and grows at intermediate and large $\Delta$ as the four columns of $[\av(\theta_1),\dot{\av}(\theta_1),\av(\theta_2),\dot{\av}(\theta_2)]$ become more linearly independent and the discarded singular direction carries an increasingly non-negligible share of Fisher information. The trace also oscillates for $\Delta$ above $30^{\circ}$, an artifact of \ac{SVD} truncation: at several values of $\Delta$ the third and fourth singular values cross, swapping which column the dominant three-dimensional left-singular subspace retains. The overlaid \ac{MLE} \ac{MSE} markers sit on the \ac{CRB} curves across the full range, including the steep $\Delta\to 0$ rise: the bound is sharp at the resolution boundary as well as at wide separations.

\begin{figure}[!tb]
\centering
\begin{tikzpicture}
\begin{semilogyaxis}[
    width=12cm, height=7.5cm,
    xlabel={Target separation $\Delta$ [deg]},
    ylabel={CRB / MSE $[\deg^{2}]$},
    xmin=0, xmax=52,
    legend pos=north east,
    legend cell align={left},
    legend style={font=\tiny, fill=white, fill opacity=0.85, draw opacity=1, text opacity=1},
]
\addplot[black, very thick] table[x=delta_deg, y expr={0.5*(\thisrow{dig_t1_deg2}+\thisrow{dig_t2_deg2})}]
    {figures/data/multi_crb_vs_sep.dat};
\addlegendentry{Digital}
\addplot[blue, only marks, mark=o, mark size=1.8pt, mark options={line width=0.7pt}]
    table[x=delta_deg, y expr={0.5*(\thisrow{lr4_t1_deg2}+\thisrow{lr4_t2_deg2})}]
    {figures/data/multi_crb_vs_sep.dat};
\addlegendentry{MiLAC opt.\ $L_R{=}4{=}2K$}
\addplot[orange, dashed, thick, mark=square, mark size=1.6pt, mark options={solid}]
    table[x=delta_deg, y expr={0.5*(\thisrow{lr3_t1_deg2}+\thisrow{lr3_t2_deg2})}]
    {figures/data/multi_crb_vs_sep.dat};
\addlegendentry{MiLAC opt.\ $L_R{=}3$}
\addplot[red, only marks, mark=x, mark size=2.4pt, mark options={line width=1pt}]
    table[x=delta_deg, y=mle_lr4_deg2] {figures/data/multi_crb_vs_sep.dat};
\addlegendentry{MLE through opt.\ MiLAC ($100$ MC)}
\addplot[purple, only marks, mark=+, mark size=2.6pt, mark options={line width=1pt}]
    table[x=delta_deg, y=mle_dig_deg2] {figures/data/multi_crb_vs_sep.dat};
\addlegendentry{MLE through digital ($100$ MC)}
\end{semilogyaxis}
\end{tikzpicture}
\caption{Average per-target \ac{CRB} versus angular separation $\Delta$ at $K=2$, with $\theta_1=10^{\circ}$, $\theta_2=10^{\circ}+\Delta$, $N_R=16$.}
\label{fig:multi-sep}
\end{figure}

\subsection{Robustness to angle mismatch}
\label{subsec:sim-mismatch}

Corollary~\ref{cor:zerogap} is an \emph{oracle} result: it assumes that the \ac{MiLAC} is steered to the \emph{true} target angle $\theta$, so that $\row(\Gm)\supseteq\mathcal{S}_{1}^{\star}(\theta)$. In practice the front end can only be steered to an estimate $\hat\theta$ produced by a coarse-acquisition stage (for instance a target-oblivious \ac{DFT} scan), and the question becomes: how quickly does the \ac{CRB} degrade when the combiner is slightly mis-oriented? Figure~\ref{fig:mismatch} reports the \ac{CRB} ratio
\begin{equation}
\label{eq:mismatch-ratio}
r(\Delta\hat\theta)\,:=\,\frac{\crb_{\MiLAC}\bigl(\theta;\Gm^{\star}(\theta{+}\Delta\hat\theta)\bigr)}{\crb_{\dig}(\theta)}
\end{equation}
as a function of $|\Delta\hat\theta|\in[0^{\circ},6^{\circ}]$, for $N_R\in\{16,32,64,128\}$, at $\theta=20^{\circ}$ and the single-target zero-gap budget $L_R=2K=2$. The ratio is $1$ at $\Delta\hat\theta=0$ (matching Corollary~\ref{cor:zerogap}) and grows as $|\Delta\hat\theta|$ increases, with a visible aperture dependence: the larger $N_R$, the sharper $\mathcal{S}_{1}^{\star}(\theta)$ on the Grassmannian and the more sensitive the \ac{CRB} to mis-steering.

\begin{figure}[!tb]
\centering
\begin{tikzpicture}
\begin{semilogyaxis}[
    width=12cm, height=7.5cm,
    xlabel={$|\Delta\hat\theta|$ [deg]},
    ylabel={$r(\Delta\hat\theta)$},
    xmin=0, xmax=6,
    ymin=0.9, ymax=1e3,
    legend pos=north west,
    legend cell align={left},
    legend style={font=\tiny, fill=white, fill opacity=0.85, draw opacity=1, text opacity=1},
    clip=true,
]
\addplot[gray!60!black, very thick, dashed, forget plot, domain=0:6] {1};
\addplot[gray!60!black, very thick, dashed, forget plot, domain=0:6] {2};
\node[gray!40!black, font=\footnotesize, fill=white, fill opacity=0.9,
      text opacity=1, inner sep=1.2pt, anchor=east]
    at (axis cs:5.95,1)  {$r{=}1$};
\node[gray!40!black, font=\footnotesize, fill=white, fill opacity=0.9,
      text opacity=1, inner sep=1.2pt, anchor=east]
    at (axis cs:5.95,2)  {$r{=}2$};

\addplot[blue, very thick] table[x=delta_deg, y=r_N16] {figures/data/crb_vs_mismatch.dat};
\addlegendentry{$N_R{=}16$}
\addplot[orange, thick, dashed] table[x=delta_deg, y=r_N32] {figures/data/crb_vs_mismatch.dat};
\addlegendentry{$N_R{=}32$}
\addplot[green!60!black, thick, densely dashdotted] table[x=delta_deg, y=r_N64] {figures/data/crb_vs_mismatch.dat};
\addlegendentry{$N_R{=}64$}
\addplot[red, thick, dash dot dot] table[x=delta_deg, y=r_N128] {figures/data/crb_vs_mismatch.dat};
\addlegendentry{$N_R{=}128$}
\end{semilogyaxis}
\end{tikzpicture}
\caption{\ac{CRB} ratio $r(\Delta\hat\theta)$ versus steering mismatch $|\Delta\hat\theta|$ at $\theta=20^{\circ}$. The thick gray dashed lines mark the zero-gap reference $r=1$ and the $3$-dB penalty $r=2$.}
\label{fig:mismatch}
\end{figure}

To quote representative numbers, at a modest $N_R=32$ the penalty is only $4.5\%$ at $|\Delta\hat\theta|=0.5^{\circ}$, $19\%$ at $1^{\circ}$, and grows to $78\%$ at $2^{\circ}$. At $N_R=128$ the $3$-dB ($r=2$) penalty is reached near $|\Delta\hat\theta|\approx 0.55^{\circ}$. Beamwidth-scaled, the $3$-dB penalty consistently occurs near $|\Delta\hat\theta|\approx 0.4\theta_{\text{b}}$, where $\theta_{\text{b}}\propto 1/N_R$ is the natural beamwidth of the array. A coarse-acquisition stage of resolution $\ll\theta_{\text{b}}$ is therefore sufficient to realize the zero-gap promise of Corollary~\ref{cor:zerogap}, which a \ac{DFT} scan with $N_R$ beams naturally delivers.

\section{Conclusion}
\label{sec:conclusion}

This paper recasts \ac{MiLAC}-aided sensing as a Grassmannian geometry problem on $\mathrm{Gr}(L_R,N_R)$: the row space of the analog combiner is the sole information-bearing parameter, and the joint steering--derivative subspace $\mathcal{S}_{K}^{\star}(\vect{\theta})$ is the unique object it must span. Three insights summarize the message.

\emph{Two RF chains per target:} The zero-gap threshold $L_R\ge 2K$ is a hard floor on the chain count, separated from the identifiability threshold $\lceil 3K/2\rceil$ by an intermediate regime where the problem is identifiable but the \ac{MiLAC} \ac{CRB} sits strictly above $\crb_{\dig}$, so the active chain requirement is dictated by the target count rather than the array size.

\emph{Linear hardware cost:} The dimension-counting bound $2L_RN_R-L_R^{2}$ on any Stiefel-universal \ac{MiLAC} holds regardless of how the \ac{MiLAC}'s internal ports are wired together, i.e., independently of the circuit interconnection topology, and the stem-connected \ac{MiLAC} attains it asymptotically within an $N_R$-independent additive overhead. The resulting $\mathcal{O}(N_R/L_R)$ saving over the fully-connected \ac{MiLAC} is therefore generic to digital-\ac{CRB}-preserving architectures, rather than tied to one specific architecture, with a tunable-component count that stays linear in both the antenna and target counts.

\emph{Advantage over phase shifters:} A fixed constant-modulus phase-shifter combiner contains the target subspace $\mathcal{S}_{K}^{\star}(\vect{\theta})$, and hence matches $\crb_{\dig}$, only on a measure-zero set of configurations (Proposition~\ref{prop:ps-strict}), so it generally leaves a gap. The row span of such a combiner lives in a lower-dimensional subset of $\mathrm{Gr}(L_R,N_R)$, whereas a feasible \ac{MiLAC} hits $\mathcal{S}_{K}^{\star}(\vect{\theta})$ exactly with a perfectly conditioned orthonormal combiner.

A natural direction for future work is \ac{ISAC}: since the same Stiefel-universal \ac{MiLAC} attains both the communication and the sensing limits (Section~\ref{subsec:lower-bound}), a single shared front end is a promising candidate for serving both.

\appendices

\section{Proof of Theorem~\ref{thm:fim-projector}}
\label{app:fim-projector}

Stack the $T$ observations into $\zv=[\zv_1^{\TT},\ldots,\zv_T^{\TT}]^{\TT}\in\C^{TL_R}$, so that
\begin{equation}
\label{eq:stacked-obs}
\zv\,=\,(\Id_T\otimes\Gm)\,\vect{\mu}+\tilde{\nv},\qquad \tilde{\nv}\sim\mathcal{CN}\bigl(\mathbf{0},\sigma^{2}(\Id_T\otimes\Rm_{\Gm})\bigr),
\end{equation}
with $\vect{\mu}:=[\vect{\mu}_1^{\TT},\ldots,\vect{\mu}_T^{\TT}]^{\TT}\in\C^{TN_R}$. For a complex circular Gaussian observation with parameter-dependent mean $\vect{\mu}_{\mathrm{post}}=(\Id_T\otimes\Gm)\vect{\mu}$ and parameter-independent covariance $\Cm$, the Slepian--Bangs formula~\cite{Stoica2005} gives the \ac{FIM} entries
\begin{equation}
\label{eq:slepian-bangs}
[\Jm(\vect{\xi})]_{ij}\,=\,2\Re\Bigl\{\Bigl(\frac{\partial\vect{\mu}_{\mathrm{post}}}{\partial\xi_i}\Bigr)^{\HH}\Cm^{-1}\frac{\partial\vect{\mu}_{\mathrm{post}}}{\partial\xi_j}\Bigr\}.
\end{equation}
Collecting the partial derivatives into $\mathcal{J}_{\dig}$ and writing $\Cm=\sigma^{2}(\Id_T\otimes\Rm_{\Gm})$, this reads compactly as
\begin{equation}
\label{eq:SB-compact}
\Jm_{\MiLAC}(\vect{\xi};\Gm)\,=\,2\Re\Bigl\{\bigl((\Id_T\otimes\Gm)\mathcal{J}_{\dig}\bigr)^{\HH}\bigl(\Id_T\otimes\sigma^{-2}\Rm_{\Gm}^{-1}\bigr)\bigl((\Id_T\otimes\Gm)\mathcal{J}_{\dig}\bigr)\Bigr\}.
\end{equation}
The Kronecker identity
\begin{equation}
\label{eq:kron-id}
(\Am\otimes\Bm)^{\HH}(\Cm\otimes\Dm)(\Am\otimes\Bm)=(\Am^{\HH}\Cm\Am)\otimes(\Bm^{\HH}\Dm\Bm)
\end{equation}
simplifies the inner Kronecker factor of~\eqref{eq:SB-compact} to $\Id_T\otimes\sigma^{-2}\Gm^{\HH}\Rm_{\Gm}^{-1}\Gm$. Finally, the Moore--Penrose projector identity~\cite{HornJohnson2013}
\begin{equation}
\label{eq:mp-projector}
\Gm^{\HH}\Rm_{\Gm}^{-1}\Gm=\Pm_{\Gm}
\end{equation}
yields~\eqref{eq:SB-proj}. Setting $\Gm=\Id_{N_R}$, so that $\Pm_{\Gm}=\Id_{N_R}$, recovers~\eqref{eq:Jdig-def}.\hfill$\blacksquare$

\section{Proof of Proposition~\ref{prop:subspace}}
\label{app:subspace}

\emph{(i):} By Theorem~\ref{thm:fim-projector}, $\Jm_{\MiLAC}(\vect{\xi};\Gm)$ depends on $\Gm$ only through $\Pm_{\Gm}$. The projector $\Pm_{\Gm}$ is, in turn, the unique orthogonal projector onto $\row(\Gm)$, so it depends on $\Gm$ only through $\row(\Gm)$. Hence
\begin{equation}
\label{eq:rowspace-implies-fim}
\row(\Gm)=\row(\Gm')\ \Longrightarrow\ \Jm_{\MiLAC}(\vect{\xi};\Gm)=\Jm_{\MiLAC}(\vect{\xi};\Gm').
\end{equation}

\emph{(ii):} Given $\mathcal{S}\in\mathrm{Gr}(L_R,N_R)$, pick any full-row-rank $\Gm$ with $\row(\Gm)=\mathcal{S}$, and compute a reduced QR factorization $\Gm^{\HH}=\Qm\Rm$, with $\Qm^{\HH}\Qm=\Id_{L_R}$ and upper-triangular $\Rm\in\C^{L_R\times L_R}$. Setting $\Gm_{\mathrm{iso}}:=\Qm^{\HH}$ then gives
\begin{equation}
\label{eq:qr-iso}
\Gm_{\mathrm{iso}}\Gm_{\mathrm{iso}}^{\HH}=\Qm^{\HH}\Qm=\Id_{L_R},\qquad
\row(\Gm_{\mathrm{iso}})=\col(\Qm)=\col(\Gm^{\HH})=\row(\Gm)=\mathcal{S},
\end{equation}
so $\Gm_{\mathrm{iso}}$ is the desired row-isometry with row space $\mathcal{S}$.\hfill$\blacksquare$

\section{Proof of Theorem~\ref{thm:lowner}}
\label{app:lowner}

\emph{(i) Ordering:} Fix $\xv\in\R^{3K}$ and define $\yv(\xv):=\mathcal{J}_{\dig}\xv\in\C^{TN_R}$. From~\eqref{eq:SB-proj} and~\eqref{eq:Jdig-def},
\begin{equation}
\label{eq:quad-form}
\tfrac{\sigma^{2}}{2}\xv^{\TT}\Jm_{\MiLAC}\xv\,=\,\yv^{\HH}(\Id_T\otimes\Pm_{\Gm})\yv,\quad
\tfrac{\sigma^{2}}{2}\xv^{\TT}\Jm_{\dig}\xv\,=\,\yv^{\HH}\yv.
\end{equation}
Since $\Pm_{\Gm}\preceq\Id_{N_R}$ implies $\Id_T\otimes\Pm_{\Gm}\preceq\Id_{TN_R}$, the two identities in~\eqref{eq:quad-form} give
\begin{equation}
\label{eq:quad-ineq}
\xv^{\TT}\Jm_{\MiLAC}\xv\,=\,\tfrac{2}{\sigma^{2}}\yv^{\HH}(\Id_T\otimes\Pm_{\Gm})\yv\,\le\,\tfrac{2}{\sigma^{2}}\yv^{\HH}\yv\,=\,\xv^{\TT}\Jm_{\dig}\xv
\end{equation}
for every $\xv\in\R^{3K}$, which is~(i).

\emph{(ii) \ac{CRB} matrix ordering:} Partition the \ac{FIM} along the $\vect{\theta}/\vect{\beta}$ split into angle, amplitude, and cross blocks. The $\vect{\theta}$-marginal \ac{CRB}, that is, the angle bound once the unknown amplitudes have been optimally accounted for, equals the inverse of the Schur complement of the amplitude block,
\begin{equation}
\label{eq:theta-marginal-crb}
\bigl(\Jm_{\theta\theta}-\Jm_{\theta\beta}\Jm_{\beta\beta}^{-1}\Jm_{\beta\theta}\bigr)^{-1}.
\end{equation}
This Schur complement is monotone with respect to the L\"owner order~\cite{HornJohnson2013}: if one \ac{PSD} \ac{FIM} dominates another in the L\"owner sense, then so does its Schur complement. Combined with the fact that matrix inversion reverses the L\"owner order on \ac{PD} matrices, and with part~(i), this gives~(ii).

\emph{(iii) Equality:} The equality $\Jm_{\MiLAC}=\Jm_{\dig}$ holds if and only if $\xv^{\TT}(\Jm_{\dig}-\Jm_{\MiLAC})\xv=0$ for every $\xv\in\R^{3K}$, which by~\eqref{eq:quad-form} is in turn equivalent to
\begin{equation}
\label{eq:equality-projector}
(\Id_T\otimes\Pm_{\Gm})\,\yv(\xv)\,=\,\yv(\xv),\qquad\text{i.e.,}\qquad \yv(\xv)\in\C^{T}\otimes\mathcal{S}.
\end{equation}
Reading off the columns from~\eqref{eq:D-columns} and stacking, the score vector factorizes as
\begin{equation}
\label{eq:y-factor}
\begin{aligned}
\yv(\xv) &\,=\,\sv\otimes\vv(\xv),\\
\vv(\xv) &\,:=\,\sum_{k=1}^{K}\bigl[x_{\theta_k}\beta_k\dot{\av}(\theta_k)+(x_{\Re\{\beta_k\}}{+}jx_{\Im\{\beta_k\}})\av(\theta_k)\bigr],
\end{aligned}
\end{equation}
so that $\yv(\xv)\in\C^{T}\otimes\mathcal{S}$ if and only if $\vv(\xv)\in\mathcal{S}$. As $\xv$ ranges over $\R^{3K}$, the vectors $\vv(\xv)$ trace out the real-linear span
\begin{equation}
\label{eq:v-span}
\mathrm{span}_{\R}\bigl\{\av(\theta_k),\,j\av(\theta_k),\,\beta_k\dot{\av}(\theta_k)\bigr\}_{k=1}^{K}.
\end{equation}
Since $\mathcal{S}$ is a complex subspace, hence closed under multiplication by $-j$ and by $1/\beta_k\neq 0$, the containment $\vv(\xv)\in\mathcal{S}$ for all $\xv$ is equivalent to
\begin{equation}
\label{eq:steering-containment}
\bigl\{\av(\theta_k),\dot{\av}(\theta_k)\bigr\}_{k=1}^{K}\subseteq\mathcal{S},\qquad\text{i.e.,}\qquad \mathcal{S}_{K}^{\star}(\vect{\theta})\subseteq\mathcal{S}.
\end{equation}
\hfill$\blacksquare$

\section{Proof of Proposition~\ref{prop:ula-scaling}}
\label{app:ula-scaling}

We establish the aperture scaling in three steps: a Schur reduction of the angle information matrix, a Neumann-series control of the amplitude block, and the inversion of the resulting diagonally-dominant matrix. We first record the array bookkeeping the three steps rely on.

For the half-wavelength \ac{ULA}, define the exponential sums
\begin{equation}
\label{eq:exp-sums}
\sigma_{m}(\phi)\,:=\,\sum_{n=0}^{N_R-1}n^{m}e^{jn\phi}.
\end{equation}
At $\phi=0$ these reduce to
\begin{equation}
\label{eq:exp-sums-zero}
\sigma_0=N_R,\quad \sigma_1=\frac{N_R(N_R-1)}{2},\quad \sigma_2=\frac{N_R(N_R-1)(2N_R-1)}{6}.
\end{equation}
For any fixed $\phi\neq 0$, successive differentiation of the bounded geometric sum $\sigma_0(\phi)=(1-e^{jN_R\phi})/(1-e^{j\phi})$ yields $\sigma_m(\phi)=\mathcal{O}(N_R^{m})$. Writing $\phi_{kl}:=\pi(\sin\theta_l-\sin\theta_k)$, $c:=2\|\sv\|^{2}/\sigma^{2}$, and $\av_k:=\av(\theta_k)$, direct computation gives the \ac{ULA} inner products
\begin{equation}
\label{eq:ula-inner}
\av_k^{\HH}\av_l=\sigma_0(\phi_{kl}),\quad
\dot{\av}_k^{\HH}\av_l=-j\pi\cos\theta_k\,\sigma_1(\phi_{kl}),\quad
\dot{\av}_k^{\HH}\dot{\av}_l=\pi^{2}\cos\theta_k\cos\theta_l\,\sigma_2(\phi_{kl}).
\end{equation}

\emph{Step 1 (Schur reduction of the angle block):} Partition the $3K\times 3K$ digital \ac{FIM} into the $K\times K$ angle block $\Jm_{\theta\theta}$, the $2K\times 2K$ amplitude block $\Jm_{\beta\beta}$, and the cross-block $\Jm_{\theta\beta}$, and write the $\vect{\theta}$-marginal information matrix as the Schur complement
\begin{equation}
\label{eq:schur-ula}
\tilde{\Jm}_{\theta\theta}\,:=\,\Jm_{\theta\theta}-\Jm_{\theta\beta}\Jm_{\beta\beta}^{-1}\Jm_{\beta\theta}.
\end{equation}

\emph{Step 2 (amplitude block via a Neumann series):} Decompose the amplitude block as $\Jm_{\beta\beta}=\Jm_{\beta\beta}^{(\mathrm{d})}+\Jm_{\beta\beta}^{(\mathrm{o})}$ into its $K$ per-target $2\times 2$ diagonal blocks, each $\mathcal{O}(N_R)$, and the off-diagonal cross-target blocks, each $\mathcal{O}(1)$. Since $\|(\Jm_{\beta\beta}^{(\mathrm{d})})^{-1}\Jm_{\beta\beta}^{(\mathrm{o})}\|_\infty=\mathcal{O}(N_R^{-1})$, a Neumann series, the matrix analog of the geometric series $(1-x)^{-1}=1+x+x^{2}+\cdots$, gives
\begin{equation}
\label{eq:Jbb-inv}
\Jm_{\beta\beta}^{-1}\,=\,(\Jm_{\beta\beta}^{(\mathrm{d})})^{-1}+\mathcal{O}(N_R^{-2}).
\end{equation}
Substituting~\eqref{eq:Jbb-inv} into~\eqref{eq:schur-ula} and retaining the $k$-th target's own amplitude block yields the entries
\begin{equation}
\label{eq:Jtt-entries}
[\tilde{\Jm}_{\theta\theta}]_{kk}=c|\beta_k|^{2}\rho_{\dig}(\theta_k)+\mathcal{O}(N_R^{2}),\qquad
[\tilde{\Jm}_{\theta\theta}]_{kl}=\mathcal{O}(N_R^{2})\ (k\neq l),
\end{equation}
where $\rho_{\dig}(\theta_k)=\mathcal{O}(N_R^{3})$, and the off-diagonal estimate holds because every term in $[\tilde{\Jm}_{\theta\theta}]_{kl}$ involves a cross-target inner product.

\emph{Step 3 (inversion of a diagonally-dominant matrix):} Write $\tilde{\Jm}_{\theta\theta}=\Dm(\Id_K+\Em)$ with $\Dm=\diag([\tilde{\Jm}_{\theta\theta}]_{kk})=\mathcal{O}(N_R^3)$ and $\|\Em\|_{\infty}=\mathcal{O}(N_R^{-1})$. A Neumann-series expansion gives $(\Id_K+\Em)^{-1}=\Id_K+\mathcal{O}(N_R^{-1})$, hence
\begin{equation}
\label{eq:crb-diag-scaling}
[\crb_{\dig}(\vect\theta)]_{kk}=[\Dm^{-1}]_{kk}\bigl(1+\mathcal{O}(N_R^{-1})\bigr)=\mathcal{O}(N_R^{-3}).
\end{equation}
The \ac{MiLAC} claim follows from Corollary~\ref{cor:zerogap}.

For $K=1$ the cross-target terms are absent, and direct substitution of the closed-form values $\sigma_m(0)$ from~\eqref{eq:exp-sums-zero} yields
\begin{equation}
\label{eq:rho-dig-K1}
\rho_{\dig}(\theta)=\frac{\pi^{2}\cos^{2}\theta\,N_R(N_R^{2}-1)}{12},
\end{equation}
and hence~\eqref{eq:ula-K1}.\hfill$\blacksquare$

\section{Proof of Proposition~\ref{prop:lower-bound}}
\label{app:lower-bound}

\emph{Step 1 (Stiefel dimension):} Identify $\C^{L_R\times N_R}$ with $\R^{2L_R N_R}$ by separating real and imaginary parts, and recall that $\mathrm{St}(L_R,N_R)=\{\Gm\in\C^{L_R\times N_R}:\Gm\Gm^{\HH}=\Id_{L_R}\}$. The constraint $\Gm\Gm^{\HH}=\Id_{L_R}$ is an $L_R\times L_R$ Hermitian equation, so it amounts to $L_R$ real diagonal equations and $\binom{L_R}{2}$ complex strict-upper-triangle equations, i.e.,
\begin{equation}
\label{eq:stiefel-constraints}
L_R+L_R(L_R-1)=L_R^{2}
\end{equation}
independent real constraints. Since the differential (Jacobian) of the constraint map $\Gm\mapsto\Gm\Gm^{\HH}-\Id_{L_R}$ has full rank $L_R^{2}$ at every point of $\mathrm{St}(L_R,N_R)$~\cite{Edelman1998Stiefel}, the implicit function theorem makes $\mathrm{St}(L_R,N_R)$ a smooth submanifold (a smoothly curved surface) whose real dimension is
\begin{equation}
\label{eq:stiefel-dim}
d\,:=\,2L_RN_R-L_R^{2}.
\end{equation}

\emph{Step 2 (block projection):} Define $\Psi:\mathcal{T}\to\C^{L_R\times N_R}\cong\R^{2L_RN_R}$ by
\begin{equation}
\label{eq:psi-def}
\Psi(\mathbf{b})\,:=\,[\Phi(\mathbf{b})]_{21}.
\end{equation}
The admittance matrix of a reactive multiport is linear in its susceptances~\cite{Nerini2025AnalogI}, the scattering matrix is obtained from the admittance matrix via the map~\cite{Nerini2025Reduced}
\begin{equation}
\label{eq:cayley-app}
\Thetam=(Y_0\Id-\mathrm{j}\Bm)(Y_0\Id+\mathrm{j}\Bm)^{-1},
\end{equation}
and selecting the $(2,1)$ block is itself a linear operation. As the composition of the linear susceptance-to-admittance map, the rational map $\Bm\mapsto\Thetam$ in~\eqref{eq:cayley-app}, and this linear block selection, $\Psi$ is rational, hence smooth on $\mathcal{T}$.

\emph{Step 3 (dimension argument):} The idea is simple: a smooth map cannot raise dimension, so $c$ tunable susceptances cannot sweep out the $d$-dimensional manifold $\mathrm{St}(L_R,N_R)$ unless $c\ge d$. We make this precise with a volume argument. By Stiefel-universality, $\Psi(\mathcal{T})\supseteq\mathrm{St}(L_R,N_R)$. Suppose, for contradiction, that $c<d$. Near any $\Gm_{0}\in\mathrm{St}(L_R,N_R)$, the implicit function theorem supplies a smooth local coordinate chart $\phi:\mathcal{U}\to\R^{d}$, that is, a smooth and smoothly invertible map carrying a neighborhood of $\Gm_0$ onto an open subset of $\R^{d}$. Composing it with $\Psi$ gives a smooth map
\begin{equation}
\label{eq:composed-map}
\phi\circ\Psi:\ \Psi^{-1}(\mathcal{U})\subseteq\R^{c}\,\to\,\R^{d},\qquad c<d.
\end{equation}
Its image contains $\phi(\mathcal{U})$, a nonempty open subset of $\R^{d}$ of strictly positive $d$-dimensional volume (Lebesgue measure). However, a smooth map from a lower-dimensional Euclidean space ($c<d$) cannot cover any region of positive $d$-dimensional volume, as its image has measure zero, a standard consequence of Sard's theorem~\cite{Sard1942}. This contradicts the positive volume just exhibited. Hence
\begin{equation}
\label{eq:lower-bound-final}
c\,\ge\,d\,=\,2L_RN_R-L_R^{2},
\end{equation}
which establishes~\eqref{eq:lower-bound}.\hfill$\blacksquare$

\section{Proof of Theorem~\ref{thm:stem}}
\label{app:thm-stem}

The proof has two parts: first, a row-isometry attaining the digital \ac{CRB} exists, and second, the stem-connected topology realizes it.

\emph{Step 1 (an optimal combiner exists):} By Proposition~\ref{prop:subspace}(ii), for $L_R\ge 2K$ there exists a row-isometric $\Gm_{K}^{\star}\in\mathrm{St}(L_R,N_R)$ with $\row(\Gm_{K}^{\star})\supseteq\mathcal{S}_{K}^{\star}(\vect{\theta})$, and Lemma~\ref{lem:reachability} places it in $\mathcal{F}_{\MiLAC}$. Theorem~\ref{thm:lowner}(iii) then gives
\begin{equation}
\label{eq:stem-crb-equality}
\Jm_{\MiLAC}(\vect{\xi};\Gm_{K}^{\star})=\Jm_{\dig}(\vect{\xi}),
\end{equation}
and the corresponding \ac{CRB} equality.

\emph{Step 2 (realization by the stem-connected topology):} It remains to verify that the stem-connected topology realizes $\Gm_{K}^{\star}$. In~\cite{Nerini2025Reduced}, a receiver-side \ac{MiLAC} is called \emph{capacity-achieving} if, for every $\bar{\Um}\in\C^{N_R\times L_R}$ with orthonormal columns, there exists an admissible susceptance matrix such that
\begin{equation}
\label{eq:cap-achieving}
[\Thetam]_{21}=\bar{\Um}^{\HH}.
\end{equation}
As $\bar{\Um}$ ranges over all orthonormal-column matrices, $\bar{\Um}^{\HH}$ ranges over all of $\mathrm{St}(L_R,N_R)$, so ``capacity-achieving'' in~\cite{Nerini2025Reduced} is equivalent to Stiefel-universality in Definition~\ref{def:milac-class}. \cite{Nerini2025Reduced} proves that any center graph with center size $Q=2L_R-1$ whose central vertices include all $L_R$ \ac{RF}-chain ports is capacity-achieving, and the stem-connected topology is precisely such a center graph. Hence $\Gm_{K}^{\star}\in\mathcal{G}(\mathcal{C}_{\mathrm{stem}})$. Moreover, \cite{Nerini2025Reduced} produces a closed-form susceptance configuration $\mathbf{b}^{\star}$ with $[\Phi(\mathbf{b}^{\star})]_{21}=\Gm_{K}^{\star}$ in $\mathcal{O}(L_R^{2}N_R)$ arithmetic operations. The complexity comparison follows from~\eqref{eq:fully-conn-cost},~\eqref{eq:stem-cost}, and~\eqref{eq:lower-bound}.\hfill$\blacksquare$

\section{Proof of Proposition~\ref{prop:ps-strict}}
\label{app:ps-strict}

\emph{(i) L\"owner ordering:} The ordering and equality condition of Theorem~\ref{thm:lowner} depend only on the row-space projector $\Pm_{\row(\Fm)}$, so they apply verbatim to every full-row-rank $\Fm$ including any phase-shifter combiner $\Fm\in\mathcal{F}_{\mathrm{PS}}$, which is generically full row rank.

\emph{(ii) Genericity failure:} The map sending a phase-shifter combiner to its row space, $\row:\mathcal{F}_{\mathrm{PS}}\to\mathrm{Gr}(L_R,N_R)$, is real-analytic (locally given by a convergent power series). Left-multiplying any $\Fm\in\mathcal{F}_{\mathrm{PS}}$ by a diagonal phase matrix,
\begin{equation}
\label{eq:torus-action}
\Fm\,\mapsto\,\diag(e^{j\varphi_{1}},\ldots,e^{j\varphi_{L_R}})\,\Fm,
\end{equation}
leaves both its row space and the unit-modulus entry constraint unchanged. These row-wise phases form an $L_R$-parameter family, one phase $\varphi_{\ell}$ per row, so every reachable row space has at least an $L_R$-dimensional set of distinct preimages in $\mathcal{F}_{\mathrm{PS}}$. Consequently the reachable set $\mathcal{R}_{\mathrm{PS}}$ has real dimension at most
\begin{equation}
\label{eq:rps-dim}
\dim_{\R}\mathcal{R}_{\mathrm{PS}}\,\le\,\dim_{\R}\mathcal{F}_{\mathrm{PS}}-L_R\,=\,L_RN_R-L_R\,=\,L_R(N_R-1),
\end{equation}
whereas the target Grassmannian has real dimension
\begin{equation}
\label{eq:gr-dim}
\dim_{\R}\mathrm{Gr}(L_R,N_R)\,=\,2L_R(N_R-L_R).
\end{equation}
Subtracting, whenever $N_R\ge 2L_R$,
\begin{equation}
\label{eq:dim-gap}
\dim_{\R}\mathrm{Gr}(L_R,N_R)-\dim_{\R}\mathcal{R}_{\mathrm{PS}}\,\ge\,L_R(N_R-2L_R+1)\,=\,\Delta(L_R,N_R)\,>\,0.
\end{equation}
Since $\mathcal{R}_{\mathrm{PS}}$ is the image of a real-analytic map on a compact domain, it is contained in a finite union of smooth surfaces of dimension at most $L_R(N_R-1)<\dim_{\R}\mathrm{Gr}(L_R,N_R)$, and therefore has measure zero, occupying zero volume in $\mathrm{Gr}(L_R,N_R)$.

\emph{(iii):} Fix $\Fm\in\mathcal{F}_{\mathrm{PS}}$, so that $\row(\Fm)$ is a proper subspace of $\C^{N_R}$. The steering map $\theta\mapsto\av(\theta)$ is real-analytic, and \ac{ULA} steering vectors at distinct angles are linearly independent (Section~\ref{subsec:lowner}), so this curve is not contained in $\row(\Fm)$. Therefore the set $\{\theta:\av(\theta)\in\row(\Fm)\}$ is the zero set of a nontrivial real-analytic function, hence has measure zero, and a fortiori the set $\{\vect{\theta}:\mathcal{S}_{K}^{\star}(\vect{\theta})\subseteq\row(\Fm)\}$ has measure zero. By part~(i), it follows that
\begin{equation}
\label{eq:ps-strict-ineq}
\crb_{\mathrm{PS}}(\vect{\theta};\Fm)\,\succ\,\crb_{\dig}(\vect{\theta})
\end{equation}
for every $\vect{\theta}$ outside this measure-zero set. The \ac{MiLAC} claim follows from Lemma~\ref{lem:reachability} and Theorem~\ref{thm:lowner}(iii).\hfill$\blacksquare$

\bibliographystyle{IEEEtran}
\bibliography{references}

\end{document}